\newcommand{\edth} {\mbox{\symbol{'360}}}
\newcommand{\ua}{\underline a \,}
\newcommand{\uA}{\underline A \,}
\newcommand{\uB}{\underline B \,}
\newcommand{\bi}{\bf i}
\newcommand{\bj}{\bf j}
\newcommand{\bk}{\bf k}
\newcommand{\bA}{\bf A}
\newcommand{\bB}{\bf B}
\newcommand{\bC}{\bf C}
\newcommand{\bD}{\bf D}
\newcommand{\supp}{{\rm supp\,}}
\newtheorem{theorem}{Theorem}[section]
\newtheorem{proposition}{Proposition}[section]
\newtheorem{lemma}{Lemma}[section]
\newtheorem{corollary}[theorem]{Corollary}
\numberwithin{equation}{section}
\begin{document}
\bibliographystyle{unsrt}

\title{A positive Bondi--type mass in asymptotically de Sitter spacetimes}

\author{L\'aszl\'o B. Szabados \\
  Wigner Research Centre for Physics, \\
  H--1525 Budapest 114, P. O. Box 49, Hungary
\and
Paul Tod\\
  Mathematical Institute, Oxford University,\\
  Oxford OX2 6GG}

\maketitle

\begin{abstract}
The general structure of the conformal boundary $\mathscr{I}^+$ of 
asymptotically de Sitter spacetimes is investigated. First we show that 
Penrose's quasi-local mass, associated with a cut ${\cal S}$ of the 
conformal boundary, can be zero even in the presence of outgoing 
gravitational radiation. On the other hand, following a Witten--type 
spinorial proof, we show that an analogous expression based on the 
Nester--Witten form is finite only if the Witten spinor field solves the 
2-surface twistor equation on ${\cal S}$, and it yields a positive 
functional on the 2-surface twistor space on ${\cal S}$, provided the 
matter fields satisfy the dominant energy condition. Moreover, this 
functional is vanishing if and only if the domain of dependence of the 
spacelike hypersurface which intersects $\mathscr{I}^+$ in the cut 
${\cal S}$ is locally isometric to the de Sitter spacetime. For 
non-contorted cuts this functional yields an invariant analogous to the 
Bondi mass. 
\end{abstract}


\section{Introduction}
\label{sec-1}

The simplest explanation of the root of the deviation of the observed red 
shift vs. luminosity diagram of distant type Ia supernovae from the expected 
one is probably the strict positivity of the cosmological constant 
\cite{Riessetal,Perlmutteretal,Pe10}. Thus the history of our observed 
universe should be modeled by asymptotically de Sitter spacetimes, and hence 
a systematic study of these spacetimes, e.g. their asymptotic properties, is 
physically justified. The conformally cyclic cosmological model (or shortly 
CCC model) of Penrose \cite{Pe10} is based on the positivity of the 
cosmological constant. In this model the crossover hypersurface is just the 
(spacelike) future timelike infinity of the previous aeon and the big bang 
singularity of the present aeon. 

In the study of asymptotic properties of spacetime one of the most 
important questions is that of the `conserved' quantities, e.g. the 
energy-momentum. In asymptotically flat or asymptotically anti-de Sitter 
spacetimes the `total' energy-momentum is thought of as associated with a 
\emph{localized} gravitating source at an arbitrary but \emph{finite} 
instant, represented by a spacelike hypersurface in the spacetime (or, 
rather, by a `cut' of the null or timelike conformal boundary, respectively, 
at \emph{finite} retarded time). This may be strictly conserved and total 
(like the ADM energy-momentum \cite{ADM}), or may change in time but still be  
total (like the total mass associated with \emph{closed} Cauchy surfaces in 
closed universes \cite{Sz12,Sz13}), or may change in time and be associated 
only with an (infinite) portion of spacetime (like the Bondi--Sachs 
\cite{Bondietal,Sachs,ENP} energy-momentum at the future null infinity, or 
the Abbott--Deser \cite{AD} energy-momentum at the conformal boundary of 
asymptotically anti de Sitter spacetime, which are changing from cut to cut 
in the presence of outgoing radiation or of in- or outgoing radiation, 
respectively). However, the analogous `total' energy-momentum in 
asymptotically de Sitter spacetimes would be associated with the 
\emph{spacelike} future conformal boundary (or a part of this boundary), and 
hence it would be interpreted as being associated with the \emph{asymptotic 
final state} of the universe (or a part of it). Though it would not have any 
dynamical content (in contrast to the total mass of closed universes, the 
Bondi--Sachs or the Abbott--Deser energy-momenta), it could provide a 
good control on the spacetime geomery near the conformal boundary. 

The idea of the energy-momentum associated with a cut of the future 
conformal boundary has already been raised by Penrose \cite{Pe11}, where two 
potentially viable strategies were also discussed. One of these approaches is 
based on the charge integral of the curvature and the use of twistorial 
methods, and the second is analogous to the ideas behind the Bondi--Sachs 
energy-momentum. (The asymptotic properties of the spacetime geometry, 
including the `conserved' quantities, are already well known when the 
spacetime is asymptotically flat \cite{PR2}, and also when the cosmological 
constant is \emph{negative} \cite{PR2,AM,Ke}. For an analogous recent 
investigation in the presence of a \emph{positive} cosmological constant, see 
\cite{ABK}. For a different concept of `total' mass in the presence of a 
positive $\Lambda$, see \cite{AD}, which mass can, however, be negative 
\cite{NaShiMa,KaTr,Luoetal,LiZh}. The second strategy of Penrose was also 
discussed in a nutshell by Frauendiener \cite{Joerg}. The conserved 
quantities of the linearized theory on de Sitter background are discussed 
in \cite{Tod15}.) 

In the present paper we investigate the general structure of the conformal 
boundary, and the notion of `total' energy-momentum associated with a cut of 
the future conformal boundary and based both on the charge integral of the 
curvature and of the Nester--Witten 2-form, too. We refine (and at certain 
points improve) the previous analyses of the general structure of the 
conformal boundary itself, construct a coordinate system and complex null 
tetrad (that are analogous to the Bondi and Newman--Penrose ones, 
respectively, near the null infinity of asymptotically flat spacetimes), and 
determine the asymptotic geometry of the smooth spacelike hypersurfaces that 
extend to the (spacelike) conformal boundary. These provide the technical 
background for the investigation of the energy-momentum. We show that the 
construction for the energy-momentum based on the charge integral of the 
curvature and the use of the 2-surface twistors does \emph{not} have the 
rigidity property: It may be vanishing even for non-trivial spacetime 
configurations, e.g. in vacuum spacetimes with non-vanishing rescaled 
conformal electric curvature (representing for example pure outgoing 
gravitational radiation). Hence it does not seem to provide an appropriate 
measure of the `strength' of the gravitational `field'. 

On the other hand, we show that the expression based on the Nester--Witten 
2-form has the positivity and rigidity properties: Following a Witten-type 
argument on a spacelike hypersurface $\Sigma$ that intersects the future 
conformal boundary in a spacelike cut ${\cal S}$, the integral of the 
Nester--Witten 2-form on ${\cal S}$ defines a non-negative functional on the 
space of the boundary values for the Dirac spinor solution of the Witten 
equation on $\Sigma$, provided the matter fields satisfy the dominant energy 
condition; and this functional is vanishing if and only if the domain of 
dependence of $\Sigma$ is locally isometric to the de Sitter spacetime. 
Interestingly enough, we may \emph{not} impose any boundary condition for 
the Witten spinors at the cut by hand. All the boundary conditions are 
determined by the Witten equation itself and the requirement of the finiteness 
of the resulting functional; and the boundary condition is that the spinor 
field on the cut \emph{must solve the 2-surface twistor equation} of 
Penrose. On the other hand, the interpretation of the resulting (positive 
definite) functional is not trivial: What we could consider to be the 
energy-momentum 4-vector is only a part of a bigger multiplet of quantities 
which are `mixed' by the symmetry group of the 2-surface twistor space. 
Energy-momentum, or at least mass, could be defined in an invariant way only 
in the presence of extra structures on the 2-surface twistor space. In 
particular, if the cut is non-contorted (e.g. when the whole conformal 
boundary is intrinsically locally conformally flat), then a positive measure 
of the strength of the gravitational `field', which could be interpreted as 
mass, is found whose vanishing is equivalent to the local de Sitter nature of 
the domain of dependence of the hypersurface $\Sigma$ above. 

In the proof of the existence of solutions of the Witten equation we 
use functional analytic techniques. It turns out that, on asymptotically 
hyperboloidal hypersurfaces, the solutions of the Witten equations and their 
derivatives fall off with the \emph{same} rate. Thus, the weighted Sobolev 
spaces do \emph{not} seem to be the appropriate function spaces, and the 
classical Sobolev spaces with an overall weight function in front of the 
volume element should be used. The necessary technical details are also 
developed here. 

The structure of the paper follows the logic above: In section \ref{sec-2} 
we discuss the structure of the conformal boundary, section \ref{sec-3} is 
devoted to the discussion of the properties of an expression based on the 
integral of the curvature on the cut. Then, in section \ref{sec-4}, we 
consider the expression based on the Nester--Witten 2-form, prove the 
positivity and rigidity properties, and introduce the mass on non-contorted 
cuts. The Appendix is devoted to the introduction of the functional 
analytic tools and statements needed in the proof of the existence and 
uniqueness of solutions of the Witten equation on asymptotically hyperboloidal 
hypersurfaces. 

We use the abstract index formalism and the sign conventions of \cite{PR2}. 
In particular, the signature of the spacetime metric is $(+,-,-,-)$, and the 
Riemann tensor is defined according to $-R^a{}_{bcd}X^bV^cW^d:=V^c\nabla_c
(W^d\nabla_dX^a)-W^c\nabla_c(V^d\nabla_dX^a)-[V,W]^c\nabla_cX^a$ for any vector 
fields $X^a$, $V^a$ and $W^a$. Thus, Einstein's equations take the form $G_{ab}
:=R_{ab}-\frac{1}{2}Rg_{ab}=-\varkappa T_{ab}-\Lambda g_{ab}$, where $\varkappa:=
8\pi G$ and $G$ is Newton's gravitational constant, and $\Lambda>0$. 


\section{The general framework}
\label{sec-2}

\subsection{The conformal boundary of spacetimes with positive
$\Lambda$}
\label{sub-2.1}

Let the physical spacetime be denoted by $(\hat M,\hat g_{ab})$, which is 
assumed to admit a nontrivial smooth conformal completion $(M,g_{ab},\Omega)$; 
i.e. there exist a manifold $M$ with nonempty boundary $\partial M$, a 
Lorentzian metric $g_{ab}$ on $M$ and a function $\Omega:M\rightarrow[0,
\infty)$, all of them smooth, such that 

\begin{description}
\item(i) $M-\partial M$ is diffeomorphic to (and hence identified with) 
    $\hat M$;
\item(ii) $g_{ab}=\Omega^2\hat g_{ab}$ on $\hat M$; 
\item(iii) the boundary is just $\partial M=\{\Omega=0\}$, and $\nabla_a
    \Omega$ is nowhere vanishing on $\partial M$; 
\item(iv) $\hat g_{ab}$ solves Einstein's equations $\hat R_{ab}-\frac{1}{2}
    \hat R\hat g_{ab}=-\varkappa\hat T_{ab}-\Lambda\hat g_{ab}$ with 
    \emph{positive} cosmological constant $\Lambda$ and energy-momentum 
    tensor $\hat T_{ab}$ satisfying the dominant energy condition; 
\item(v) the physical energy-momentum tensor satisfies the fall-off condition 
    $\hat T^a{}_b=\Omega^3T^a{}_b$, where $T^a{}_b$ is some smooth tensor field 
    on $M$.\footnote{It might be worth noting that this fall-off condition 
    is stronger than is needed to retain the smoothness of the conformal 
    boundary (cf. \cite{Fr13}).} 
\end{description}
It is known \cite{PR2} that $(\nabla_e\Omega)(\nabla^e\Omega)\approx\Lambda
/3$, where $\approx$ means `equal at the points of $\partial M$'. Thus the 
positivity of $\Lambda$ is equivalent to the \emph{spacelike} nature of 
$\partial M$, and we denote by $\mathscr{I}^+$ the part of $\partial M$ 
whose points are \emph{future} endpoints of inextensible non-spacelike 
curves in $\hat M$. (In the CCC model such a boundary hypersurface represents 
the crossover hypersurface between two successive aeons.) The hat on 
quantities and objects is referring to the physical spacetime, while the 
unhatted ones are in the unphysical $(M,g_{ab})$. (See e.g. \cite{ABK}, or 
for the analogous definition in the $\Lambda<0$ case, see \cite{AM}.) 

A number of facts follow from this definition. In particular, 

\begin{description}
\item(1) $\hat C^a{}_{bcd}=C^a{}_{bcd}\approx0$, i.e. the Weyl tensors 
    are vanishing on $\mathscr{I}^+$ \cite{PR2}; 
\item(2) there is a conformal factor $\Omega$ such that the extrinsic 
    curvature $\chi_{ab}$ and the acceleration $a_e$ of $\mathscr{I}^+$ in 
    $(M,g_{ab})$ are vanishing (`conformal Bondi gauge'). The remaining 
    conformal gauge freedom is $g_{ab}\mapsto\omega^2g_{ab}$, where $\omega=
    \omega_0+\Omega^2\Theta$ is a strictly positive function on $M$ in which 
    $\Theta$ is arbitrary on $M$, $\omega_0$ is arbitrary but positive on 
    $\mathscr{I}^+$ and is constant along its normals (at least in a 
    neighbourhood of $\mathscr{I}^+$). Thus we can still rescale the 
    \emph{intrinsic} induced metric of $\mathscr{I}^+$ freely \cite{PR2}; 
\item(3) if $N^a$ denotes the future pointing, $g_{ab}$--unit normal to the 
    $\Omega={\rm const}$ hypersurfaces (denoted henceforth by ${\cal H}
    _\Omega$), and $P^a_b:=\delta^a_b-N^aN_b$, the $g_{ab}$--orthogonal 
    projection to ${\cal H}_\Omega$, then the part $P^a_b\hat T^b{}_cN^c$ of 
    the physical energy-momentum tensor tends to zero in the $\Omega
    \rightarrow0$ limit \emph{faster} than $\Omega^3$, i.e. with $\Omega^4$ 
    \cite{AM}. 
\end{description}
However, the analysis behind these results can be refined further, yielding 
a slightly more detailed characterization of the structure of the conformal 
boundary and minor corrections of previous results. In particular, the same 
general formulae from which the faster fall-off $P^a_b\hat T^b{}_cN^c=
O(\Omega^4)$ was derived in \cite{AM} imply that $P^a_b\hat T^b{}_cP^c_d=
O(\Omega^4)$ also holds; or that the divergence equation (10) of \cite{AM} 
for the electric part of the rescaled Weyl tensor should be corrected. Thus, 
in the next subsection, a more detailed discussion of the structure of the 
conformal boundary will be given.

\subsection{The structure of the conformal boundary}
\label{sub-2.2}

\subsubsection{Consequences of Einstein's equations}
\label{sub-2.2.1}

By Einstein's equations, the assumption on the asymptotic form of the 
physical energy-momentum tensor and the conformal rescaling formulae, the 
Einstein tensor of the unphysical spacetime is 

\begin{eqnarray}
R_{ab}\!\!\!\!&-\!\!\!\!&\frac{1}{2}Rg_{ab}=\hat R_{ab}-\frac{1}{2}\hat R
  \hat g_{ab}+2\Omega^{-1}\bigl(\nabla_a\nabla_b\Omega-g_{ab}\nabla_c
  \nabla^c\Omega\bigr)+3\Omega^{-2}g_{ab}\bigl(\nabla_c\Omega\bigr)\bigl(
  \nabla^c\Omega\bigr) \nonumber \\
=\!\!\!\!&-\!\!\!\!&\varkappa\Omega T_{ab}+\Omega^{-2}g_{ab}\Bigl(3\bigl(
  \nabla_c\Omega\bigr)\bigl(\nabla^c\Omega\bigr)-\Lambda\Bigr)+
  2\Omega^{-1}\bigl(\nabla_a\nabla_b\Omega-g_{ab}\nabla_c\nabla^c\Omega\bigr).
  \label{eq:2.2.1a}
\end{eqnarray}
Multiplying by $\Omega^2$ and evaluating at $\Omega=0$, and multiplying 
by $\Omega$, taking the trace and evaluating at $\Omega=0$, respectively, 
we obtain that $3(\nabla_c\Omega)(\nabla^c\Omega)\approx\Lambda$ and $4\nabla
_a\nabla_b\Omega\approx g_{ab}(\nabla_c\nabla^c\Omega)$ (see also \cite{PR2}). 
Thus, smoothness of the extension implies that there exist a smooth function 
$\phi$ and a tensor field $\Psi_{ab}$ on $M$ such that 

\begin{equation}
\bigl(\nabla_c\Omega\bigr)\bigl(\nabla^c\Omega\bigr)=\frac{1}{3}\Lambda+
 \Omega\phi, \hskip 20pt
\nabla_a\nabla_b\Omega=\frac{1}{4}g_{ab}\bigl(\nabla_c\nabla^c\Omega\bigr)+
 \Omega\Psi_{ab}. \label{eq:2.2.2a}
\end{equation}
Taking the trace of the second we find that $\Psi_{ab}g^{ab}=0$. It is known 
(see e.g. \cite{PR2}) that the conformal factor can always be chosen such 
that $\nabla_c\nabla^c\Omega\approx0$ (the `Bondi conformal gauge'), i.e. 
$\nabla_c\nabla^c\Omega=\Omega\Psi$ for some smooth $\Psi$ on $M$. In the 
rest of the present paper we assume that $\Omega$ is such a conformal factor. 
Hence 

\begin{equation}
\nabla_a\nabla_b\Omega=\Omega\bigl(\frac{1}{4}g_{ab}\Psi+\Psi_{ab}\bigr).
\label{eq:2.2.2b}
\end{equation}
Substituting the first of (\ref{eq:2.2.2a}) and (\ref{eq:2.2.2b}) into 
(\ref{eq:2.2.1a}) we obtain 

\begin{equation*}
R_{ab}-\frac{1}{2}Rg_{ab}=-\varkappa\Omega T_{ab}+2\Psi_{ab}-\frac{3}{2}\Psi g_{ab}
+3\Omega^{-1}\phi g_{ab}.
\end{equation*}
Multiplying by $\Omega$ and evaluating at $\Omega=0$, we obtain that $\phi
\approx0$, i.e. 

\begin{equation}
\bigl(\nabla_a\Omega\bigr)\bigl(\nabla^a\Omega\bigr)=\frac{1}{3}\Lambda+
\Omega^2\Phi \label{eq:2.2.2c}
\end{equation}
for some smooth function $\Phi$ on $M$. 

However, $\Psi_{ab}$, $\Psi$ and $\Phi$ are not independent. To see this, let 
us define the lapse $N$ of the foliation by the level sets ${\cal H}_\Omega$ 
of $\Omega$ by $1=:-NN^a\nabla_a\Omega$, by means of which the future pointing 
unit normal of the leaves is $N^a=-Ng^{ab}\nabla_b\Omega$ and $N=1/\vert\nabla
_c\Omega\vert\approx\sqrt{3/\Lambda}$ (by (\ref{eq:2.2.2c})). (Since $\Omega$ 
is \emph{decreasing} in the future direction and we want the lapse to be 
\emph{positive}, it should be defined with the minus sign.) For later use, we 
need the acceleration of the leaves ${\cal H}_\Omega$. By the definition of 
the lapse and (\ref{eq:2.2.2c}) this is 

\begin{equation}
a_e:=N^a\nabla_aN_e=-D_e(\ln N)=\frac{1}{2}N^2D_e\frac{1}{N^2}=\frac{1}{2}
\Omega^2N^2D_e\Phi, \label{eq:2.2.2d}
\end{equation}
where $D_e$ denotes the intrinsic Levi-Civita derivative operator on ${\cal H}
_\Omega$ determined by the induced (negative definite) metric $h_{ab}:=P^c_a
P^d_bg_{cd}$. Also, we can calculate the extrinsic curvature and its trace: 

\begin{eqnarray}
&{}&\chi_{ab}:=P^c_aP^d_b\nabla_cN_d=-\Omega N\bigl(P^c_aP^d_b\Psi_{cd}+
  \frac{1}{4}\Psi h_{ab}\bigr), \label{eq:2.2.3a} \\
&{}&\chi:=\chi_{ab}h^{ab}=-\Omega N\bigl(h^{ab}\Psi_{ab}+\frac{3}{4}\Psi\bigr).
  \label{eq:2.2.3b}
\end{eqnarray}
Note that $0=\Psi_{ab}g^{ab}=\Psi_{ab}h^{ab}+\Psi_{ab}N^aN^b$. In particular, 
(\ref{eq:2.2.3a}) shows that $\mathscr{I}^+$ is extrinsically flat. On the 
other hand, by (\ref{eq:2.2.2b}) and (\ref{eq:2.2.2c}), we can calculate the 
mean curvature $\chi$ in an alternative way: 

\begin{equation*}
\Omega\Psi=\nabla_a\nabla^a\Omega=\nabla_a\bigl(-\frac{1}{N}N^a\bigr)=
-\frac{1}{N}\chi-\frac{1}{2}NN^a\nabla_a\frac{1}{N^2}=-\frac{1}{N}\chi+\Omega
\Phi-\frac{1}{2}\Omega^2NN^a\nabla_a\Phi,
\end{equation*}
from which 

\begin{equation*}
\chi=-\Omega N\bigl(\Psi-\Phi\bigr)-\frac{1}{2}\Omega^2N^2N^a\nabla_a\Phi
\end{equation*}
follows. Comparing this with (\ref{eq:2.2.3b}) we see that 

\begin{equation}
\frac{1}{4}\Psi=h^{ab}\Psi_{ab}+\Phi-\frac{1}{2}\Omega NN^a\nabla_a\Phi\approx
h^{ab}\Psi_{ab}+\Phi. \label{eq:2.2.4}
\end{equation}
Thus, $\Psi$ is determined by $\Phi$ and the three-dimensional trace of 
$\Psi_{ab}$, and, in particular, on $\mathscr{I}^+$, it is an \emph{algebraic} 
expression of them. 

Substituting (\ref{eq:2.2.2b}) and (\ref{eq:2.2.2c}) into (\ref{eq:2.2.1a}) 
and taking into account (\ref{eq:2.2.4}) we obtain 

\begin{equation}
R_{ab}-\frac{1}{2}Rg_{ab}=-\Omega\bigl(\varkappa T_{ab}+3g_{ab}NN^c\nabla_c\Phi
\bigr)+2\Psi_{ab}-6g_{ab}h^{cd}\Psi_{cd}-3\Phi g_{ab}. \label{eq:2.2.5}
\end{equation}
On the other hand, it is known (e.g. from the initial value formulation of 
general relativity), that the various 3+1 pieces of the (unphysical) Einstein 
tensor are 

\begin{eqnarray}
N^cN^d\bigl(R_{cd}\!\!\!\!&-\!\!\!\!&\frac{1}{2}Rg_{cd}\bigr)=-\frac{1}{2}
  \bigl({\cal R}+\chi^2-\chi_{cd}\chi^{cd}\bigr), \label{eq:2.2.6a} \\
P^c_aN^d\bigl(R_{cd}\!\!\!\!&-\!\!\!\!&\frac{1}{2}Rg_{cd}\bigr)=-D_b\bigl(
  \chi^b{}_a-\chi\delta^b_a\bigr), \label{eq:2.2.6b} \\
P^c_aP^d_b\bigl(R_{cd}\!\!\!\!&-\!\!\!\!&\frac{1}{2}Rg_{cd}\bigr)={\cal R}_{ab}
  -\frac{1}{2}h_{ab}\bigl({\cal R}+\chi^2-\chi_{cd}\chi^{cd}\bigr)+\bigl(
  {\pounds}_N\chi_{cd}\bigr)P^c_aP^d_b+\chi\chi_{ab} \nonumber \\
\!\!\!\!&-\!\!\!\!&2\chi_{ac}\chi^c{}_b+\frac{1}{N}D_aD_bN-h_{ab}\bigl(h^{cd}
  {\pounds}_N\chi_{cd}-\chi_{cd}\chi^{cd}+\frac{1}{N}D_cD^cN\bigr).
  \label{eq:2.2.6c}
\end{eqnarray}
Here ${\cal R}_{ab}$ is the Ricci tensor and ${\cal R}$ the intrinsic scalar 
curvature of $D_a$, and ${\pounds}_N$ denotes Lie derivative along the unit 
normal vector field $N^a$. Comparing the various 3+1 parts of (\ref{eq:2.2.5}) 
with (\ref{eq:2.2.6a}), (\ref{eq:2.2.6b}) and (\ref{eq:2.2.6c}), respectively, 
and using the definitions, equations (\ref{eq:2.2.3a})-(\ref{eq:2.2.4}) and 
the fact $\Psi_{ab}N^aN^b=-h^{ab}\Psi_{ab}$, we find 

\begin{eqnarray}
&{}&{\cal R}\approx 16h^{ab}\Psi_{ab}+6\Phi, \label{eq:2.2.7a} \\
&{}&P^c_aN^d\Psi_{cd}\approx 0, \label{eq:2.2.7b} \\
&{}&{\cal R}_{ab}\approx P^c_aP^d_b\Psi_{cd}+5h_{ab}h^{cd}\Psi_{cd}+2\Phi h_{ab}.
 \label{eq:2.2.7c}
\end{eqnarray}
Comparing (\ref{eq:2.2.7c}) with (\ref{eq:2.2.3a}), by (\ref{eq:2.2.4}) we 
obtain that 

\begin{equation}
\sqrt{\frac{3}{\Lambda}}\lim_{\Omega\rightarrow0}\bigl(\Omega^{-1}\chi_{ab}\bigr)
\approx-\bigl({\cal R}_{ab}-\frac{1}{4}{\cal R}h_{ab}\bigr)-\frac{1}{2}\Phi
h_{ab}. \label{eq:2.2.8}
\end{equation}
Note that the first term on the right is just the Schouten tensor of the 
(three dimensional) Riemannian geometry $(\mathscr{I}^+,h_{ab})$. 

\subsubsection{Consequences of the Bianchi identity}
\label{sub-2.2.2}

By Einstein's equations and the assumption on the asymptotic form of the 
physical energy-momentum tensor, the Schouten tensor of the (four 
dimensional) physical spacetime is 

\begin{equation}
\hat S_{ab}:=-\bigl(\hat R_{ab}-\frac{1}{6}\hat R\hat g_{ab}\bigr)=\varkappa
\Omega\bigl(T_{ab}-\frac{1}{3}Tg_{ab}\bigr)-\frac{1}{3}\Omega^{-2}\Lambda
g_{ab}. \label{eq:2.2.9}
\end{equation}
Hence, by the definition of the Weyl tensor and the conformal rescaling 
formulae, the contracted Bianchi identity for the physical curvature tensor 
yields 

\begin{eqnarray}
0\!\!\!\!&=\!\!\!\!&\hat\nabla_a\hat C^a{}_{bcd}+\frac{1}{2}\bigl(\hat\nabla_c
  \hat S_{db}-\hat\nabla_d\hat S_{cb}\bigr) \label{eq:2.2.10} \\
\!\!\!\!&=\!\!\!\!&\nabla_aC^a{}_{bcd}-\Omega^{-1}\bigl(\nabla_a\Omega\bigr)
  C^a{}_{bcd}+\frac{1}{2}\varkappa\Omega\bigl(\nabla_cT_{db}-\nabla_dT_{cb}\bigr)-
  \frac{1}{6}\varkappa\Omega\bigl(g_{bd}\nabla_cT-g_{bc}\nabla_dT\bigr)
  \nonumber \\
\!\!\!\!&-\!\!\!\!&\varkappa\bigl(T_{bc}\nabla_d\Omega-T_{bd}\nabla_c\Omega\bigr)-
  \frac{1}{2}\varkappa\bigl(g_{bc}T_d{}^e-g_{bd}T_c{}^e\bigr)\nabla_e\Omega+
  \frac{1}{2}\varkappa T\bigl(g_{bc}\nabla_d\Omega-g_{bd}\nabla_c\Omega\bigr).
  \nonumber
\end{eqnarray}
Multiplying this equation by $\Omega$ and evaluating it at $\Omega=0$, we 
obtain $(\nabla_a\Omega)C^a{}_{bcd}\approx0$, which is known to imply $C_{abcd}
\approx0$ \cite{PR2}. Thus we can write $C_{abcd}=\Omega K_{abcd}$ for some 
smooth tensor field $K_{abcd}$ on $M$. Substituting this form of the 
non-physical Weyl tensor back into (\ref{eq:2.2.10}), we obtain 

\begin{eqnarray}
2\Omega\nabla^aK_{abcd}\!\!\!\!&=\!\!\!\!&\varkappa\Omega\bigl(\nabla_dT_{cb}-
  \nabla_cT_{db}\bigr)+\frac{1}{3}\varkappa\Omega\bigl(g_{bd}\nabla_cT-g_{bc}
  \nabla_dT\bigr) \label{eq:2.2.11} \\
\!\!\!\!&+\!\!\!\!&2\varkappa\bigl(T_{bc}\nabla_d\Omega-T_{bd}\nabla_c\Omega
  \bigr)+\varkappa\bigl(g_{bc}T_d{}^e-g_{bd}T_c{}^e\bigr)\nabla_e\Omega+\varkappa
  T\bigl(g_{bd}\nabla_c\Omega-g_{bc}\nabla_d\Omega\bigr). \nonumber
\end{eqnarray}
Evaluating this at $\Omega=0$ and then contracting with $N^b$ and $P^b_aN^c$, 
respectively, we obtain 

\begin{equation}
T_{bc}P^b_aN^c\approx0, \hskip 25pt
T_{cd}P^c_aP^d_b\approx0. \label{eq:2.2.12}
\end{equation}
The first of these has already been derived in \cite{AM} (in the presence of 
a negative cosmological constant). Therefore, the asymptotic form of the 
various 3+1 pieces of $T_{ab}$ is 

\begin{equation}
T_{ab}N^aN^b=\nu+\Omega\mu, \hskip 15pt
T_{bc}P^b_aN^c=\Omega J_a, \hskip 15pt
T_{cd}P^c_aP^d_b=\Omega\Sigma_{ab}, \label{eq:2.2.13}
\end{equation}
for some smooth $\nu$, $\mu$, $J_a$ and $\Sigma_{ab}$ on $M$. However, these 
quantities are not quite independent, because they should satisfy the local 
conservation law $\hat\nabla_a\hat T^a{}_b=0$. Next we evaluate the 
consequences of this restriction. 

By the conformal rescaling formulae and the asymptotic form of the 
energy-momentum tensor we have that 

\begin{equation}
0=\Omega^{-2}\hat\nabla_a\hat T^a{}_b=\Omega\nabla_aT^a{}_b-\nabla_a\Omega
\bigl(T^a{}_b-T\delta^a_b\bigr). \label{eq:2.2.14}
\end{equation}
(It might be worth noting that the contraction of this equation with $NP^b_c$ 
and $NN^b$, respectively, yields $T_{ab}N^aP^b_c\approx0$ and $T_{ab}h^{ab}
\approx0$. The former is just the first of (\ref{eq:2.2.12}), but the latter 
is only the trace of the second.) Then substituting (\ref{eq:2.2.13}) into 
(\ref{eq:2.2.14}) and contracting with $N^b$ and $P^b_c$, respectively, and 
evaluating the resulting equations on $\mathscr{I}^+$, we obtain 

\begin{equation}
\mu+\Sigma_{ab}h^{ab}\approx\sqrt{3/\Lambda}N^a\nabla_a\nu,
  \hskip 20pt
\sqrt{3/\Lambda}D^b\Sigma_{ba}\approx-\bigl({\pounds}_NJ_b\bigr)P^b_a,
\label{eq:2.2.15}
\end{equation}
where, by (\ref{eq:2.2.2d}), we used $D_aN=O(\Omega^2)$. Thus, if $\nu$ were 
constant in time at $\mathscr{I}^+$ (e.g. when the whole physical 
energy-momentum tensor fell off as $\Omega^4T^a{}_b$, in which case $\nu$ 
itself would be zero), then by the first of (\ref{eq:2.2.15}) the $\Omega^4$ 
order part of the physical energy-momentum tensor $\hat T^a{}_b$ would be 
asymptotically trace-free. 

Taking into account (\ref{eq:2.2.13}), equation (\ref{eq:2.2.11}) can be 
rewritten into the form 

\begin{eqnarray}
2\nabla^aK_{abcd}\!\!\!\!&=\!\!\!\!&\varkappa\bigl(\nabla_dT_{cb}-\nabla_cT_{db}
  \bigr)+\frac{1}{3}\varkappa\bigl(g_{bd}\nabla_cT-g_{bc}\nabla_dT\bigr)+
  \varkappa\frac{1}{N}\Bigl(N_b\bigl(J_dN_c-J_cN_d\bigr) \nonumber \\
\!\!\!\!&{}\!\!\!\!&+\bigl(h_{bd}J_c-h_{bc}J_d\bigr)+2\bigl(\Sigma_{bd}N_c-
  \Sigma_{bc}N_d\bigr)-\Sigma_{ef}h^{ef}\bigl(h_{bd}N_c-h_{bc}N_d\bigr)\Bigr).
  \label{eq:2.2.16}
\end{eqnarray}
Recalling that the electric and magnetic parts of the rescaled Weyl tensor 
$K_{abcd}$ are defined by ${\cal E}_{ab}:=K_{acbd}N^cN^d$ and ${\cal B}_{ab}:=
\frac{1}{2}K_{acef}\varepsilon^{ef}{}_{bd}N^cN^d$, respectively, from 
(\ref{eq:2.2.16}) we obtain that 

\begin{eqnarray}
D^a{\cal E}_{ab}\!\!\!\!&=\!\!\!\!&P^e_a\nabla_e\bigl(K^a{}_{cfd}N^cN^d\bigr)
  P^f_b=\bigl(\nabla^aK_{acfd}\bigr)N^cN^dP^f_b-N^aK_{acdf}\chi^{cd}P^f_b
  \nonumber \\
\!\!\!\!&=\!\!\!\!&-\frac{\varkappa}{N}J_b-\frac{1}{3}\varkappa D_b\nu-
  \frac{1}{2}\frac{\varkappa}{N}\nu D_bN-N^aK_{acdf}\chi^{cd}P^f_b+\frac{1}{2}
  \varkappa\Omega\Bigl(-\mu\bigl(D_b\ln N\bigr)\nonumber \\
\!\!\!\!&{}\!\!\!\!&+\bigl(D^a\ln N\bigr)\Sigma_{ab}-\frac{2}{3}D_b\mu+2J^a
  \chi_{ab}+\frac{1}{3}\bigl(D_b\Sigma_{cd}\bigr)h^{cd}+N^a\bigl(\nabla_aJ_c
  \bigr)P^c_b\Bigr) \nonumber \\
\!\!\!\!&\approx\!\!\!\!&-\varkappa\sqrt{\Lambda/3}J_b-\frac{1}{3}\varkappa D_b
  \nu \label{eq:2.2.17a}
\end{eqnarray}
and 

\begin{eqnarray}
D^a{\cal B}_{ab}\!\!\!\!&=\!\!\!\!&P^h_a\nabla_h\bigl(\frac{1}{2}K^a{}_{ecd}
  \varepsilon^{cd}{}_{gf}N^eN^f\bigr)P^g_b \nonumber \\
\!\!\!\!&=\!\!\!\!&\frac{1}{2}\bigl(\nabla^aK_{aecd}\bigr)\varepsilon^{cd}
  {}_{bf}N^eN^f-\frac{1}{2}K_{aecd}\varepsilon^{cd}{}_{fg}\chi^{af}N^eP^g_b-\bigl(
  D^a\ln N\bigr){\cal B}_{ab} \label{eq:2.2.17b} \\
\!\!\!\!&=\!\!\!\!&\frac{1}{2}\varkappa\Omega\bigl(\chi_c{}^e\Sigma_{ed}-D_cJ_d
  \bigr)\varepsilon^{cd}{}_{af}N^f+\frac{1}{2}N^a\chi^{cd}K_{acef}\varepsilon
  ^{ef}{}_{da}P^a_b-\bigl(D^a\ln N\bigr){\cal B}_{ab}\approx0, \nonumber
\end{eqnarray}
where we used that $D_bN\approx0$ and $\chi_{cd}\approx0$. (\ref{eq:2.2.17a}) 
shows that, in addition to $J_b$, the gradient of the slow fall-off (i.e. 
$\Omega^3$ order) part $\nu$ of the energy density also contributes to the 
divergence of the electric part of the rescaled Weyl curvature. An analogous 
term on the right hand side of equation (10) of \cite{AM} should also be 
present. 

Finally, by the definition of the Weyl tensor, equations 
(\ref{eq:2.2.6a})-(\ref{eq:2.2.6c}) and the analogous decomposition of the 
Riemann tensor, 

\begin{eqnarray}
R_{efgh}P^e_aP^f_bP^g_cP^h_d\!\!\!\!&=\!\!\!\!&{\cal R}_{abcd}+\chi_{ac}
  \chi_{bd}-\chi_{ad}\chi_{bc}, \label{eq:2.2.18a} \\
R_{efgh}N^eP^f_bP^g_cP^h_d\!\!\!\!&=\!\!\!\!&D_c\chi_{db}-D_d\chi_{cb},
  \label{eq:2.2.18b} \\
R_{efgh}N^eN^gP^f_bP^h_d\!\!\!\!&=\!\!\!\!&P^a_bP^c_d\bigl({\pounds}_N\chi
  _{ac}\bigr)-\chi_{be}\chi^e{}_d-D_ba_d+a_ba_d, \label{eq:2.2.18c}
\end{eqnarray}
and of the scalar curvature, 

\begin{equation}
R={\cal R}+2h^{ab}\bigl({\pounds}_N\chi_{ab}\bigr)+\chi^2-3\chi_{ab}\chi^{ab}
+\frac{1}{N}D_aD^aN, \label{eq:2.2.19}
\end{equation}
we can determine the explicit form of the electric and magnetic parts of the 
Weyl tensor themselves, $E_{ab}:=C_{acbd}N^cN^d$ and $B_{ab}:=\frac{1}{2}C_{acef}
\varepsilon^{ef}{}_{bd}N^cN^d$, respectively. We obtain 

\begin{eqnarray}
E_{ab}\!\!\!\!&=\!\!\!\!&\frac{1}{2}\Bigl(P^c_aP^d_b\bigl({\pounds}_N\chi_{cd}
  \bigr)-{\cal R}_{ab}-\chi\chi_{ab}+\frac{1}{N}D_aD_bN \nonumber \\
\!\!\!\!&{}\!\!\!\!&-\frac{1}{3}h_{ab}\bigl(h^{cd}{\pounds}_N\chi_{cd}-{\cal
  R}-\chi^2+\frac{1}{N}D_cD^cN\bigr)\Bigr), \label{eq:2.2.20a} \\
B_{ab}\!\!\!\!&=\!\!\!\!&\bigl(D_c\chi_{d(a}\bigr)\varepsilon^{cd}{}_{b)},
  \label{eq:2.2.20b}
\end{eqnarray}
where $\varepsilon_{abc}:=N^e\varepsilon_{eabc}$ is the induced volume 3-form 
on ${\cal H}_\Omega$. In the rest of the paper, we need the explicit form of 
${\cal B}_{ab}=\Omega^{-1}B_{ab}$ at $\mathscr{I}^+$ only, for which, by 
(\ref{eq:2.2.8}), we obtain 

\begin{equation}
\sqrt{\frac{3}{\Lambda}}{\cal B}_{ab}\approx D_c\bigl(-{\cal R}_{d(a}+
\frac{1}{4}{\cal R}h_{d(a}\bigr)\varepsilon^{cd}{}_{b)}=:Y_{ab}, \label{eq:2.2.21}
\end{equation}
where $Y_{ab}$ is known as the Cotton--York tensor: Its vanishing is known 
to be equivalent to the \emph{local} conformal flatness of $(\mathscr{I}^+,
h_{ab})$. 

If the physical energy-momentum tensor $\hat T^a{}_b$ falls off purely at 
order $\Omega^3$, then by (\ref{eq:2.2.13}) this matter is a dust, and by the 
first of (\ref{eq:2.2.15}) the rescaled energy density $\nu=\Omega^{-3}\hat
T_{ab}\hat N^a\hat N^b$ is asymptotically constant. In particular, in the dust 
filled FRW spacetime with positive $\Lambda$ this $\nu$ is just the conserved 
first integral of one of the two Friedman equations. 

On the other hand, if $\hat T^a{}_b=\Omega^4 T^a{}_b$ (i.e. there is no 
$\Omega^3$ order term in $\hat T^a{}_b$), then by the first of 
(\ref{eq:2.2.15}) $\hat T^a{}_b$ is asymptotically trace-free, which is a 
characteristic property of conformally invariant (e.g. Maxwell or Yang--Mills) 
fields. For example, in the radiation filled FRW spacetime with positive 
$\Lambda$ theenergy-momentum tensor is trace-free, the energy density and 
isotropicpressure with respect to $\hat N^a$ fall off as $\Omega^4$, and the 
momentumdensity $J_a$ is vanishing, being compatible with the homogeneity of 
theisotropic pressure and the energy density, just according to the second of 
(\ref{eq:2.2.15}). Thus, the existence of a (spacelike) $\mathscr{I}^+$ 
restricts the form of the matter fields near $\mathscr{I}^+$. 


\section{On the Penrose mass at $\mathscr{I}^+$}
\label{sec-3}

If ${\cal S}$ is a closed, orientable spacelike 2-surface in $\hat M$, then 
for any smooth symmetric spinor field $\omega^{AB}$ we can form the integral 
of the complex 2-form $\omega^{AB}\hat R_{ABcd}$ on ${\cal S}$, where $\hat R
_{ABcd}$ is the anti-self-dual part of the physical spacetime curvature tensor. 
However, even in the de Sitter spacetime this integral diverges if we allow 
the 2-surface to tend to a cut of $\mathscr{I}^+$. Therefore, to get finite 
value, it seems reasonable to use in the charge integral the `renormalized' 
curvature 

\begin{equation}
\tilde R_{ABcd}:=\hat R_{ABcd}-\frac{1}{3}\Lambda\varepsilon_{A(C}\varepsilon
_{D)B}\varepsilon_{C'D'}, \label{eq:3.1}
\end{equation}
rather than $\hat R_{ABcd}$. (Clearly, $\tilde R_{ABcd}$  is the anti-self-dual 
part of the \emph{renormalized} curvature tensor $\tilde R_{abcd}:=\hat R_{abcd}
-\frac{1}{3}\Lambda(\hat g_{ac}\hat g_{bd}-\hat g_{ad}\hat g_{bc})$.) Then by 
the definition of $K^a{}_{bcd}$ and the Weyl tensor, Einstein's equation and 
the conformal rescaling formulae, the corresponding complex 2-form is 

\begin{eqnarray*}
\omega^{AB}\tilde R_{ABcd}\!\!\!\!&=\!\!\!\!&-\frac{1}{2}\omega^{AB}\hat
  \varepsilon^{A'B'}\Bigl(\hat R_{abcd}-\frac{1}{3}\Lambda\bigl(\hat g_{ac}
  \hat g_{bd}-\hat g_{ad}\hat g_{bc}\bigr)\Bigr) \\
\!\!\!\!&=\!\!\!\!&-\frac{1}{2}\omega^{AB}\hat\varepsilon^{A'B'}\Bigl(\hat C
  _{abcd}-\varkappa\bigl(\hat g_{ac}\hat T_{bd}-\hat g_{ad}\hat T_{bc}\bigr)+
  \frac{2}{3}\varkappa\hat T\hat g_{ac}\hat g_{bd}\Bigr)\\
\!\!\!\!&=\!\!\!\!&-\frac{1}{2}\omega^{AB}\varepsilon^{A'B'}\Bigl(K_{abcd}-
  \varkappa\bigl(g_{ac}T_{bd}-g_{ad}T_{bc}\bigr)+\frac{2}{3}\varkappa Tg_{ac}g_{bd}
  \Bigr).
\end{eqnarray*}
Thus the cosmological constant has, in fact, been canceled, and the 
expression on the right is well defined and has a finite integral on any 
closed spacelike 2-surface in $M$, even on a cut of $\mathscr{I}^+$. 

Hence, let ${\cal S}$ be a closed orientable 2-surface in $\mathscr{I}^+$, 
and denote its outward pointing $g_{ab}$-unit normal in $\mathscr{I}^+$ by 
$V^a$. Then, by the fall-off properties (\ref{eq:2.2.13}) of the various 
pieces of $T_{ab}$ and the definition of ${\cal E}_{ab}$ and ${\cal B}_{ab}$, 
we obtain 

\begin{eqnarray}
\tilde{\tt A}\bigl[\omega_{AB}\bigr]\!\!\!\!&:=\!\!\!\!&\frac{\rm i}{\varkappa}
  \oint_{\cal S}\omega^{AB}\tilde R_{ABcd} \nonumber \\
\!\!\!\!&=\!\!\!\!&-\frac{\rm i}{\varkappa}\oint_{\cal S}\omega^{AB}\varepsilon
  ^{A'B'}\Bigl(\bigl(N_a{\cal B}_{bc}-N_b{\cal B}_{ac}\bigr)V^c+\varepsilon_{abc}
  \bigl(\frac{1}{3}\varkappa\nu V^c+{\cal E}^c{}_dV^d\bigr)\Bigr){\rm d}
  {\cal S}. \label{eq:3.2}
\end{eqnarray}
Therefore, a general charge integral of the curvature is built from ${\cal
E}_{ab}$, ${\cal B}_{ab}$ and the slow fall-off (i.e. $\Omega^3$ order) part 
$\nu$ of the energy density. 

If $D\subset\mathscr{I}^+$ is an open domain with compact closure such that 
$\partial D={\cal S}$ and the spinor field $\omega_{AB}$ is well defined on 
$D$, then, by equations (\ref{eq:2.2.17a}) and (\ref{eq:2.2.17b}), the charge 
integral $\tilde{\tt A}[\omega_{AB}]$ can be rewritten as a 3-surface integral 
on $D$. In fact, if 

\begin{equation*}
Q_{ab}:=\frac{1}{2}\varepsilon_{abcd}\omega^{CD}\varepsilon^{C'D'}=-{\rm i}
\omega_{AB}\varepsilon_{A'B'},
\end{equation*}
then 

\begin{equation}
\tilde{\tt A}\bigl[\omega_{AB}\bigr]=\frac{2{\rm i}}{\varkappa}\int_D\Bigl\{D_e
Q_{ab}\Bigl(N^a\bigl(\frac{1}{3}\varkappa\nu h^{be}+{\cal E}^{be}\bigr)-
\varepsilon^{ab}{}_c{\cal B}^{ce}\Bigr)-\varkappa\sqrt{\frac{\Lambda}{3}}Q_{ab}
N^aJ^b\Bigr\}{\rm d}{\cal H}_0. \label{eq:3.3}
\end{equation}
Here ${\rm d}{\cal H}_0$ is the induced volume element on $\mathscr{I}^+$, 
and we extended the action of the intrinsic derivative operator $D_e$ from 
purely spatial tensors to arbitrary ones on $\mathscr{I}^+$ by $D_eN^a=0$. 

It is known that on a spacelike hypersurface that can be embedded with its 
first and second fundamental forms into some conformal Minkowski spacetimes 
the 3-surface twistor equation admits four (i.e. maximal number of) linearly 
independent solutions \cite{Tod84}. In particular, since $\mathscr{I}^+$ is 
extrinsically flat, this embeddability is equivalent to the local conformal 
flatness of $(\mathscr{I}^+,h_{ab})$. Thus, in the special case when its 
Cotton--York tensor is vanishing, $Y_{ab}=0$, the (already Riemannian) 
3-surface twistor equation, $D_{(AB}\lambda_{C)}=0$, admits four linearly 
independent solutions. Here $D_{AB}:=\sqrt{2}N_B{}^{A'}D_{AA'}=D_{(AB)}$ is the 
unitary spinor form of the intrinsic Levi-Civita derivative operator. These 
solutions are globally defined on $D$ if $D$ is homeomorphic to the 3-ball. 
Then, a direct calculation shows that $N^aQ_{ab}$ is a (complex) conformal 
Killing field on $D$ if $\omega_{AB}$ is a linear combination of symmetrized 
products $\lambda_{(A}\mu_{B)}$ of solutions of the 3-surface twistor equation. 
(The result that $N^aQ_{ab}$ is a conformal Killing vector was proven in the 
$\Lambda<0$ case by Kelly \cite{Ke} by considering the conformal boundary to 
be embedded in a conformal Minkowski spacetime as a hyperplane and using the 
solutions of the 1-valence twistor equation of that spacetime.) 

Since, however, ${\cal B}_{ab}=\sqrt{\Lambda/3}\,Y_{ab}$ on $\mathscr{I}^+$, 
by (\ref{eq:3.3}) for such spinor fields it is only $\nu$ and $J_a$ that 
contribute to the integral even if ${\cal E}_{ab}\not=0$. In particular, for 
the Maxwell (or Yang--Mills) field it is only $J_a$ that contributes to 
$\tilde{\tt A}[\lambda_{(A}\mu_{B)}]$ but its energy density does not; while 
in vacuum $\tilde{\tt A}[\lambda_{(A}\mu_{B)}]$ is zero even if ${\cal E}_{ab}
\not=0$. For example, in vacuum spacetimes with metric of Starobinskii form 
\cite{Starobinski} and intrinsically conformally flat $\mathscr{I}^+$ the 
electric part of the rescaled Weyl tensor is \emph{not} zero, but the charge 
integral with spinor fields solving the 3-surface twistor equation on a 
contractible $D$ is vanishing. Such an exact solution is the vacuum Kasner 
solution with positive $\Lambda$ (see \cite{exactsol}, Sect. 13.3.3). 

It is known that if $\lambda_A$ is a solution of the 3-surface twistor 
equation on a spacelike hypersurface, then its restriction to a 2-surface in 
this hypersurface solves the 2-surface twistor equation \cite{Tod84}. Thus by 
the general results above, when $(\mathscr{I}^+,h_{ab})$ is intrinsically 
locally conformally flat, the matter field is conformally invariant and its 
local energy flow is vanishing asymptotically (i.e. $J_a\approx0$), then the 
Penrose mass is zero even if the rescaled conformal electric curvature 
${\cal E}_{ab}$ is not. Therefore, the kinematical twistor (and, in 
particular, the Penrose mass) does not have the rigidity property: Its 
vanishing does \emph{not} imply the triviality of the spacetime geometry, 
i.e. actually that the past domain of dependence of $D$ is (locally) 
isometric to the de Sitter spacetime.


\section{Energy-momentum based on the Nester--Witten form}
\label{sec-4}

Another strategy to associate energy-momentum to 2-surfaces in $\mathscr{I}
^+$ might be based on the use of the integral of the Nester--Witten 2-form. 
In fact, this formalism was successfully used to give a unified spinorial 
reformulation of the ADM and Bondi--Sachs energy-momenta in asymptotically 
flat spacetimes \cite{HT}, and of the Abbott--Deser energy in asymptotically 
anti-de Sitter spacetimes \cite{Gibbonsetal}. Moreover, probably the simplest 
proof of the positivity of these energies is given in this formalism. 
Therefore, it seems natural to try to introduce energy-momentum in this 
manner in the presence of a positive cosmological constant, too.

\subsection{The Nester--Witten form and the Sen--Witten identity}
\label{sub-4.1}

In the present subsection we work exclusively in the \emph{physical} 
spacetime, but, for the sake of simplicity, \emph{in this subsection}, we 
leave the `hats' off of the quantities and objects. 

Let $\Sigma$ be a spacelike hypersurface in $M$ which extends to a 
hypersurface-with-boundary in the unphysical spacetime intersecting 
$\mathscr{I}^+$ in a smooth 2-surface homeomorphic to $S^2$ (a cut). Let 
$t^a$ denote its future pointing $g_{ab}$-unit timelike normal, $P^a_b:=
\delta^a_b-t^at_b$ is the $g_{ab}$-orthogonal projection to $\Sigma$ and 
$h_{ab}:=P^c_aP^d_bg_{cd}$ is the induced metric. (Although we use the same 
symbols $P^a_b$ and $h_{ab}$, and $\chi_{ab}$ below for the extrinsic curvature, 
they should not be confused with those introduced on the $\Omega={\rm const}$ 
hypersurfaces in section \ref{sec-2}.) 

For any pair $\lambda_A$, $\mu_A$ of spinor fields the general Nester--Witten 
form is defined by 

\begin{equation}
u\bigl(\lambda,\bar\mu\bigr)_{ab}:=\frac{\rm i}{2}\bigl(\bar\mu_{A'}
\nabla_{BB'}\lambda_A-\bar\mu_{B'}\nabla_{AA'}\lambda_B\bigr). \label{eq:4.1}
\end{equation}
Then the components of the energy-momentum 4-vector associated with a closed 
orientable 2-surface ${\cal S}$, which in the present case is the cut of the 
conformal boundary, will be defined by its integral, 

\begin{eqnarray}
H\bigl[\lambda,\bar\mu\bigr]:=\frac{2}{\varkappa}\oint_{\cal S}u\bigl(\lambda,
\bar\mu\bigr)_{cd}, \label{eq:4.1a}
\end{eqnarray}
for spinor fields $\lambda^A$ and $\mu^A$ belonging to some appropriately 
chosen two dimensional subspace ${\bf S}_{\bA}\subset C^\infty({\cal S},
\mathbb{S}^A)$ of the infinite dimensional space of the smooth spinor fields 
on ${\cal S}$. Thus, the boldface index is referring to this (still not 
specified) two dimensional space of spinor fields. Since a basis in this space 
is a pair of spinor fields, $\lambda^A_{\bA}=(\lambda^A_{\bf 0},\lambda^A
_{\bf 1})$, the boldface index can also be considered as a name index, too, 
taking numerical values: ${\bA}={\bf 0},{\bf 1}$. 
It is a simple calculation to show that this integral, as a bilinear map $H:
{\bf S}_{\bA}\times\bar{\bf S}_{{\bA}'}\rightarrow\mathbb{C}$, is Hermitian in 
the sense that $H[\lambda,\bar\mu]=\overline{H[\mu,\bar\lambda]}$ for any 
$\lambda^A,\mu^A\in{\bf S}_{\bA}$, and overline denotes complex conjugation. 
However, by the so-called polarization formula, 

\begin{equation}
H\bigl[\lambda,\bar\mu\bigr]=\frac{1}{2}\Bigl(H\bigl[\lambda+\mu,\overline{
\lambda+\mu}\bigr]+{\rm i}H\bigl[\lambda+{\rm i}\mu,\overline{\lambda+{\rm i}
\mu}\bigr]-(1+{\rm i})H\bigl[\lambda,\bar\lambda\bigr]-(1+{\rm i})H\bigl[
\mu,\bar\mu\bigr]\Bigr), \label{eq:4.1p}
\end{equation}
the bilinear form $H[\lambda,\bar\mu]$ is completely determined by the 
quadratic form $H[\alpha,\bar\alpha]$ on ${\bf S}_{\bA}$. Moreover, it would 
be enough to prove $H[\alpha,\bar\alpha]\geq0$ for any $\alpha^A\in{\bf S}
_{\bA}$, because this would already imply that $H$ as a Hermitian bilinear 
form is positive. 

The standard Witten-type spinorial proof of the positivity of the total 
energy is based on the integrated form of the Sen--Witten identity on a 
spacelike hypersurface $\Sigma$ whose boundary $\partial\Sigma$ at infinity 
is the 2-surface ${\cal S}$ in question (see e.g. \cite{RT}): 

\begin{equation*}
\oint_{\partial\Sigma}u(\alpha,\bar\alpha)_{cd}=\int_\Sigma B(\alpha){\rm d}\Sigma,
\end{equation*}
where $\alpha_A$ is defined on $\Sigma$ and 

\begin{equation*}
B(\alpha):=-2t^{AA'}\bigl({\cal D}_{A'B}\alpha^B\bigr)\bigl({\cal D}_{AB'}\bar
\alpha^{B'}\bigr)-h^{ef}t^{AA'}\bigl({\cal D}_e\alpha_A\bigr)\bigl({\cal D}_f
\bar\alpha_{A'}\bigr)-\frac{1}{2}t^aG_{aBB'}\alpha^B\bar\alpha^{B'}.
\end{equation*}
Here ${\cal D}_a:=P^b_a\nabla_b$ is the derivative operator of the so-called 
Sen connection, and, in terms of the intrinsic Levi-Civita derivative operator 
$D_e$ and the extrinsic curvature $\chi_{ab}$, its action on the spinor field 
$\alpha^A$ is given by ${\cal D}_e\alpha^A=D_e\alpha^A-\chi_e{}^A{}_{A'}t^{A'}
{}_B\alpha^B$. However, by Einstein's equations in the presence of a 
cosmological constant the last term on the right contains $\frac{1}{2}
\Lambda t_{AA'}\alpha^A\bar\alpha^{A'}$, whose integral on an infinite $\Sigma$ 
is \emph{a priori diverging} for spinor fields for which $t_{AA'}\alpha^A\bar
\alpha^{A'}$ does not fall off appropriately (e.g. when the spinor fields tend 
to a nonzero asymptotic value), independently of the fall-off properties of 
the matter fields. Therefore, we must `renormalize' the derivative operators 
in our integrals. 

Thus, following \cite{Gibbonsetal}, for any pair $(\alpha_A,\bar\beta_{A'})$ 
of spinor fields (or, equivalently, a Dirac spinor $\Psi^\alpha$ with Weyl 
spinor constituents $\alpha^A$ and $\bar\beta^{A'}$ and with $\alpha=A\oplus
A'$) we define the renormalized spacetime connection by 

\begin{equation}
\tilde\nabla_{AA'}\alpha_B:=\nabla_{AA'}\alpha_B+K\varepsilon_{AB}
 \bar\beta_{A'}, \hskip 20pt
\tilde\nabla_{AA'}\bar\beta_{B'}:=\nabla_{AA'}\bar\beta_{B'}+\tilde K
 \varepsilon_{A'B'}\alpha_A, \label{eq:4.2}
\end{equation}
for some complex constants $K$ and $\tilde K$. It is a straightforward 
calculation to show that the curvature of $\tilde\nabla_e$, acting on the 
bundle of Dirac spinors, is just the direct sum of $\tilde R^A{}_{Bcd}$, 
given by (\ref{eq:3.1}), and its complex conjugate $\bar{\tilde R}^{A'}{}
_{B'cd}$ precisely when $6K\tilde K=-\Lambda$. The corresponding 
renormalized Sen connection on the spacelike hypersurface $\Sigma$ is 

\begin{equation}
\tilde{\cal D}_{AA'}\alpha_B:={\cal D}_{AA'}\alpha_B+KP^{DD'}_{AA'}\varepsilon
 _{DB}\bar\beta_{D'}, \hskip 20pt
\tilde{\cal D}_{AA'}\bar\beta_{B'}:={\cal D}_{AA'}\bar\beta_{B'}+\tilde K
 P_{AA'}^{DD'}\varepsilon_{D'B'}\alpha_D. \label{eq:4.3}
\end{equation}
Denoting by tilde the quantities built from the renormalized connection 
(\ref{eq:4.2}), the Nester--Witten form is 

\begin{equation}
u(\alpha,\bar\alpha)_{ab}=\tilde u(\alpha,\bar\alpha)_{ab}+{\rm i}K
 \varepsilon_{AB}\bar\alpha_{(A'}\bar\beta_{B')}, \hskip 20pt
u(\beta,\bar\beta)_{ab}=\tilde u(\beta,\bar\beta)_{ab}+{\rm i}\bar{\tilde K}
 \varepsilon_{AB}\bar\alpha_{(A'}\bar\beta_{B')}. \label{eq:4.4}
\end{equation}
Note that here $\tilde u(\alpha,\bar\alpha)_{cd}$ already depends on $\bar
\beta_{A'}$, and $\tilde u(\beta,\bar\beta)_{cd}$ on $\alpha_A$, too. Then, a 
straightforward calculation yields that 

\begin{eqnarray}
\oint_{\partial\Sigma}\bigl(\tilde u(\alpha,\bar\alpha)_{ab}+\tilde u(\beta,
  \bar\beta)_{ab}\bigr)=\int_\Sigma\Bigl(B(\alpha)\!\!\!\!&+\!\!\!\!&B(\beta)
  \label{eq:4.5} \\
+3(K\!\!\!\!&+\!\!\!\!&\bar{\tilde K})t^{AA'}\bigl(\tilde K\alpha_A
  \bar\alpha_{A'}+\bar K\beta_A\bar\beta_{A'}\bigr) \nonumber \\
-2(K\!\!\!\!&+\!\!\!\!&\bar{\tilde K})t^{AA'}\varepsilon^{B'D'}\bigl(\bar
  \beta_{A'}\overline{\tilde{\cal D}}_{AB'}\bar\alpha_{D'}+\bar\alpha_{A'}
  \tilde{\cal D}_{AB'}\bar\beta_{D'}\bigr)\Bigr){\rm d}\Sigma, \nonumber
\end{eqnarray}
where ${\rm d}\Sigma$ is the natural volume element on $\Sigma$, and 
$B(\alpha)+B(\beta)$ in terms of $\tilde{\cal D}_e$ is 

\begin{eqnarray}
B(\alpha)+B(\beta)=\!\!\!\!&-\!\!\!\!&t^{AA'}h^{ef}
 \bigl((\tilde{\cal D}_e\alpha_A)(\overline{\tilde{\cal D}}_f\bar\alpha_{A'})+
 (\overline{\tilde{\cal D}}_e\beta_A)(\tilde{\cal D}_f\bar\beta_{A'})\bigr)
 \nonumber \\
\!\!\!\!&+\!\!\!\!&\frac{1}{2}\varkappa t^aT_{ab}(\alpha^B\bar\alpha^{B'}+
 \beta^B\bar\beta^{B'}) \nonumber \\
\!\!\!\!&+\!\!\!\!&\frac{1}{2}(\Lambda-6K\bar K)t_{AA'}\beta^A\bar\beta^{A'}
 +\frac{1}{2}(\Lambda-6\tilde K\bar{\tilde K})t_{AA'}\alpha^A\bar\alpha^{A'}
 \nonumber \\
\!\!\!\!&-\!\!\!\!&2t^{AA'}\varepsilon^{BD}\varepsilon^{B'D'}\Bigl(\bigl(\tilde
 {\cal D}_{A'B}\alpha_D\bigr)\bigl(\overline{\tilde{\cal D}}_{AB'}\bar\alpha_{D'}
 \bigr)+\bigl(\overline{\tilde{\cal D}}_{A'B}\beta_D\bigr)\bigl(\tilde{\cal D}
 _{AB'}\bar\beta_{D'}\bigr)\Bigr) \nonumber \\
\!\!\!\!&+\!\!\!\!&4\Bigl(K\bar\beta_{A'}t^{A'A}\varepsilon^{B'D'}\bigl(
 \overline{\tilde{\cal D}}_{AB'}\bar\alpha_{D'}\bigr)+\bar K\beta_At^{AA'}
 \varepsilon^{BD}\bigl(\tilde{\cal D}_{A'B}\alpha_D\bigr) \nonumber \\
\!\!\!\!&{}\!\!\!\!&\hskip 8pt +\tilde K\alpha_At^{AA'}\varepsilon^{BD}\bigl(
 \overline{\tilde{\cal D}}_{A'B}\beta_D\bigr)+\bar{\tilde K}\bar\alpha_{A'}t^{A'A}
 \varepsilon^{B'D'}\bigl(\tilde{\cal D}_{AB'}\bar\beta_{D'}\bigr)\Bigr),
 \label{eq:4.5B}
\end{eqnarray}
where we used Einstein's equations. 

Next, following \cite{Gibbonsetal}, we require the spinor fields $\alpha_A$, 
$\beta_A$ on $\Sigma$ to solve Witten's gauge condition (with still 
unspecified boundary conditions) with the renormalized connection: 

\begin{equation}
\varepsilon^{AB}\tilde{\cal D}_{A'A}\alpha_B=0, \hskip 20pt
\varepsilon^{A'B'}\tilde{\cal D}_{AA'}\bar\beta_{B'}=0. \label{eq:4.6}
\end{equation}
In this gauge the last three lines of (\ref{eq:4.5B}) and the third line of 
(\ref{eq:4.5}) are vanishing, and hence the integral of the renormalized 
Nester--Witten forms is 

\begin{eqnarray}
\oint_{\partial\Sigma}\Bigl(\tilde u(\alpha,\bar\alpha)_{ab}\!\!\!\!&+\!\!\!\!&
 \tilde u(\beta,\bar\beta)_{ab}\Bigr)=\int_\Sigma\Bigl\{-t^{AA'}h^{ef}\bigl(
 (\tilde{\cal D}_e\alpha_A)(\overline{\tilde{\cal D}}_f\bar\alpha_{A'})+
 (\overline{\tilde{\cal D}}_e\beta_A)(\tilde{\cal D}_f\bar\beta_{A'})\bigr)
 \nonumber \\
\!\!\!\!&+\!\!\!\!&\frac{1}{2}\varkappa t^aT_{ab}(\alpha^B\bar\alpha^{B'}+
 \beta^B\bar\beta^{B'}) \nonumber \\
\!\!\!\!&+\!\!\!\!&\frac{1}{2}(\Lambda+6\bar K\bar{\tilde K})t_{AA'}\beta^A
 \bar\beta^{A'}+\frac{1}{2}(\Lambda+6K\tilde K)t_{AA'}\alpha^A\bar\alpha^{A'}
 \Bigr\}{\rm d}\Sigma. \label{eq:4.7}
\end{eqnarray}
To kill the last two (\emph{a priori} diverging) terms of the integrand, we 
should require that $6K\tilde K=-\Lambda$, just the condition that we already 
obtained above. A particularly convenient choice (that we make) is that 
$\tilde K=K$ and $6K^2=-\Lambda$. Then, if the spinor fields solve the 
renormalized Witten type gauge condition (\ref{eq:4.6}) and the matter 
fields satisfy the dominant energy condition, then the integral of the 
renormalized Nester-Witten form is non-negative, and not \emph{a priori} 
diverging. 

Since $\Lambda$ is positive and hence $K$ is imaginary, by (\ref{eq:4.4}) 
$u(\alpha,\bar\alpha)_{ab}+u(\beta,\bar\beta)_{ab}=\tilde u(\alpha,\bar\alpha)
_{ab}+\tilde u(\beta,\bar\beta)_{ab}$ holds. Thus, under the same conditions, 
the left hand side of (\ref{eq:4.7}) would be finite even if it were built 
from the `un-renormalized' connection. We use this observation in subsection 
\ref{sub-4.6}. 

Our aim is to find the boundary conditions for the spinor fields $\alpha_A$ 
and $\beta_A$ on the cut of the conformal boundary such that 
(i.) the renormalized Witten type gauge conditions (\ref{eq:4.6}) admit 
a non-trivial solution on the spacelike hypersurface with the cut as its 
boundary, and 
(ii.) the resulting integrals could be interpreted as the components of a 
finite, real energy-momentum 4-vector, or at least could yield an invariant 
that could be interpreted as the mass. 

The finiteness of the energy-momentum 4-vector can be ensured by the 
requirement that the integral of the various terms on the right hand side 
of (\ref{eq:4.7}) be finite. In particular, we need to specify the fall-off 
properties of the physical energy-momentum tensor, the induced metric and 
the extrinsic curvature on the hypersurface $\Sigma$. We carry out these 
investigations in a coordinate system near the asymptotic end of $\Sigma$ 
that is analogous to the Bondi type coordinates in asymptotically flat 
spacetimes.

\subsection{Bondi type coordinates near $\mathscr{I}^+$}
\label{sub-4.2}

Let ${\cal S}:=\Sigma\cap\mathscr{I}^+\approx S^2$, the cut of the future 
conformal boundary defined by the hypersurface $\Sigma$, and let ${\cal S}
_u$, $u\in(-1,1)$, be a foliation of a neighbourhood of ${\cal S}$ in 
$\mathscr{I}^+$ by smooth topological 2-spheres such that ${\cal S}_0={\cal
S}$ and $u$ is increasing in the \emph{inward} direction. We define the 
lapse $n$ of this foliation in the usual way by $1=:-nV^a\nabla_au$, where 
$V^a$, as in section \ref{sec-3}, is the \emph{outward} pointing unit 
normal of the surfaces ${\cal S}_u$; and introduce the `evolution vector 
field' by $(\partial/\partial u)^a:=-nV^a$. Thus, in particular, its shift 
part will be chosen to be vanishing. Let $x^\mu=(x^2,x^3)$ be a local 
coordinate system on ${\cal S}_0\approx S^2$ (e.g. the $(\zeta,\bar\zeta)$ 
complex stereographic or the $(\theta,\phi)$ angle coordinates), and extend 
them from ${\cal S}_0$ to the other surfaces of the foliation by $V^a\nabla_a
x^\mu=0$. Thus we obtained a local coordinate system $(u,x^\mu)$ on a 
neighbourhood of ${\cal S}$ in $\mathscr{I}^+$. Since the shift part of the 
evolution vector field was chosen to be vanishing, the metric induced from 
$g_{ab}$ on the conformal boundary $\mathscr{I}^+$ (`boundary metric') takes 
the form $db^2=-A^2du^2+q_{\mu\nu}dx^\mu dx^\nu$, where $A=A(u,x^\mu)$ is a 
strictly positive function and $q_{\mu\nu}=q_{\mu\nu}(u,x^\rho)$ is negative 
definite. The freedom in the definition of this coordinate system is to 
choose a different foliation ${\cal S}_{\tilde u}$ in a neighbourhood of 
${\cal S}$, e.g. by the level sets of a new function $\tilde u:=Hu$, where 
$H$ is some strictly positive function on this neighbourhood, and to choose 
different coordinates $(x^2,x^3)$ on ${\cal S}$. 

Next we complete $(u,x^\mu)$ to be a local coordinate system on a 
neighbourhood of the `asymptotic end' of $\Sigma$ in $M$. Thus, let us fix 
$(u,x^\mu)$ and let ${\cal N}_u$ denote the past directed, ingoing null 
hypersurface emanated from the 2-surface ${\cal S}_u$. In a neighbourhood of 
$\mathscr{I}^+$ this is smooth, generated by past directed ingoing null 
geodesics $\gamma$ with future end points on ${\cal S}_u$ with coordinates 
$x^\mu$, and yields an extension of $u$ from $\mathscr{I}^+$ to a 
neighbourhood of $\mathscr{I}^+$ in $M$. Let us define $l^a:=g^{ab}\nabla_bu$, 
which is a future directed null normal of the hypersurfaces ${\cal N}_u$ and, 
by $l^a\nabla_al_b=l^a\nabla_bl_a=0$, it is a tangent of the \emph{affine 
parametrized} null geodesic generators $\gamma$. Let $w$ denote the affine 
parameter measured from ${\cal S}_u$, and hence we write $l^a=-(\partial/
\partial w)^a$. (Note that $w$ is \emph{decreasing} in the future direction.) 
Then the coordinates of a point $p$ in a neighbourhood of ${\cal S}_0$ in $M$ 
are defined to be $u$, $w$ and $x^\mu$ if $p=\gamma(w)$ and the coordinates 
of the future end point of $\gamma$ on $\mathscr{I}^+$ are $(u,x^\mu)$. The 
resulting coordinate system $(u,w,x^\mu)$ is analogous to that used to 
analyze asymptotically flat spacetimes near their future null infinity (see 
e.g. \cite{NT80}). 

Since by the definitions $\mathscr{I}^+=\{\Omega=0\}=\{w=0\}$ holds and 
$\Omega$ is smooth, we can write that $\Omega=aw+bw^2+O(w^3)$ for some 
functions $a$ and $b$ on $\mathscr{I}^+$, where $a$ is positive. Substituting 
this expression into $\nabla_a\nabla_b\Omega\approx0$ (see equation 
(\ref{eq:2.2.2b})), we obtain 

\begin{equation*}
\bigl(\nabla_aw\bigr)\bigl(\nabla_ba\bigr)+\bigl(\nabla_aa\bigr)\bigl(
\nabla_bw\bigr)+2b\bigl(\nabla_aw\bigr)\bigl(\nabla_bw\bigr)\approx0.
\end{equation*}
Contracting this with $(\partial/\partial w)^a(\partial/\partial w)^b$ and 
using that $(\partial/\partial w)^a$ is the tangent of affine parametrized 
geodesics and that $a$ does not depend on $w$, we find that $b=0$. 

However, it is only the \emph{conformal class} of $db^2$ that is physically 
determined. Thus we can use the conformal gauge freedom $g_{ab}\mapsto\omega^2
g_{ab}$ mentioned in subsection \ref{sub-2.1} with $\omega=a^{-1}$ to yield 
$\Omega=w+O(w^3)$. Hence, with this choice $\Omega$ is an asymptotic affine 
parameter in the unphysical metric along the null geodesic generators 
$\gamma$ even in the first two orders. Thus, we have fixed the conformal 
factor on $\mathscr{I}^+$, i.e. the remaining conformal gauge freedom is the 
rescaling of $g_{ab}$ with conformal factors of the form $1+\Omega^2\Theta$. 
We call this conformal gauge a \emph{special conformal Bondi gauge}. In the 
rest of the paper we assume that in our unphysical spacetime $(M,g_{ab},
\Omega)$ the conformal gauge is such a special conformal Bondi gauge. 

(Since the surfaces ${\cal S}_u$ in $\mathscr{I}^+$ are homeomorphic to 
2-spheres, there exists a conformal factor $R=R(u,x^\rho)$ such that $q
_{\mu\nu}(u,x^\rho)=R^2(u,x^\rho){}_0q_{\mu\nu}(u,x^\rho)$, where ${}_0
q_{\mu\nu}(u,x^\rho)$ are the components of the unit sphere metric on 
${\cal S}_u$ in the coordinates $(x^2,x^3)$. However, note that although
${}_0q_{\mu\nu}dx^\mu dx^\nu$ is a unit sphere metric, its components 
${}_0q_{\mu\nu}$ take some simple special form, e.g. ${\rm diag}(-1,-\sin^2
\theta)$ in the angle coordinates, \emph{only on a single surface}, e.g. 
on ${\cal S}_0$. In fact, $(x^2,x^3)$ were specified freely only on 
${\cal S}_0$, but their extension to the other surfaces was fixed essentially 
by the requirement of the vanishing of the $du\,dx^\mu$ components in the 
line element $db^2$. Hence, in general, the components ${}_0q_{\mu\nu}$ of 
the unit sphere metric depend on the coordinate $u$, too.) 

In the coordinate system $(u,w,x^\mu)$, 

\begin{eqnarray*}
&{}&g_{01}:=g\bigl(\frac{\partial}{\partial u},\frac{\partial}{\partial w}
  \bigr)=-\bigl(\frac{\partial}{\partial u}\bigr)^a\nabla_au=-1, \\
&{}&g_{11}:=g\bigl(\frac{\partial}{\partial w},\frac{\partial}{\partial w}
  \bigr)=g_{ab}l^al^b=0, \\
&{}&g_{1\mu}:=g\bigl(\frac{\partial}{\partial w},\frac{\partial}{\partial
  x^\mu}\bigr)=-\bigl(\frac{\partial}{\partial x^\mu}\bigr)^a\nabla_a
  u=0.
\end{eqnarray*}
Thus, in these coordinates, the form of the conformal metric is 

\begin{equation}
ds^2=g_{00}du^2-2du\,dw+2g_{0\mu}du\,dx^\mu+g_{\mu\nu}dx^\mu\,dx^\nu.
\label{eq:4.8}
\end{equation}
Comparing its pull back to $\mathscr{I}^+=\{\,w=0\,\}$ with $db^2$ we see that 

\begin{equation}
g_{00}+A^2,\,\,\, g_{0\mu},\,\,\, g_{\mu\nu}-q_{\mu\nu}=O(w).
\label{eq:4.9}
\end{equation}
The contravariant form of the unphysical metric is 

\begin{equation}
g^{ab}=\left(\begin{array}{ccc}
           0 & -1 & 0\\
          -1 & -g_{00}+g^{\nu\rho}g_{\nu0}g_{\rho0} & g^{\mu\rho}g_{\rho0}\\
           0 & g^{\mu\rho}g_{\rho0} & g^{\mu\nu}\\
\end{array}\right), \label{eq:4.10}
\end{equation}
where $g^{\mu\nu}$ is the inverse of the $2\times2$ matrix $g_{\mu\nu}$, 
i.e. defined by $g^{\mu\rho}g_{\rho\nu}:=\delta^\mu_\nu$. Since $\Lambda/3
\approx(\nabla_a\Omega)(\nabla^a\Omega)$ holds by equation (\ref{eq:2.2.2c}), 
the component $g^{11}$ of the conformal metric is asymptotically constant: 
$g^{11}=\frac{1}{3}\Lambda+O(w^2)$. Comparing this with (\ref{eq:4.10}) and 
taking into account that $g_{\mu0}=O(w)$ (see equation (\ref{eq:4.9})) we 
find that 

\begin{equation}
A^2=\frac{1}{3}\Lambda, \hskip 20pt g_{00}=-\frac{1}{3}\Lambda+O(w^2).
\label{eq:4.11}
\end{equation}
Moreover, contracting $0\approx\nabla_a\nabla_b\Omega\approx\nabla_a\nabla_bw$ 
with the various coordinate vectors, for the Christoffel symbols we obtain 
that $\Gamma^1_{ab}=O(w)$, where $a,b=0,...,3$. (N.B.: $\Gamma^a_{11}=0$ 
identically, because the generators of ${\cal N}_u$ are affine parametrized 
geodesics.) In particular, 

\begin{eqnarray}
O(w)\!\!\!\!&=\!\!\!\!&2\Gamma^1_{0\mu}=-\partial_\mu g_{00}
  -g^{11}\partial_1g_{0\mu}+g^{1\rho}\bigl(-\partial_\rho g_{0\mu}+
  \partial_0g_{\rho\mu}+\partial_\mu g_{0\rho}\bigr) \nonumber \\
\!\!\!\!&=\!\!\!\!&-\frac{1}{3}\Lambda\partial_1g_{0\mu}+O(w),
  \label{eq:4.12a}\\
O(w)\!\!\!\!&=\!\!\!\!&2\Gamma^1_{\mu\nu}=\partial_0g_{\mu\nu}-
  \partial_\mu g_{0\nu}-\partial_\nu g_{0\mu}-g^{11}\partial_1g_{\mu\nu}+g^{1\rho}
  \bigl(-\partial_\rho g_{\mu\nu}+\partial_\mu g_{\rho\nu}+\partial_\nu g_{\rho
  \mu}\bigr) \nonumber \\
\!\!\!\!&=\!\!\!\!&\partial_0g_{\mu\nu}-\frac{1}{3}\Lambda\partial_1
  g_{\mu\nu}+O(w). \label{eq:4.12b}
\end{eqnarray}
The first implies that $\partial_1g_{0\mu}=O(w)$. To evaluate the second, 
let us write $g_{\mu\nu}=q_{\mu\nu}+r_{\mu\nu}w+O(w^2)$, where $r_{\mu\nu}=
r_{\mu\nu}(u,x^\rho)$ (see equation (\ref{eq:4.9})). Substituting this into 
(\ref{eq:4.12b}) we find that $\partial_0q_{\mu\nu}=\frac{1}{3}\Lambda
r_{\mu\nu}$. Thus, from five of the equations $\nabla_a\nabla_b\Omega\approx0$ 
we obtain that 

\begin{equation}
g_{0\mu}=O(w^2), \hskip 20pt
g_{\mu\nu}=q_{\mu\nu}+\frac{3}{\Lambda}\bigl(\partial_0q_{\mu\nu}\bigr)w+O(w^2).
\label{eq:4.13}
\end{equation}
Equations (\ref{eq:4.11}) and (\ref{eq:4.13}) provide a refinement of the 
fall-off properties (\ref{eq:4.9}). The second of (\ref{eq:4.13}) implies 
that $g^{\mu\nu}=q^{\mu\nu}+\frac{3}{\Lambda}(\partial_0q^{\mu\nu})w+O(w^2)$, 
where $q^{\mu\nu}$ is the inverse of $q_{\mu\nu}$. The remaining components of 
$\nabla_a\nabla_b\Omega\approx0$ do not yield any further restriction on 
the asymptotic form of the metric. 

Introducing the new coordinate $r:=1/w$, the asymptotic form of the physical 
spacetime metric $d\hat s^2=\Omega^{-2}ds^2$ is 

\begin{eqnarray}
d\hat s^2=\!\!\!\!&-\!\!\!\!&\frac{1}{3}\Lambda r^2du^2+2du\,dr+r^2\Bigl(
  q_{\mu\nu}+\frac{3}{\Lambda}\bigl(\partial_0q_{\mu\nu}\bigr)\frac{1}{r}
  \Bigr)dx^\mu dx^\nu \nonumber \\
\!\!\!\!&+\!\!\!\!&O(1)du^2+O(\frac{1}{r^2})du\,dr+O(1)du\,dx^\mu+O(1)
dx^\mu\,dx^\nu.
\label{eq:4.14}
\end{eqnarray}
In particular, the metric of the de Sitter spacetime also has this form. To 
see this, first let us rewrite the line element of the Einstein universe, 
$ds_E^2=d\tau^2-d\bar r^2-\sin^2\bar r\, d\omega^2$, in the new coordinates 

\begin{equation*}
u:=\sqrt{\frac{3}{\Lambda}}\bigl(\tau-\bar r\bigr), \hskip 20pt
w:=\sqrt{\frac{\Lambda}{3}}\bigl(\tau_0-\tau\bigr).
\end{equation*}
(Here $\tau_0$ is some constant and $d\omega^2:=d\theta^2+\sin^2\theta d\phi
^2=4(1+\zeta\bar\zeta)^{-2}d\zeta d\bar\zeta$, the line element of the unit 
sphere metric in the angle and the complex null coordinates, respectively, 
and $\bar r\in[0,\pi]$, $\tau\in\mathbb{R}$.) We obtain 

\begin{eqnarray*}
ds_E^2=\!\!\!\!&-\!\!\!\!&\frac{1}{3}\Lambda du^2-2du\,dw-\sin^2\Bigl(
  \sqrt{\frac{\Lambda}{3}}u-\tau_0+\sqrt{\frac{3}{\Lambda}}w\Bigr)d\omega^2\\
=\!\!\!\!&-\!\!\!\!&\frac{1}{3}\Lambda du^2-2du\,dw-\sin^2\Bigl(
  \sqrt{\frac{\Lambda}{3}}u-\tau_0\Bigr)d\omega^2-\frac{3}{\Lambda}
  \frac{\partial}{\partial u}\Bigl(\sin^2\Bigl(\sqrt{\frac{\Lambda}{3}}u-
  \tau_0\Bigr)d\omega^2\Bigr)w\\
\!\!\!\!&+\!\!\!\!&O(w^2),
\end{eqnarray*}
which has the expected asymptotic form near the $w=0$ hypersurface. Thus, to 
see that the de Sitter metric has indeed the form (\ref{eq:4.14}), it is 
already enough to recall that the de Sitter spacetime is conformal e.g. to 
the $-\pi/2<\tau<\pi/2$ part of the Einstein universe. Its future conformal 
boundary $\mathscr{I}^+=\{\tau=\pi/2\}$ coincides with the $w=0$ hypersurface 
precisely when $\tau_0=\pi/2$, in which case $u=\sqrt{\frac{3}{\Lambda}}
\frac{\pi}{2}$, $0$ and $-\sqrt{\frac{3}{\Lambda}}\frac{\pi}{2}$ correspond, 
respectively, to the $\bar r=0$ origin, the $\bar r=\pi/2$ maximal 2-surface 
and the $\bar r=\pi$ anti-podal point; while the conformal factor is $\Omega
=\sqrt{\frac{\Lambda}{3}}\sin(\tau+\pi/2)=\sqrt{\frac{\Lambda}{3}}\sin(\pi-
\sqrt{\frac{3}{\Lambda}}w)=w+O(w^3)$.

\subsection{The Newman--Penrose tetrad}
\label{sub-4.3}

Let ${\cal S}_{u,w}$ denote the $w={\rm const}$ topological 2-sphere in the 
null hypersurface ${\cal N}_u$, and adapt a Newman--Penrose (NP) complex null 
tetrad $\{l^a,n^a,m^a,\bar m^a\}$ to these surfaces: First, recall that $l^a$ 
is an outgoing null normal to ${\cal S}_{u,w}$, and let us choose $n^a$ to be 
the future pointing ingoing null normal and normalized by $n^al_a=1$; and let 
us choose $m^a$ to be a complex null tangent to ${\cal S}_{u,w}$, $\bar m^a$ 
to be its complex conjugate and they are normalized according to $m^a\bar m_a
=-1$. This basis is fixed up to the change $m^a\mapsto e^{{\rm i}\alpha}m^a$ for 
any real function $\alpha$ of the coordinates. Since $l_a=\nabla_au$, $n^al_a
=1$, $m^a\nabla_au=0$ and $m^a\nabla_aw=0$, the vectors of the tetrad have 
the form 

\begin{equation}
l^a=-\bigl(\frac{\partial}{\partial w}\bigr)^a, \hskip 20pt
n^a=\bigl(\frac{\partial}{\partial u}\bigr)^a+B\bigl(\frac{\partial}
{\partial w}\bigr)^a+C^\mu\bigl(\frac{\partial}{\partial x^\mu}\bigr)^a,
\hskip 20pt
m^a=D^\mu\bigl(\frac{\partial}{\partial x^\mu}\bigr)^a \label{eq:4.15}
\end{equation}
for some functions $B$, $C^\mu$ and $D^\mu$ of the coordinates. Here $B$ 
is real and, if $(x^2,x^3)$ are complex, then $C^2=\bar C^3$ holds. Comparing 
$g^{ab}=l^an^b+n^al^b-m^a\bar m^b-\bar m^am^b$ with (\ref{eq:4.10}) we see 
that 

\begin{equation}
B=-\frac{1}{6}\Lambda+O(w^2), \hskip 20pt
C^\mu=-g^{\mu\rho}g_{\rho 0}=O(w^2), \hskip 20pt
D^\mu\bar D^\nu+\bar D^\mu D^\nu=g^{\mu\nu}. \label{eq:4.16}
\end{equation}
Since the future pointing unit timelike normal of $\mathscr{I}^+$ is $N^a=-
\sqrt{3/\Lambda}g^{ab}\nabla_bw$, its contraction with $l_a$ is constant on 
the whole $\mathscr{I}^+$: $N^al_a=\sqrt{3/\Lambda}$. This implies that on 
$\mathscr{I}^+$ 

\begin{equation}
l_a=\sqrt{\frac{3}{\Lambda}}\bigl(N_a+V_a\bigr), \hskip 20pt
n_a=\sqrt{\frac{\Lambda}{12}}\bigl(N_a-V_a\bigr). \label{eq:4.17}
\end{equation}
Thus, the real null normals $l_a$ and $n_a$ of the 2-surface ${\cal S}$ are 
boosted with respect to $\frac{1}{\sqrt{2}}(N_a\pm V_a)$, the ones built 
from the timelike and spacelike normals \emph{symmetrically}. Hence, \emph{we 
adapt our normalized GHP spinor dyad $\{o^A,\iota^A\}$ to the un-boosted null 
normals} $\frac{1}{\sqrt{2}}(N_a\pm V_a)$, i.e. $o^A\bar o^{A'}=\sqrt{\Lambda
/6}l^a$, $\iota^A\bar\iota^{A'}=\sqrt{6/\Lambda}n^a$, $o^A\bar\iota^{A'}=m^a$ 
and $\iota^A\bar o^{A'}=\bar m^a$. Also, \emph{we define the GHP spin 
coefficients \cite{GHP} in this un-boosted frame}, rather than in $\{l^a,n^a,
m^a,\bar m^a\}$. 

To find the asymptotic form of the spin coefficients in the \emph{physical} 
spacetime, we conformally rescale the tetrad $\{l^a,n^a,m^a,\bar m^a\}$. Since, 
however, neither $l^a$ nor $n^a$ is distinguished physically over the other, 
we rescale them, and the spinor dyad also, symmetrically: 

\begin{eqnarray}
&{}&l^a=\Omega^{-1}\hat l^a, \hskip 25pt
n^a=\Omega^{-1}\hat n^a, \hskip 25pt
m^a=\Omega^{-1}\hat m^a;  \label{eq:4.18a} \\
&{}&o^A=\Omega^{-\frac{1}{2}}\hat o^A, \hskip 15pt
\iota^A=\Omega^{-\frac{1}{2}}\hat\iota^A. \label{eq:4.18b}
\end{eqnarray}
These imply the general formulae how the various GHP spin coefficients 
change under such a symmetric rescaling: 

\begin{eqnarray*}
&{}&\hat\kappa=\Omega\kappa, \hskip 97pt
  \hat\kappa^\prime=\Omega\kappa^\prime, \\
&{}&\hat\sigma=\Omega\sigma, \hskip 97pt
  \hat\sigma^\prime=\Omega\sigma^\prime, \\
&{}&\hat\rho=\Omega\rho+o^A\bar o^{A'}\nabla_a\Omega, \hskip 35pt
  \hat\rho^\prime=\Omega\rho^\prime+\iota^A\bar\iota^{A'}\nabla_a\Omega, \\
&{}&\hat\tau=\Omega\tau+o^A\bar\iota^{A'}\nabla_a\Omega, \hskip35pt
  \hat\tau^\prime=\Omega\tau^\prime+\iota^A\bar o^{A'}\nabla_a\Omega, \\
&{}&\hat\beta=\Omega\beta-\frac{1}{2}o^A\bar\iota^{A'}\nabla_a\Omega,
  \hskip 25pt
  \hat\beta^\prime=\Omega\beta^\prime-\frac{1}{2}\iota^A\bar o^{A'}\nabla_a
  \Omega, \\
&{}&\hat\varepsilon=\Omega\varepsilon-\frac{1}{2}o^A\bar o^{A'}\nabla_a\Omega,
  \hskip 30pt
  \hat\varepsilon^\prime=\Omega\varepsilon^\prime-\frac{1}{2}\iota^A\bar\iota
  ^{A'}\nabla_a\Omega.
\end{eqnarray*}
Since $l_a$ is a gradient and $n_a$ is hypersurface orthogonal, certain GHP 
spin coefficients take special value: $\kappa=0$, $\bar\rho=\rho$, $\bar\rho
^\prime=\rho^\prime$, $\varepsilon+\bar\varepsilon=0$ and $\tau=\beta-\bar\beta
^\prime$ hold. In addition, by (\ref{eq:4.17}) at $\mathscr{I}^+$ 

\begin{equation*}
\frac{1}{\sqrt{2}}\Bigl(\nabla_a\bigl(o_B\bar o_{B'}\bigr)+\nabla_a\bigl(
\iota_B\bar\iota_{B'}\bigr)\Bigr)=\nabla_aN_b=0,
\end{equation*}
whose contraction with the various tetrad vectors yields that 

\begin{equation*}
\kappa^\prime, \,\,\,
\tau^\prime, \,\,\,
\tau=\beta-\bar\beta^\prime, \,\,\,
\varepsilon^\prime+\bar\varepsilon^\prime, \,\,\,
\sigma^\prime+\bar\sigma, \,\,\,
\rho^\prime+\rho=O(w).
\end{equation*}
Hence, by $\Omega=1/r+O(1/r^3)$, the asymptotic form of the GHP spin 
coefficients in the physical spacetime is 

\begin{eqnarray}
\hat\kappa\!\!\!\!&=\!\!\!\!&0, \hskip 140pt
  \hat\kappa^\prime=O(\frac{1}{r^2}), \label{eq:4.19a} \\
\hat\sigma\!\!\!\!&=\!\!\!\!&\frac{1}{r}\sigma^0+O(\frac{1}{r^2}),
  \hskip 78pt
  \hat\sigma^\prime=-\frac{1}{r}\bar\sigma^0+O(\frac{1}{r^2}),
  \label{eq:4.19b} \\
\hat\rho\!\!\!\!&=\!\!\!\!&-\sqrt{\frac{\Lambda}{6}}+\frac{1}{r}\rho^0+
  O(\frac{1}{r^2}), \hskip 35pt
  \hat\rho^\prime=-\sqrt{\frac{\Lambda}{6}}-\frac{1}{r}\rho^0+O(\frac{1}{r^2}),
  \label{eq:4.19c} \\
\hat\tau\!\!\!\!&=\!\!\!\!&O(\frac{1}{r^2}), \hskip 115pt
  \hat\tau^\prime=O(\frac{1}{r^2}), \label{eq:4.19d} \\
\hat\beta+\bar{\hat\beta}^\prime\!\!\!\!&=\!\!\!\!&\frac{2}{r}\beta^0+O(
  \frac{1}{r^2}), \hskip 60pt
\hat\beta-\bar{\hat\beta}^\prime=O(\frac{1}{r^2}), \label{eq:4.19e} \\
\hat\varepsilon+\bar{\hat\varepsilon}\!\!\!\!&=\!\!\!\!&\sqrt{\frac{\Lambda}
  {6}}+O(\frac{1}{r^2}), \hskip 65pt
\hat\varepsilon-\bar{\hat\varepsilon}=\frac{2{\rm i}}{r}\varepsilon^0+
  O(\frac{1}{r^2}), \label{eq:4.19f} \\
\hat\varepsilon^\prime+\bar{\hat\varepsilon}^\prime\!\!\!\!&=\!\!\!\!&
  \sqrt{\frac{\Lambda}{6}}+O(\frac{1}{r^2}), \hskip 65pt
  \hat\varepsilon^\prime-\bar{\hat\varepsilon}^\prime=\frac{2{\rm i}}{r}
  \varepsilon^{\prime\, 0}+O(\frac{1}{r^2}), \label{eq:4.19g}
\end{eqnarray}
where $\sigma^0$ and $\beta^0$ are complex, while $\rho^0$, $\varepsilon^0$ 
and $\varepsilon^{\prime\,0}$ are real functions on $\mathscr{I}^+$. The first 
three of them are linked to the intrinsic geometry of the conformal boundary. 
In fact, 

\begin{equation}
\sigma^0=\frac{1}{\sqrt{2}}m^a\bigl(\nabla_aV_b\bigr)m^b, \hskip 20pt
\rho^0=\frac{1}{\sqrt{2}}\bar m^a\bigl(\nabla_aV_b\bigr)m^b, \hskip 20pt
\beta^0=-\frac{1}{2}m^a\bigl(\nabla_am_b\bigr)\bar m^b \label{eq:4.19h}
\end{equation}
represent, respectively, the trace-free part of the extrinsic curvature of 
the 2-surfaces ${\cal S}_u$ in $\mathscr{I}^+$, the trace of this extrinsic 
curvature, and the rest of the connection 1-form of the intrinsic geometry 
of $\mathscr{I}^+$. On the other hand, on $\mathscr{I}^+$ 

\begin{equation}
\varepsilon^0=\frac{\rm i}{2}\sqrt{\frac{\Lambda}{6}}l^a\bigl(\nabla_am_b
  \bigr)\bar m^b, \hskip 20pt
\varepsilon^{\prime\,0}=\frac{\rm i}{2}\sqrt{\frac{6}{\Lambda}}n^a\bigl(
  \nabla_a\bar m_b\bigr)m^b, \label{eq:4.19i}
\end{equation}
which specify how the complex null vectors $m^a$ and $\bar m^a$ are extended 
off the conformal boundary. Since, however, the complex null vectors $m^a$ 
and $\bar m^a$ are fixed only up to the phase transformation $m^a\mapsto\exp(
{\rm i}\alpha)m^a$ with an arbitrary smooth function $\alpha=\alpha(u,w,
x^\mu)$, the functions $\varepsilon^0$ and $\varepsilon^{\prime 0}$ can be 
chosen to be vanishing on $\mathscr{I}^+$. In fact, since on a neighbourhood 
of $\mathscr{I}^+$ they are given by (\ref{eq:4.19i}) up to $O(\Omega)$ terms, 
under such a transformation they change according to $\varepsilon^0\mapsto
\varepsilon^0+\frac{1}{2}\sqrt{\frac{\Lambda}{6}}l^a\nabla_a\alpha$ and 
$\varepsilon^{\prime 0}\mapsto\varepsilon^{\prime 0}-\frac{1}{2}\sqrt{\frac{6}
{\Lambda}}n^a\nabla_a\alpha$. Thus, if $\alpha$ is chosen to be 

\begin{equation*}
\alpha\bigl(u,w,x^\mu\bigr)=2\sqrt{\frac{6}{\Lambda}}\int^w_0\varepsilon^0
\bigl(u,w',x^\mu\bigr)dw',
\end{equation*}
then the new $\varepsilon^0$ in the transformed frame is vanishing even on a 
neighbourhood of $\mathscr{I}^+$. The NP frame is still not fixed, phase 
transformations of the complex null vectors with $w$-independent phase are 
still allowed. Thus, if this phase in such a further transformation is chosen 
to be 

\begin{equation*}
\alpha\bigl(u,x^\mu\bigr)=2\sqrt{\frac{\Lambda}{6}}\int^u_0\varepsilon
^{\prime 0}\bigl(u',0,x^\mu\bigr)du',
\end{equation*}
then the new $\varepsilon^{\prime 0}$ in the transformed frame is vanishing on 
$\mathscr{I}^+$. Therefore, the $\varepsilon^0$ and $\varepsilon^{\prime 0}$ in 
(\ref{eq:4.19f}) and (\ref{eq:4.19g}), respectively, can be chosen to be 
vanishing. Note also that, by the first of (\ref{eq:4.19f}), $r$ is \emph{not} 
an affine parameter along the null geodesic generators of the null 
hypersurfaces ${\cal N}_u$ in the physical spacetime.

\subsection{The asymptotic properties of the geometry of $\Sigma$}
\label{sub-4.4}

To determine the asymptotic form of the solutions of the renormalized 
Witten equation, and also to find the appropriate function spaces in which the 
renormalized Witten equation can be proven to admit a solution, we need to 
know the detailed asymptotic structure of the spacelike hypersurface $\Sigma$. 
Thus, suppose that in a neighbourhood of $\mathscr{I}^+$ in $M$ the 
hypersurface $\Sigma$ is given by $u-U(w,x^\mu)=0$ for some smooth function 
$U$ of $w$ and $x^\mu$, where, by $\Sigma\cap\mathscr{I}^+={\cal S}_0$, $U(0,
x^\mu)=0$ holds. (More generally, a 1-parameter family $\Sigma_t$ of spacelike 
hypersurfaces that intersect $\mathscr{I}^+$ in the 2-surfaces ${\cal S}_u$ 
with $u=t$ is given by the level sets $t:=u-U(w,x^\mu)={\rm const}$.) Hence 
we can write $U(w,x^\mu)=Ww+O(w^2)$ for some smooth function $W$ of the 
coordinates $x^\mu$. Since $\Sigma$ is spacelike (even at $\mathscr{I}^+$) and 
the hypersurfaces ${\cal N}_u$ are null, the $u$ coordinate along $\Sigma$ 
must be increasing with increasing $w$, and hence $W$ must be strictly 
positive. Since the components of the normal $T_a$ of $\Sigma$ in the 
coordinate system $(u,w,x^\mu)$ are $(1,-\frac{\partial U}{\partial w},
-\frac{\partial U}{\partial x^\mu})$, its norm is $\vert T_e\vert^2:=g^{ab}
T_aT_b=\frac{1}{3}\Lambda W^2+2W+O(w)$, its scalar product with the unit 
normal of the $\Omega={\rm const}$ hypersurfaces is $N^aT_a=\sqrt{\frac{3}
{\Lambda}}(1+\frac{1}{3}\Lambda W)+O(w)$, and the function $W$ is completely 
determined by $\vert T_e\vert^2$ on $\mathscr{I}^+$. Note that $\Sigma$ would 
be asymptotically null (e.g. the null hypersurface ${\cal N}_0$ itself) 
precisely when $W$ were vanishing. Then it is straightforward to derive the 
asymptotic form of the induced physical metric $\hat h_{ab}=\Omega^{-2}h_{ab}$ 
on $\Sigma$. It is 

\begin{equation*}
d\hat h^2=-\frac{1}{r^2}\Bigl(\vert T_e\vert^2+O(\frac{1}{r})\Bigr)dr^2+
O(\frac{1}{r})dr\,dx^\mu+r^2\Bigl(q_{\mu\nu}+O(\frac{1}{r})\Bigr)dx^\mu\,dx^\nu.
\end{equation*}
If $\vert T_e\vert$ (and hence $W$, too) were constant and $q_{\mu\nu}$ were 
the unit sphere metric, then this would be just the asymptotic form of the 
standard hyperboloidal metric 

\begin{equation*}
d\hat h^2_H=-\frac{\vert T_e\vert^2}{\vert T_e\vert^2+r^2}dr^2-r^2d\omega^2
\end{equation*}
with constant curvature (and curvature scalar $\hat{\cal R}=-6/\vert T_e\vert
^2$). Therefore, the induced intrinsic metric on $\Sigma$ is some `deformed', 
or asymptotically hyperboloidal one, characterized asymptotically by the 
function $W$ and the 2-metric $q_{\mu\nu}$, in which $r$ is an asymptotic 
\emph{areal} (rather than a radial distance) coordinate. The function $W$ 
plays the role of the local boost parameters, characterizing the relative 
direction of the normal of $\Sigma$ with respect to that of $\mathscr{I}^+$ 
at $\mathscr{I}^+$. 

However, it seems useful to rewrite the induced metric in a slightly 
different, intrinsic coordinate system on $\Sigma$. Thus, let us foliate the 
asymptotic end of $\Sigma$ by the level sets of the conformal factor, $\hat
{\cal S}_{\Omega}:=\Sigma\cap\{\Omega={\rm const}\}$. In general, for $\Omega
>0$, these surfaces do not coincide with any ${\cal S}_{u,w}:={\cal N}_u\cap
\{w={\rm const}\}$, but in the $\Omega\rightarrow0$ limit $\hat{\cal S}
_{\Omega}\rightarrow{\cal S}_0\subset\mathscr{I}^+$. Let $v_a:=D_a\Omega/\vert
D_e\Omega\vert$, the $g_{ab}$-unit normal to $\hat{\cal S}_{\Omega}$ which is 
tangent to $\Sigma$, where $\vert D_e\Omega\vert^2:=-g^{ab}(D_a\Omega)(D_b
\Omega)$. This $v^a$ points `outward' to the conformal boundary, and the 
lapse $\tilde n$ of this foliation, defined by $1=:-\tilde nv^aD_a\Omega$, is 
just $\tilde n=1/\vert D_e\Omega\vert$. 

Let us complete this $v^a$ to be a frame field $\{v^a,M^a,\bar M^a\}$ on 
$\Sigma$. Here $M^a$ and $\bar M^a$ are complex null tangents of the surfaces 
$\hat{\cal S}_{\Omega}$, orthogonal to $v^a$, and normalized with respect to 
$g_{ab}$ by $M^a\bar M_a=-1$. A simple calculation yields that the $g_{ab}$-unit 
normal to $\Sigma$ and the vectors of this frame field on $\Sigma$ are given 
by 

\begin{eqnarray}
t^a\!\!\!\!&=\!\!\!\!&\frac{1}{\vert T_e\vert}\Bigl(\bigl(1+\frac{1}{6}
  \Lambda W\bigr)l^a+Wn^a\Bigr)+O(w), \label{eq:4.4.1a} \\
v^a\!\!\!\!&=\!\!\!\!&\frac{1}{\vert T_e\vert}\Bigl(\bigl(1+\frac{1}{6}
  \Lambda W\bigr)l^a-Wn^a\Bigr)+O(w), \label{eq:4.4.1b} \\
M^a\!\!\!\!&=\!\!\!\!&\exp({\rm i}\alpha)\,m^a+O(w)
 \label{eq:4.4.1c}
\end{eqnarray}
with an irrelevant phase $\alpha$, which will be chosen to be zero. By means 
of the last two it is straightforward to give the explicit form of the 
projection $P^a_b=-v^av_b-M^a\bar M_a-\bar M^aM_b=-v^av_b-m^a\bar m_a-\bar m^a
m_b+O(w)$. By (\ref{eq:4.4.1b}), (\ref{eq:4.15}) and (\ref{eq:4.16}) the 
integral curves of $v^a$ (with parameter $\tilde w$) in the coordinates 
$(u,w,x^\mu)$ are 

\begin{equation}
u(\tilde w)=-\frac{W}{\vert T_e\vert}\tilde w +O(\tilde w^2), \hskip 20pt
w(\tilde w)=-\frac{1}{\vert T_e\vert}\tilde w +O(\tilde w^2), \hskip 20pt
x^\mu(\tilde w)=x^\mu(0)+O(\tilde w^2); \label{eq:4.4.1d}
\end{equation}
and their end points on $\mathscr{I}^+$ are at $\tilde w=0$. Hence these 
integral curves define a diffeomorphism between ${\cal S}_0$ and the surfaces 
$\hat{\cal S}_\Omega$. Moreover, $\tilde w$ coincides with the affine parameter 
$w$ in the first order up to a scale transformation (though this scale factor 
depends on the coordinate $x^\mu$ of the end point of the integral curves on 
$\mathscr{I}^+$). Thus, the coordinates $x^\mu$ of the end points of the 
integral curves can be extended from ${\cal S}_0$ to the whole asymptotic end 
of $\Sigma$ by $v^aD_a\tilde x^\mu=0$ with the initial condition $\tilde x^\mu
\approx x^\mu$. By the third of (\ref{eq:4.4.1d}) $\tilde x^\mu=x^\mu+O(\tilde
w^2)$. Therefore, $(\Omega,\tilde x^\mu)$, or rather $(\tilde r,\tilde x^\mu)$ 
with $\tilde r:=1/\Omega$, form a coordinate system on the asymptotic end of 
$\Sigma$. The radial coordinate $\tilde r$ coincides with $r$ in the first 
two orders: $\tilde r=r+O(r^{-1})$. The coordinate vector $(\partial/\partial
\Omega)^a$ is just the lapse times of the unit normal of the surfaces, $-v^a/
\vert D_e\Omega\vert$, with vanishing `shift part'. But by (\ref{eq:4.4.1a}) 

\begin{eqnarray}
\vert D_e\Omega\vert^2\!\!\!\!&=\!\!\!\!&-g^{ab}(D_a\Omega)(D_b\Omega)=
 -g^{11}+t^1t^1+O(w^2) \nonumber \\
\!\!\!\!&=\!\!\!\!&-\frac{1}{3}\Lambda+\frac{1}{\vert T_e\vert^2}\bigl(1+
 \frac{1}{3}\Lambda W+O(w)\bigr)^2+O(w^2)=\frac{1}{\vert T_e\vert^2}+O(w),
\label{eq:4.4.DO}
\end{eqnarray}
i.e. the lapse is $\tilde n=\vert T_e\vert+O(w)$. Therefore, in these 
coordinates, $(\partial/\partial\Omega)^a=-\vert T_e\vert v^a+O(\Omega)$, and 
the asymptotic form of the induced physical metric $\hat h_{ab}$ is 

\begin{equation}
d\hat h^2=-\frac{1}{\tilde{r}^2}\Bigl(\vert T_e\vert^2+O(\frac{1}{\tilde
r})\Bigr)d\tilde{r}^2+\tilde{r}^2\Bigl(R^2\,{}_0q_{\mu\nu}+O(\frac{1}{\tilde r})
\Bigr)d\tilde x^\mu\,d\tilde x^\nu, \label{eq:4.4.2}
\end{equation}
where $R$ is the conformal factor such that ${}_0q_{\mu\nu}=R^{-2}q_{\mu\nu}$ is 
the unit sphere metric, and the coordinates $\tilde x^\mu$ can be chosen to 
be the familiar angle or the complex stereographic coordinates in which ${}_0
q_{\mu\nu}$ takes the standard form of the unit sphere metric (see subsection 
\ref{sub-4.2}). We use this form of the metric in appendix \ref{sub-A.2}, 
and this form of the coordinate vector $(\partial/\partial\Omega)^a$ in 
subsections \ref{sub-4.5.2} and \ref{sub-4.6.1}. In particular, the 
asymptotic form of the induced volume element on the hypersurface is ${\rm d}
\hat\Sigma=\tilde rR^2\vert T_e\vert\,{\rm d}{\cal S}_0\,{\rm d}\tilde r$, 
where ${\rm d}{\cal S}_0$ is the area element on the unit sphere. 

The most convenient way to calculate the extrinsic curvature of $\Sigma$, 
both in the conformal and in the physical spacetime, is the use of the 
family $\Sigma_t$ of hypersurfaces and the coordinates $(t,w,x^\mu)$. We 
obtain that 

\begin{equation}
\hat\chi_{ab}=\frac{1}{\vert T_e\vert}\bigl(1+\frac{1}{3}\Lambda W\bigr)\hat
h_{ab}+O(\frac{1}{\tilde r}). \label{eq:4.4.3}
\end{equation}
Thus the physical extrinsic curvature of $\Sigma$ is asymptotically 
proportional to its intrinsic physical metric, just like in the case of 
spacelike hypersurfaces extending to the future null infinity of 
asymptotically flat spacetimes. Therefore, the leading terms in the 
asymptotic form of the metric $\hat h_{ab}$ and of the extrinsic curvature 
$\hat\chi_{ab}$ on the single hypersurface $\Sigma$ are determined by two 
functions on the cut ${\cal S}=\mathscr{I}^+\cap\Sigma$: the boost 
`parameter' $W$ and the conformal factor $R$.

\subsection{The boundary conditions from the Witten equation}
\label{sub-4.5}

\subsubsection{The fall-off and the algebraic boundary conditions}
\label{sub-4.5.1}

The boundary conditions of the renormalized Witten equations 
(\ref{eq:4.6}), given explicitly by 

\begin{equation}
\hat{\cal D}_{A'A}\hat\alpha^A+\frac{3}{2}K\bar{\hat\beta}_{A'}=0,
\hskip 20pt
\hat{\cal D}_{AA'}\bar{\hat\beta}^{A'}+\frac{3}{2}K\hat\alpha_A=0,
\label{eq:4.5.1}
\end{equation}
consist of two parts. (Here $K=\pm{\rm i}\sqrt{\Lambda/6}$, see the text 
following equation (\ref{eq:4.7}). Though the sign can be fixed without loss 
of generality, we leave this ambiguity in the formalism. All the sign 
ambiguities in what follows come from this ambiguity.) The first is an 
appropriate fall-off condition specifying how fast the spinor fields tend to 
their own asymptotic value at infinity, while the second is a condition on 
the asymptotic values of the spinor fields. In the present subsection we 
determine the first, and the part of the second that comes from the equations 
(\ref{eq:4.5.1}) themselves. 

To find these conditions, we rewrite (\ref{eq:4.5.1}) in the unphysical 
spacetime. We associate zero conformal weight to the contravariant form of 
the spinor fields, i.e. $\alpha^A=\hat\alpha^A$, $\beta^A=\hat\beta^A$, and, 
for the sake of simplicity, we \emph{a priori assume} that $\alpha^A$ and 
$\beta^A$ are smooth on $M$. Hence we can write their components, defined in 
the unphysical spinor dyad $\varepsilon^A_{\uA}:=\{o^A,\iota^A\}$ e.g. by 
$\alpha_{\uA}:=\alpha_A\varepsilon^A_{\uA}$, as 

\begin{equation}
\alpha_{\uA}=\alpha_{\uA}^{(0)}+\Omega \alpha_{\uA}^{(1)}+O(\Omega^2),
\hskip 20pt
\beta_{\uA}=\beta_{\uA}^{(0)}+\Omega\beta_{\uA}^{(1)}+O(\Omega^2);
\label{eq:4.5.2a}
\end{equation}
where the functions $\alpha_{\uA}^{(0)}, ... ,\beta_{\uA}^{(1)}$ depend only on 
the coordinates $\tilde x^\mu$, i.e. they are functions on ${\cal S}=\Sigma
\cap\mathscr{I}^+$. (The subsequent analysis shows that even a slightly less 
restrictive condition, viz. $\alpha_{\uA}=\alpha^{(0)}_{\uA}+\Omega^k\alpha^{(1)}
_{\uA}+O(\Omega^{k+1})$, $k>1/2$, would already be enough.) Since $\hat\alpha_A
=\Omega^{-1}\alpha_A$ and $\hat\beta_A=\Omega^{-1}\beta_A$ hold, by 
(\ref{eq:4.18b}) the components of the spinor fields in the physical spinor 
dyad $\hat\varepsilon^A_{\uA}=\{\hat o^A,\hat\iota^A\}=\Omega^{\frac{1}{2}}
\varepsilon^A_{\uA}$, ${\uA}=0,1$, have the asymptotic form 

\begin{equation}
\hat\alpha_{\uA}=\Omega^{-\frac{1}{2}}\Bigl(\alpha_{\uA}^{(0)}+\Omega\alpha_{\uA}
 ^{(1)}+O(\Omega^2)\Bigr), \hskip 20pt
\hat\beta_{\uA}=\Omega^{-\frac{1}{2}}\Bigl(\beta_{\uA}^{(0)}+\Omega\beta_{\uA}^{(1)}
 +O(\Omega^2)\Bigr). \label{eq:4.5.2b}
\end{equation}
Thus, the components of the spinor fields in the physical spacetime 
\emph{diverge} as $\sqrt{\tilde r}$. 

In the unphysical spacetime (\ref{eq:4.5.1}) takes the form 

\begin{eqnarray}
&{}&0={\cal D}_{A'A}\alpha^A+\frac{3}{2}\Omega^{-1}\Bigl(K\bar\beta_{A'}-(\nabla
  _{A'A}\Omega)\alpha^A\Bigr), \label{eq:4.5.3a} \\
&{}&0={\cal D}_{AA'}\bar\beta^{A'}+\frac{3}{2}\Omega^{-1}\Bigl(K\alpha_A-
  (\nabla_{AA'}\Omega)\bar\beta^{A'}\Bigr). \label{eq:4.5.3b}
\end{eqnarray}
Let us recall that near $\mathscr{I}^+=\{\Omega=0\}$ the unit normal of the 
$\Omega={\rm const}$ hypersurfaces is $N_a=-\sqrt{3/\Lambda}(\nabla_a\Omega)+
O(\Omega^2)$ (see equation (\ref{eq:2.2.2c})). Thus, multiplying these 
equations by $\Omega$ and evaluating at $\Omega=0$, we obtain 

\begin{equation}
\pm{\rm i}\bar\beta_{A'}\approx-\sqrt{2}N_{A'A}\alpha^A, \hskip 20pt
\pm{\rm i}\alpha_A\approx-\sqrt{2}N_{AA'}\bar\beta^{A'}. \label{eq:4.5.4}
\end{equation}
These are not independent, one implies the other. Therefore, the pair 
$(\alpha^A,\beta^A)$ of spinor fields can solve (\ref{eq:4.5.1}) only if the 
asymptotic value of one of them determines the other at the conformal 
boundary according to (\ref{eq:4.5.4}). In terms of the spinor components 
(\ref{eq:4.5.4}) is equivalent to $\alpha_0^{(0)}=\pm{\rm i}\bar\beta_{1'}
^{(0)}$ and $\alpha_1^{(0)}=\mp{\rm i}\bar\beta_{0'}^{(0)}$. Thus the two spinor 
fields are linked to each other, but they are still not specified at the 
conformal boundary.

\subsubsection{The asymptotic structure of the solution}
\label{sub-4.5.2}

Since by (\ref{eq:4.5.4}) $K\bar\beta_{A'}-(\nabla_{A'A}\Omega)\alpha^A
\approx0$, we may write 

\begin{equation*}
\Omega\bar\gamma_{A'}:=K\bar\beta_{A'}-(\nabla_{A'A}\Omega)\alpha^A
\end{equation*}
for some smooth spinor field $\bar\gamma_{A'}$ on $M$. In terms of $\alpha^A$ 
and $\bar\gamma^{A'}$ the renormalized Witten equations are 

\begin{eqnarray}
0\!\!\!\!&=\!\!\!\!&{\cal D}_{A'A}\alpha^A+\frac{3}{2}\bar\gamma_{A'},
 \label{eq:4.5.6a} \\
0\!\!\!\!&=\!\!\!\!&\Omega{\cal D}_{AA'}\bar\gamma^{A'}+\bar\gamma^{A'}(D
 _{A'A}\Omega)+P^c_{AA'}\varepsilon^{A'B'}\bigl(\nabla_c\nabla_b\Omega\bigr)
 \alpha^B+(\nabla^{A'}{}_B\Omega){\cal D}_{A'A}\alpha^B
 \nonumber \\
\!\!\!\!&+\!\!\!\!&\frac{3}{2}\Omega^{-1}\Bigl(K^2+\frac{1}{2}(\nabla_b
 \Omega)(\nabla^b\Omega)\Bigr)\alpha_A-\frac{3}{2}\bar\gamma^{A'}\nabla_{A'A}
 \Omega. \label{eq:4.5.6b}
\end{eqnarray}
Using the 3+1 decomposition $\nabla_e\Omega=D_e\Omega+t_et^f\nabla_f\Omega$ 
and equations (\ref{eq:2.2.2b}), (\ref{eq:2.2.2c}) and (\ref{eq:4.5.6a}), 
the second of these takes the form 

\begin{equation*}
0=(D^{A'}{}_B\Omega)\Bigl({\cal D}_{A'A}\alpha^B+\frac{1}{2}\bar\gamma_{A'}
\delta^B_A\Bigr)+\Omega\Bigl({\cal D}_{AA'}\bar\gamma^{A'}+P^c_{AA'}\varepsilon
^{A'B'}\bigl(\frac{1}{4}g_{bc}\Psi+\Psi_{bc}\bigr)\alpha^B+\frac{3}{2}\Phi
\alpha_A\Bigr).
\end{equation*}
Evaluating this equation at $\Omega=0$, we find that 

\begin{equation}
v^{A'}{}_B{\cal D}_{A'A}\alpha^B+\frac{1}{2}\bar\gamma_{A'}v^{A'}{}_A\approx0,
\label{eq:4.5.7}
\end{equation}
where $v_a:= D_a\Omega/\vert D_e\Omega\vert$ (see subsection \ref{sub-4.4}). 
Let $\Pi^a_b:=\delta^a_b-t^at_b+v^av_b=P^a_b+v^av_b$, the orthogonal projection 
to the 2-surfaces $\hat{\cal S}_\Omega$, and define $\Delta_a:=\Pi^b_a\nabla_b$, 
the two-dimensional version of the Sen connection on the 2-surfaces. Then the 
2+1 decomposition of the derivative ${\cal D}_a$ by $\Delta_a$ and the 
directional derivative $v^e{\cal D}_e$ yields that (\ref{eq:4.5.7}) has the 
form 

\begin{equation}
\Delta_{A'A}\alpha^A-v^e({\cal D}_e\alpha_A)v^A{}_{A'}+\frac{1}{2}\bar\gamma
_{A'}\approx0. \label{eq:4.5.8}
\end{equation}
Here we used that $v^{A'}{}_A\Delta_{A'B}=v^{A'}{}_B\Delta_{A'A}$ and $2v^{AA'}
v_{AB'}=-\delta^{A'}_{B'}$. On the other hand, after a similar decomposition 
the first of the renormalized Witten equations, equation (\ref{eq:4.5.6a}), 
yields 

\begin{equation}
\Delta_{A'A}\alpha^A+v^e({\cal D}_e\alpha_A)v^A{}_{A'}+\frac{3}{2}\bar\gamma
_{A'}\approx0. \label{eq:4.5.9}
\end{equation}
By (\ref{eq:4.5.8}) and (\ref{eq:4.5.9}) 

\begin{equation}
\Delta_{A'A}\alpha^A+\bar\gamma_{A'}=O(\Omega), \hskip 20pt
v^e{\cal D}_e\alpha^A+v^{AA'}\bar\gamma_{A'}=O(\Omega);
\label{eq:4.5.10a}
\end{equation}
which imply, in particular, that 

\begin{equation}
v^e{\cal D}_e\alpha^A\approx v^{AA'}\Delta_{A'B}\alpha^B. \label{eq:4.5.10}
\end{equation}
Clearly, there is a similar relationship between the tangential and radial 
derivatives of the spinor field $\beta^A$, too. Hence, the radial and 
tangential derivatives of the spinor fields on the cut are linked together. 
To evaluate this, we rewrite it into its GHP form. 

Contracting (\ref{eq:4.5.10}) with the vectors of the spinor dyad, using the 
asymptotic form of the spin coefficients (in the gauge $\varepsilon^0=
\varepsilon^{\prime 0}=0$), equations (\ref{eq:4.4.1c}) and (\ref{eq:4.4.1b}), 
$v^a=-\vert T_e\vert^{-1}(\partial/\partial\Omega)^a+O(\Omega)$ and the 
expansion (\ref{eq:4.5.2a}), we obtain that 

\begin{equation}
\alpha^{(1)}_0=-\sqrt{\frac{\Lambda}{6}}W\Bigl({\edth}\alpha^{(0)}_1-\rho^0
 \alpha^{(0)}_0\Bigr), \hskip 20pt
\alpha^{(1)}_1=\sqrt{\frac{6}{\Lambda}}\bigl(1+\frac{1}{6}\Lambda W\bigr)
 \Bigl({\edth}^\prime\alpha^{(0)}_0+\rho^0\alpha^{(0)}_1\Bigr); \label{eq:4.5.11}
\end{equation}
and there are analogous formulae for the expansion coefficients $\beta^{(1)}_0$  
and $\beta^{(1)}_1$, too. Here ${\edth}$ and ${\edth}'$ are the standard GHP 
edth operators on ${\cal S}$ \cite{GHP}. Thus, \emph{in the solutions of the 
renormalized Witten equation the $\Omega=1/\tilde r$ order terms in their 
asymptotic expansion are determined completely by their boundary value and 
the boost gauge defined by $\Sigma$ (and represented by $W$) at 
$\mathscr{I}^+$}. In particular, on asymptotically null hypersurfaces 
$\alpha^{(1)}_0$ and $\beta^{(1)}_0$ would be vanishing. Therefore, with the 
definitions 

\begin{equation}
{}_0\alpha_A:=\bigl(\alpha^{(0)}_{\uA}+\Omega\alpha^{(1)}_{\uA}\bigr)\varepsilon
 ^{\uA}_A,   \hskip 20pt
{}_0\beta_A:=\bigl(\beta^{(0)}_{\uA}+\Omega\beta^{(1)}_{\uA}\bigr)\varepsilon
 ^{\uA}_A \label{eq:4.5.12}
\end{equation}
the solution of the renormalized Witten equations has the asymptotic form 

\begin{equation}
\hat\alpha^A=\alpha^A={}_0\alpha^A+\hat\sigma^A, \hskip 20pt
\bar{\hat\beta}^{A'}=\bar\beta^{A'}={}_0\bar\beta^{A'}+\bar{\hat\pi}^{A'},
\label{eq:4.5.13}
\end{equation}
where e.g. $\alpha^{(0)}_{\uA}$ determines $\beta^{(0)}_{\uA}$ through the 
algebraic boundary condition (\ref{eq:4.5.4}), the coefficients $\alpha^{(1)}
_{\uA}$ satisfy (\ref{eq:4.5.11}), $\beta^{(1)}_{\uA}$ satisfy the analogous 
equation (and hence also determined by $\alpha^{(0)}_{\uA}$), and the 
components of $\hat\sigma^A$ and $\bar{\hat\pi}^{A'}$ in the unphysical spin 
frame $\varepsilon^A_{\uA}$ are of order $\Omega^2$. (N.B.: The dual spin 
frame is $\varepsilon^{\uA}_A=-\epsilon^{\uA\uB}\varepsilon^B_{\uB}\varepsilon
_{BA}$, where $\epsilon^{\uA\uB}$ is the anti-symmetric Levi-Civita symbol.)

\subsubsection{Example: The de Sitter spacetime}
\label{sub-4.5.3}

In the positivity and rigidity proofs in subsections \ref{sub-4.6.2} and 
\ref{sub-4.6.3} we need to know some of the properties of the de Sitter 
spacetime. It is known that in this spacetime the differential equation 

\begin{equation}
\hat\nabla_e\Psi^\alpha=\pm\frac{\rm i}{\sqrt{2}}\sqrt{\frac{\Lambda}{6}}
\hat\gamma^\alpha_{e\beta}\Psi^\beta \label{eq:4.5.14}
\end{equation}
is completely integrable (for either sign on the right), where $\Psi^\alpha=
(\alpha^A,\bar\beta^{A'})$ (as a column vector) and the Dirac `matrices' are 
given explicitly in terms of the metric spinor by 

\begin{equation*}
\gamma^\alpha_{e\beta}=\sqrt{2}\left(\begin{array}{cc}
          0&\varepsilon_{E'B'}\delta^A_E \\
      \varepsilon_{EB}\delta^{A'}_{E'}&0 \end{array}\right)
\end{equation*}
(see e.g. \cite{PR1}, pp 221). Hence it admits four linearly independent 
solutions and these solutions can be specified by prescribing $\Psi^\alpha$ at 
any given point of the spacetime. In fact, its Weyl spinor constituents solve 
the 1--valence twistor equation such that the primary spinor part of one 
twistor is just the secondary spinor part of the other \cite{Sz13}; and the 
solutions of (\ref{eq:4.5.14}) also solve (\ref{eq:4.5.1}) on any spacelike 
hypersurface $\Sigma$. In flat spacetime ($\Lambda=0$) its solutions are just 
the spinor constituents of the \emph{translational} Killing vectors. Thus the 
solutions of (\ref{eq:4.5.14}) are the spinor constituents of what 
substitutes the translational Killing fields in de Sitter spacetime most 
naturally. (For a more detailed discussion of the geometry of the de Sitter 
spacetime and the twistor equation, see e.g. \cite{Tod15}.) 

To find the explicit solutions, let us rewrite (\ref{eq:4.5.14}) in the GHP 
formalism. In the coordinate system based on a spherically symmetric foliation 
of $\mathscr{I}^+$ and the GHP spin frame (up to phase transformation of the 
complex null vectors) of subsection \ref{sub-4.2} and \ref{sub-4.3}, 
respectively, the only non-zero GHP spin coefficients are 

\begin{eqnarray*}
&{}&\hat\rho=\sqrt{\frac{\Lambda}{6}}\sin\bigl(\sqrt{\frac{3}{\Lambda}}
  w\bigr)\cot\bigl(\sqrt{\frac{\Lambda}{3}}u-\frac{\pi}{2}+\sqrt{\frac{3}
  {\Lambda}}w\bigr)-\sqrt{\frac{\Lambda}{6}}\cos\bigl(\sqrt{\frac{3}
  {\Lambda}}w\bigr), \\
&{}&\hat\rho^\prime=-\sqrt{\frac{\Lambda}{6}}\sin\bigl(\sqrt{\frac{3}
  {\Lambda}}w\bigr)\cot\bigl(\sqrt{\frac{\Lambda}{3}}u-\frac{\pi}{2}+\sqrt{
  \frac{3}{\Lambda}}w\bigr)-\sqrt{\frac{\Lambda}{6}}\cos\bigl(\sqrt{
  \frac{3}{\Lambda}}w\bigr), \\
&{}&\hat\beta=\bar{\hat\beta}^\prime=-\frac{1}{2\sqrt{2}}\sqrt{\frac{\Lambda}
  {3}}\frac{\sin\bigl(\sqrt{\frac{3}{\Lambda}}w\bigr)}{\sin\bigl(\sqrt{
  \frac{\Lambda}{3}}u-\frac{\pi}{2}+\sqrt{\frac{3}{\Lambda}}w\bigr)}\zeta, \\
&{}&\hat\varepsilon={\hat\varepsilon}^\prime=\frac{1}{2}\sqrt{\frac{\Lambda}
  {6}}\cos\bigl(\sqrt{\frac{3}{\Lambda}}w\bigr).
\end{eqnarray*}
Then, contracting (\ref{eq:4.5.14}) with $\hat o^E\bar{\hat o}^{E'}$ we obtain 
the so-called radial equations (i.e. the parts of (\ref{eq:4.5.14}) tangential 
to the null geodesic generators of the null hypersurfaces ${\cal N}_u$), 
whose solution is given by 

\begin{eqnarray*}
&{}&\hat\alpha_0=\frac{\alpha_0^{(0)}}{\sqrt{\sin\bigl(\sqrt{\frac{3}
 {\Lambda}}w\bigr)}}, \hskip 45pt
 \bar{\hat\beta}_{0'}=\frac{\bar\beta_{0'}^{(0)}}{\sqrt{\sin\bigl(
 \sqrt{\frac{3}{\Lambda}}w\bigr)}}, \\
&{}&\hat\alpha_1=\frac{1}{\sqrt{\sin\bigl(\sqrt{\frac{3}{\Lambda}}w\bigr)}}
 \Bigl(\mp{\rm i}\bar\beta_{0'}^{(0)}\cos\bigl(\sqrt{\frac{3}{\Lambda}}w
 \bigr)+\sin\bigl(\sqrt{\frac{3}{\Lambda}}w\bigr)A_1\Bigr), \\
&{}&\bar{\hat\beta}_{1'}=\frac{1}{\sqrt{\sin\bigl(\sqrt{\frac{3}{\Lambda}}w
 \bigr)}}\Bigl(\mp{\rm i}\alpha_0^{(0)}\cos\bigl(\sqrt{\frac{3}{\Lambda}}w
 \bigr)+\sin\bigl(\sqrt{\frac{3}{\Lambda}}w\bigr)\bar B_{1'}\Bigr).
\end{eqnarray*}
Here $\alpha_0^{(0)}$, $\bar\beta_{0'}^{(0)}$, $A_1$ and $\bar B_{1'}$ are still 
to be determined functions of $u$ and $x^\mu$. Thus, in particular, this 
solution is compatible with both the general fall off properties 
(\ref{eq:4.5.2a}) and the algebraic boundary conditions (\ref{eq:4.5.4}). 

The contraction of (\ref{eq:4.5.14}) with $\hat o^E\bar{\hat\iota}^{E'}$ and 
with $\hat\iota^E\bar{\hat o}^{E'}$ give the so-called surface equations, 
i.e. the ones tangential to the $u={\rm const}$, $w={\rm const}$ 2-spheres. 
Substituting the solution of the radial equations here we obtain, in 
particular, that 

\begin{equation*}
{}_0{\edth}\alpha_0^{(0)}=0, \hskip 20pt
{}_0{\edth}^\prime\bar\beta_{0'}^{(0)}=0, \hskip 20pt
{}_0{\edth}^\prime A_1=0, \hskip 20pt
{}_0{\edth}\bar B_{1'}=0,
\end{equation*}
where ${}_0{\edth}$ and ${}_0{\edth}^\prime$ denote the standard edth operators 
on the \emph{unit sphere} \cite{PR1}. The solution of these equations is well 
known to be 

\begin{equation*}
\alpha_0^{(0)}=\sum_ma_m\,{}_{\frac{1}{2}}Y_{\frac{1}{2}\,m}, \hskip 8pt
\bar{\beta}_{0'}^{(0)}=\sum_mb_m\,{}_{-\frac{1}{2}}Y_{\frac{1}{2}\,m}, \hskip 8pt
A_1=\sum_mA_m\,{}_{-\frac{1}{2}}Y_{\frac{1}{2}\,m}, \hskip 8pt
\bar{B}_{1'}=\sum_mB_m\,{}_{\frac{1}{2}}Y_{\frac{1}{2}\,m},
\end{equation*}
where ${}_{\pm\frac{1}{2}}Y_{\frac{1}{2}\,m}$ are the $\pm\frac{1}{2}$ spin weighted 
spherical harmonics, $m=-\frac{1}{2},\frac{1}{2}$, and the coefficients $a_m$, 
$b_m$, $A_m$ and $B_m$ are still not specified functions of $u$. Substituting 
these into the remaining four surface equations and using how the edth 
operators act on the spin weighted spherical harmonics we find that $A_m$ and 
$B_m$ are determined by $a_m$ and $b_m$ according to 

\begin{eqnarray*}
&{}&\sin\bigl(\frac{\pi}{2}+\sqrt{\frac{\Lambda}{3}}u\bigr)\,A_m=a_m\mp
 {\rm i}\cos\bigl(\frac{\pi}{2}+\sqrt{\frac{\Lambda}{3}}u\bigr)\,b_m, \\
&{}&\sin\bigl(\frac{\pi}{2}+\sqrt{\frac{\Lambda}{3}}u\bigr)\,B_m=-b_m\mp
 {\rm i}\cos\bigl(\frac{\pi}{2}+\sqrt{\frac{\Lambda}{3}}u\bigr)\,a_m;
\end{eqnarray*}
but $a_m$ and $b_m$ remain independent. It might be worth noting that these 
are just the conditions (\ref{eq:4.5.11}), in which we substitute $W=0$ 
(since the hypersurface on which the spinor components are expanded is null), 
the radius of the $u={\rm const}$ cut is $R=\sin(\frac{\pi}{2}-\sqrt{
\frac{\Lambda}{3}}u)$ (see the line element of the Einstein universe in 
subsection \ref{sub-4.2}) and $\rho^0=\frac{1}{\sqrt{2}}\cot(\sqrt{\frac{
\Lambda}{3}}u-\frac{\pi}{2})$. The $u$-dependence of $a_m$ and $b_m$ is 
determined by the so-called evolution equations, obtained by contracting 
(\ref{eq:4.5.14}) with $\hat\iota^E\bar{\hat\iota}^{E'}$, and the whole 
solution is completely determined by the value of the four functions $a_m$ 
and $b_m$ e.g. at $u=0$. Therefore, the solution $(\alpha^A,\bar\beta^{A'})$ 
of equation (\ref{eq:4.5.14}) is completely determined by its spinor 
components $\alpha_0=\alpha_Ao^A$ and $\bar\beta_{0'}=\bar\beta_{A'}\bar o^{A'}$ 
on one $u={\rm const}$ cut of the conformal boundary. 

Let ${\cal S}_u$ denote the $u={\rm const}$ cut. This can be considered as 
the intersection of some spherically symmetric spacelike hypersurface 
$\Sigma$ and the conformal infinity, and hence we should ask how the 
solutions $\alpha^A$ and $\beta^A$ on ${\cal S}_u$ could be recovered 
\emph{purely} in terms of the geometry of ${\cal S}_u$. Since on $\mathscr{I}
^+$ $\alpha^A$ determines $\beta^A$ algebraically, it is enough to consider 
$\alpha^A$. For its structure we obtained $\alpha_0=\sum_ma_m\,{}_{\frac{1}{2}}
Y_{\frac{1}{2}m}$ and $\alpha_1=\mp{\rm i}\sum_mb_m\,{}_{-\frac{1}{2}}Y
_{\frac{1}{2}m}$. These are the general solution of 

\begin{equation}
{}_0{\edth}\alpha_0=0, \hskip 20pt
{}_0{\edth}^\prime\alpha_1=0, \label{eq:4.5.15}
\end{equation}
which are the \emph{2-surface twistor equations} on the spherically symmetric 
${\cal S}_u$.

\subsection{The positivity and rigidity of $H$}
\label{sub-4.6}

In this subsection we begin the proof of existence for solutions of the
Witten equation. We show that the requirement of the finiteness of the
functional $H[\alpha,\bar\alpha]+H[\beta,\bar\beta]$ on the solutions of the
renormalized Witten equations (\ref{eq:4.5.1}) yields that the spinor
fields must solve the 2-surface twistor equations on the conformal boundary,
and with these boundary values they have a unique solution, controlled by
the boundary value of e.g. $\alpha^A$, provided the matter fields satisfy the
dominant energy condition. This implies that $H[\alpha,\bar\alpha]+H[\beta,
\bar\beta]$ is non-negative for such boundary values (positivity), and that
it is vanishing precisely when the domain of dependence of the spacelike
hypersurface $\Sigma$ is locally isometric to the de Sitter spacetime
(rigidity).

\subsubsection{Boundary conditions from the finiteness of $H$: The 2-surface
twistor equations}
\label{sub-4.6.1}

Since $K$ in (\ref{eq:4.4}) is imaginary, the finiteness of $H[\alpha,\bar
\alpha]+H[\beta,\bar\beta]$ defined in the physical spacetime is ensured by
the finiteness of the integral on the right of (\ref{eq:4.7}). Since we
associated zero conformal weight to the (contravariant form of the) spinor
fields, moreover under the conformal rescaling of the spacetime metric the
volume element changes according to ${\rm d}\hat\Sigma=\Omega^{-3}{\rm d}
\Sigma$, the integral of the second term on the right hand side of
(\ref{eq:4.7}) is finite precisely when this term falls off \emph{slightly
faster} than $\Omega^2$, i.e. when

\begin{equation}
o\bigl(\Omega^2\bigr)=\hat t_a\hat T^a{}_b\bigl(\hat\alpha^B\bar{\hat\alpha}
^{B'}+\hat\beta^B\bar{\hat\beta}^{B'}\bigr)=\Omega^2t_aT^a{}_b\bigl(\alpha^B\bar
\alpha^{B'}+\beta^B\bar\beta^{B'}\bigr). \label{eq:4.6.1}
\end{equation}
Hence the rescaled energy-momentum tensor $T^a{}_b$ must tend to zero at the
conformal boundary, i.e. the physical energy-momentum tensor must fall off as
$\hat T^a{}_b=o(\Omega^3)$, \emph{slightly faster} than $\Omega^3$. Therefore,
assuming the smoothness of the rescaled energy-momentum tensor on $M$, we
must require $\hat T^a{}_b=O(\Omega^4)$ (rather than only $\hat T^a{}_b=\Omega
^3T^a{}_b$). Thus, in particular, dust-like matter cannot be present near the
conformal boundary.

To determine the condition of the finiteness of the integral of the first
term on the right of (\ref{eq:4.7}), we rewrite the derivative $\tilde{\cal
D}_e\hat\alpha_A$ into a more familiar form. It is known that

\begin{equation}
\sqrt{2}\hat t_F{}^{E'}\hat{\cal D}_{EE'}\hat\alpha_A=\hat{\cal D}_{(EF}\hat
\alpha_{A)}+\sqrt{2}\hat t_F{}^{E'}\frac{2}{3}P_{EE'}^{DD'}\hat\varepsilon_{DA}
\hat{\cal D}_{D'C}\hat\alpha^C \label{eq:4.6.2}
\end{equation}
is the complete algebraically irreducible, $\hat t^{AA'}$--orthogonal
decomposition of the derivative. Here $\hat{\cal D}_{AB}:=\sqrt{2}\hat t_B
{}^{A'}\hat{\cal D}_{AA'}=\hat{\cal D}_{(AB)}$ is the unitary spinor form of
$\hat{\cal D}_{AA'}$, and the totally symmetric part $\hat{\cal D}_{(AB}\hat
\alpha_{C)}$ defines the 3-surface twistor operator \cite{Tod84}. Substituting
this decomposition into the explicit expression of $\tilde{\cal D}_e\hat
\alpha_A$ given by (\ref{eq:4.3}) and using the renormalized Witten
equations (\ref{eq:4.5.1}), we find that \emph{it is precisely the 3-surface
twistor operator acting on $\hat\alpha_A$}, i.e.

\begin{eqnarray}
H\bigl[\alpha,\bar\alpha\bigr]+H\bigl[\beta,\bar\beta\bigr]\!\!\!\!&=
 \!\!\!\!&\int_\Sigma\Bigl\{\frac{4}{\varkappa}\hat t^{AA'}\hat t^{BB'}\hat t^{CC'}
 \Bigl((\hat{\cal D}_{(AB}\hat\alpha_{C)})(\hat{\cal D}_{(A'B'}\bar{\hat\alpha}
 _{C')}) \label{eq:4.6.3} \\
\!\!\!\!&+\!\!\!\!&(\hat{\cal D}_{(AB}\hat\beta_{C)})(\hat{\cal D}_{(A'B'}\bar
 {\hat\beta}_{C')})\Bigr)+\hat t_a\hat T^a{}_b(\hat\alpha^B\bar{\hat\alpha}^{B'}
 +\hat\beta^B\bar{\hat\beta}^{B'})\Bigr\}{\rm d}\hat\Sigma. \nonumber
\end{eqnarray}
It has exactly the same structure that the components of the ADM,
Bondi--Sachs and Abbott--Deser energy-momenta and the total mass of closed
universes (with $\Lambda\geq 0$) have in their spinorial form
\cite{HT,RT,Gibbonsetal,Sz12,Sz13}: It is the sum of the square of the
$L_2$-norm of the 3-surface twistor derivative of the spinor field satisfying
the gauge condition and the integral of the energy-momentum of the matter
fields.

Returning to the question of the finiteness of the integral of the first term
on the right of (\ref{eq:4.7}), (\ref{eq:4.6.3}) shows that $\hat{\cal D}_{(AB}
\hat\alpha_{C)}$ and $\hat{\cal D}_{(AB}\hat\beta_{C)}$ \emph{must be square
integrable} on $\Sigma$ in the physical spacetime. Since the 3-surface twistor
operator is conformally covariant, viz. $\hat{\cal D}_{(AB}(\Omega^{-1}\alpha
_{C)})=\Omega^{-1}{\cal D}_{(AB}\alpha_{C)}$, this condition is equivalent to the

\begin{equation}
t^{AA'}t^{BB'}t^{CC'}\bigl({\cal D}_{(AB}\alpha_{C)}\bigr)\bigl({\cal D}_{(A'B'}
\bar\alpha_{C')}\bigr)=o(\Omega) \label{eq:4.6.4}
\end{equation}
fall-off in the unphysical spacetime, and to an analogous one for $\beta_A$.
By (\ref{eq:4.4.1a}) this is equivalent to

\begin{equation*}
o^Ao^Bo^C\bigl({\cal D}_{AB}\alpha_{C}\bigr), \hskip 5pt
o^Ao^B\iota^C\bigl({\cal D}_{(AB}\alpha_{C)}\bigr), \hskip 5pt
o^A\iota^B\iota^C\bigl({\cal D}_{(AB}\alpha_{C)}\bigr), \hskip 5pt
\iota^A\iota^B\iota^C\bigl({\cal D}_{AB}\alpha_{C}\bigr)=o(\Omega^{\frac{1}{2}}).
\end{equation*}
Then, by (\ref{eq:4.4.1a}) and the asymptotic value of the GHP spin
coefficients in the conformal spacetime, we obtain that

\begin{eqnarray}
o^Ao^Bo^C\bigl({\cal D}_{AB}\alpha_{C}\bigr)\!\!\!\!&=\!\!\!\!&o^Ao^B\sqrt{2}
 t_B{}^{A'}\bigl(\nabla_{AA'}\alpha_C\bigr)o^C=-\sqrt{\frac{\Lambda}{3}}
 \frac{W}{\vert T_e\vert}m^a\bigl(\nabla_a\alpha_C\bigr)o^C+O(\Omega)
 \nonumber \\
\!\!\!\!&=\!\!\!\!&-\sqrt{\frac{\Lambda}{3}}\frac{W}{\vert T_e\vert}\Bigl(
 {\edth}\alpha^{(0)}_0+\sigma^0\alpha^{(0)}_1\Bigr)+O(\Omega),\label{eq:4.6.4a}\\
\iota^A\iota^B\iota^C\bigl({\cal D}_{AB}\alpha_{C}\bigr)\!\!\!\!&=\!\!\!\!&
 \sqrt{\frac{12}{\Lambda}}\frac{1}{\vert T_e\vert}\bigl(1+\frac{1}{6}\Lambda
 W\bigr)\Bigl({\edth}^\prime\alpha^{(0)}_1-\sigma^0\alpha^{(0)}_0\Bigr)+
 O(\Omega). \label{eq:4.6.4b}
\end{eqnarray}
Using (\ref{eq:4.4.1a})-(\ref{eq:4.4.1c}), $\vert T_e\vert^2=\frac{1}{3}
\Lambda W^2+2W+O(\Omega)$ and $v^a=-\vert T_e\vert^{-1}(\partial/\partial
\Omega)^a+O(\Omega)$, we find that

\begin{eqnarray*}
3\!\!\!\!&{}\!\!\!\!&o^Ao^B\iota^C\bigl({\cal D}_{(AB}\alpha_{C)}\bigr)=o^Ao^B
 \sqrt{2}t_B{}^{A'}P^e_a\bigl(\nabla_e\alpha_C\bigr)\iota^C+2o^A\iota^B
 \sqrt{2}t_B{}^{A'}P^e_a\bigl(\nabla_e\alpha_C\bigr)o^C \\
=\!\!\!\!&{}\!\!\!\!&-\sqrt{\frac{\Lambda}{3}}\frac{W}{\vert T_e\vert}m^a\bigl(
 \nabla_a\alpha_C\bigr)\iota^C+\frac{2\sqrt{2}}{\vert T_e\vert}\sqrt{
 \frac{6}{\Lambda}}\bigl(1+\frac{1}{6}\Lambda W\bigr)o^A\bar o^{A'}P^b_a
 \bigl(\nabla_b\alpha_C\bigr)o^C+O(\Omega)\\
=\!\!\!\!&{}\!\!\!\!&-\sqrt{\frac{\Lambda}{3}}\frac{W}{\vert T_e\vert}\bigl(
 {\edth}\alpha^{(0)}_1-\rho^0\alpha^{(0)}_0\bigr)+2\sqrt{2}\frac{W}{\vert T_e
 \vert^2}\bigl(1+\frac{1}{6}\Lambda W\bigr)v^a\bigl({\cal D}_a\alpha_C\bigr)
 o^C+O(\Omega) \\
=\!\!\!\!&{}\!\!\!\!&-\frac{W}{\vert T_e\vert}\Bigl(\sqrt{\frac{\Lambda}{3}}
 \bigl({\edth}\alpha^{(0)}_1-\rho^0\alpha^{(0)}_0\bigr)+\frac{2\sqrt{2}}{\vert
 T_e\vert^2}\bigl(1+\frac{1}{6}\Lambda W\bigr)\alpha^{(1)}_0\Bigr) \\
\!\!\!\!&{}\!\!\!\!&+\frac{2\sqrt{2}W}{\vert T_e\vert^3}\bigl(1+\frac{1}{6}
 \Lambda W\bigr)\Bigl(\sqrt{\frac{6}{\Lambda}}\bigl(1+\frac{1}{6}\Lambda W
 \bigr)\bigl(\kappa\alpha_1-\varepsilon\alpha_0\bigr)-\sqrt{\frac{\Lambda}{6}}
 W\bigl(\tau\alpha_1+\varepsilon^\prime\alpha_0\bigr)\Bigr)+O(\Omega).
\end{eqnarray*}
Then by (\ref{eq:4.5.11}) and the asymptotic form of the GHP spin coefficients
this, and the analogous expression for $o^A\iota^B\iota^C({\cal D}_{(AB}\alpha
_{C)})$, yield that

\begin{equation}
o^Ao^B\iota^C\bigl({\cal D}_{(AB}\alpha_{C)}\bigr)=O(\Omega), \hskip 20pt
o^A\iota^B\iota^C\bigl({\cal D}_{(AB}\alpha_{C)}\bigr)=O(\Omega).
\label{eq:4.6.4c}
\end{equation}
Therefore, by (\ref{eq:4.6.4c}) $o^Ao^B\iota^C({\cal D}_{(AB}\alpha_{C)})$ and
$o^A\iota^B\iota^C({\cal D}_{(AB}\alpha_{C)})$ fall off appropriately, but by
(\ref{eq:4.6.4a}) and (\ref{eq:4.6.4b}) the condition (\ref{eq:4.6.4}) is
satisfied precisely when

\begin{equation}
{\edth}\alpha^{(0)}_0+\sigma^0\alpha^{(0)}_1=0, \hskip 20pt
{\edth}^\prime\alpha^{(0)}_1-\sigma^0\alpha^{(0)}_0=0 \label{eq:4.6.4d}
\end{equation}
also hold, i.e. \emph{if the spinor field $\alpha_A$ on the cut ${\cal S}=
\Sigma\cap\mathscr{I}^+$ solves the 2-surface twistor equations} of Penrose
\cite{PR2}. In particular, by (\ref{eq:4.5.15}) the solution of
(\ref{eq:4.5.14}) in the de Sitter spacetime also satisfies this condition.

\subsubsection{Positivity}
\label{sub-4.6.2}

Assuming that the matter fields satisfy the dominant energy condition, the
proof of the non-negativity of $H[\alpha,\bar\alpha]+H[\beta,\bar\beta]$
reduces to the proof of the existence of solutions of (\ref{eq:4.5.1}) with
the boundary values satisfying (\ref{eq:4.5.4}) and (\ref{eq:4.6.4d}). To
prove this existence, let us use the decomposition
(\ref{eq:4.5.12})-(\ref{eq:4.5.13}) of the spinor fields in (\ref{eq:4.5.1}).
Then the homogeneous renormalized Witten equations take the form of the
system

\begin{eqnarray}
&{}&\hat{\cal D}_{A'A}\hat\sigma^A+\frac{3}{2}K\bar{\hat\pi}_{A'}=-\Bigl(
 \hat{\cal D}_{A'A}\,{}_0\hat\alpha^A+\frac{3}{2}K\,{}_0\bar{\hat\beta}_{A'}
 \Bigr)=:\bar{\hat\omega}_{A'}, \label{eq:4.6.5a} \\
&{}&\hat{\cal D}_{AA'}\bar{\hat\pi}^{A'}+\frac{3}{2}K\hat\sigma_A=-\Bigl(
 \hat{\cal D}_{AA'}\,{}_0\bar{\hat\beta}^{A'}+\frac{3}{2}K\,{}_0\hat\alpha_A
 \Bigr)=:\hat\rho_A \label{eq:4.6.5b}
\end{eqnarray}
of inhomogeneous equations. The advantage of using $(\hat\sigma^A,\bar{\hat
\pi}^{A'})$ instead of $(\hat\alpha^A,\bar{\hat\beta}^{A'})$ is that the spinor
fields satisfying the homogeneous boundary condition form a vector space,
while those satisfying an inhomogeneous one do not. Thus, the techniques of
linear functional analysis can be applied to them more easily. Moreover, the
spinor fields on the right hand side of (\ref{eq:4.6.5a}) and
(\ref{eq:4.6.5b}) are fixed by the boundary conditions, and hence what we
should prove is only the existence of the spinor fields $(\hat\sigma^A,\bar
{\hat\pi}^{A'})$ with appropriate fall-off. Also, we will need the uniqueness
of this solution.

First we show that the \emph{homogeneous} equations corresponding to
(\ref{eq:4.6.5a})-(\ref{eq:4.6.5b}), i.e.

\begin{equation}
\hat{\cal D}_{A'A}\hat\sigma^A+\frac{3}{2}K\bar{\hat\pi}_{A'}=0, \hskip 20pt
\hat{\cal D}_{AA'}\bar{\hat\pi}^{A'}+\frac{3}{2}K\hat\sigma_A=0, \label{eq:4.6.6}
\end{equation}
do not admit any non-trivial smooth solution with the $\hat\sigma^A, \hat\pi^A
=o(\Omega^{3/2})$ fall-off, provided the matter fields satisfy the dominant
energy condition on $\Sigma$. (Note that, by the results of subsection
\ref{sub-4.6.1}, the fall-off $\hat\sigma^A,\hat\pi^A=o(\Omega^{3/2})$ is
needed to ensure the finiteness of $H[\alpha,\bar\alpha]+H[\beta,\bar\beta]$.
This fall-off condition is equivalent to their square integrability, see
below.) Suppose, on the contrary, that $(\hat\sigma^A,\bar{\hat\pi}^{A'})$ is
such a solution, and let us apply the Sen--Witten type identity (\ref{eq:4.7})
to this solution. Then by the dominant energy condition we have that

\begin{eqnarray}
0\leq\int_\Sigma\Bigl\{\!\!\!\!&-\!\!\!\!&\frac{2}{\varkappa}\hat t_{AA'}\hat
 h^{ef}\bigl((\tilde{\cal D}_e\hat\sigma^A)(\overline{\tilde{\cal D}}_f
 \bar{\hat\sigma}^{A'})+(\overline{\tilde{\cal D}}_e\hat\pi^A)(\tilde
 {\cal D}_f\bar{\hat\pi}^{A'})\bigr) \nonumber \\
\!\!\!\!&+\!\!\!\!&\hat t_a\hat T^a{}_b(\hat\sigma^B\bar{\hat\sigma}^{B'}+
 \hat\pi^B\bar{\hat\pi}^{B'})\Bigr\}{\rm d}\hat\Sigma=H\bigl[\hat\sigma,\bar{
 \hat\sigma}\bigr]+H\bigl[\hat\pi,\bar{\hat\pi}\bigr]. \label{eq:4.6.7}
\end{eqnarray}
We show that the right hand side of this inequality is also zero. The GHP
form of $H[\hat\sigma,\bar{\hat\sigma}]$ is $2/\varkappa$ times the
$\Omega\rightarrow0$ limit of

\begin{eqnarray}
&{}&\oint_{\hat{\cal S}_\Omega}\Bigl(\bar{\hat\sigma}_{1'}\bigl(\hat{\edth}
 {}^\prime\hat\sigma_0+\hat\rho\hat\sigma_1\bigr)-\bar{\hat\lambda}_{0'}
 \bigl(\hat{\edth}\hat\sigma_1+\hat\rho{}^\prime\hat\sigma_0\bigr)\Bigr)
 {\rm d}\hat{\cal S} \label{eq:4.6.7b} \\
&=&\oint_{\hat{\cal S}_\Omega}\Bigl(\bar{\hat\sigma}_{1'}\bigl(\hat{\edth}{}
 ^\prime\hat\sigma_0-\sqrt{\frac{\Lambda}{6}}\hat\sigma_1+\hat\sigma_1O
 (\Omega)\bigr)-\bar{\hat\sigma}_{0'}\bigl(\hat{\edth}\hat\sigma_1-
 \sqrt{\frac{\Lambda}{6}}\hat\sigma_0+\hat\sigma_0O(\Omega)\bigr)\Bigr)
 \Omega^{-2}{\rm d}{\cal S}, \nonumber
\end{eqnarray}
where we used the asymptotic form (\ref{eq:4.19c}) of the GHP convergences.
Thus, if $\hat\sigma^A=\Omega^l\,{}_0\sigma^A$ with $l>3/2$ and some bounded
spinor field ${}_0\sigma^A$ near $\mathscr{I}^+$, then $\hat\sigma_0$ and
$\hat\sigma_1$ fall off \emph{faster} than $\Omega$. This, together with the
same argument for $\hat\pi^A$, imply that the right hand side of
(\ref{eq:4.6.7}) is indeed zero. Hence, the integrand of the middle term of
(\ref{eq:4.6.7}) is vanishing, i.e.

\begin{eqnarray}
&{}&\hat{\cal D}_e\hat\sigma^A+KP_e^{AA'}\bar{\hat\pi}_{A'}=0, \hskip 20pt
\hat{\cal D}_e\bar{\hat\pi}^{A'}+KP_e^{A'A}\hat\sigma_A=0, \label{eq:4.6.8a} \\
&{}&\hat t_a\hat T^a{}_b\bigl(\hat\sigma^B\bar{\hat\sigma}^{B'}+\hat\pi^B
\bar{\hat\pi}^{B'}\bigr)=0. \label{eq:4.6.8b}
\end{eqnarray}
Now we show that the spinor fields $\hat\sigma^A$ and $\hat\pi^A$ cannot
be proportional to each other on any open subset of $\Sigma$. Thus,
suppose, on the contrary, that $\hat\pi^A=F\hat\sigma^A$ for some smooth
complex function $F$ on some open subset $U\subset\Sigma$. Then by
(\ref{eq:4.6.8a})

\begin{equation*}
0=\hat D_e\bigl(\hat\sigma_A\hat\pi^A\bigr)=-\bigl(\hat{\cal D}_e\hat\sigma
^A\bigr)\hat\pi_A+\hat\sigma_A\bigl(\hat{\cal D}_e\hat\pi^A\bigr)=K\bigl(1+F
\bar F\bigr)P_e^{AA'}\hat\sigma_A\bar{\hat\sigma}_{A'}.
\end{equation*}
Since $\Sigma$ is spacelike and $\hat\sigma_A\bar{\hat\sigma}_{A'}$ is null,
this implies the vanishing of $\hat\sigma_A$ on $U$. However, by
(\ref{eq:4.6.8a}) $\hat\sigma^A$ solves the eigenvalue equation $\hat{\cal
D}^{AA'}\hat{\cal D}_{A'B}\hat\sigma^B=\frac{3}{8}\Lambda\hat\sigma^A$, and
hence, by an appropriate modification of the proof of Aronszajn's theorem
for its eigenspinors, the spinor field $\hat\sigma_A$ cannot be vanishing on
any open set $U\subset\Sigma$. (For the details, see the appendix of
\cite{Sz13}.) Therefore, $\hat\sigma^A$ and $\hat\pi^A$ cannot be proportional
to each other on any open subset of $\Sigma$, and hence $\hat\sigma^A\bar
{\hat\sigma}^{A'}+\hat\pi^A\bar{\hat\pi}^{A'}$ is future pointing and
\emph{timelike} on an open dense subset of $\Sigma$. But then by
(\ref{eq:4.6.8b}) and the dominant energy condition it follows that $\hat
T_{ab}=0$ on $\Sigma$.

Evaluating the integrability condition of the system (\ref{eq:4.6.8a}) and
using that $\hat\sigma^A$ and $\hat\pi^A$ can be proportional with each
other only on closed subsets of $\Sigma$ with empty interior, we find that
the curvature $\hat R_{ABcd}$ of the spacetime at the points of $\Sigma$ is
that of the de Sitter spacetime (see \cite{Sz13}). Foliating the domain of
dependence of $\Sigma$ by smooth Cauchy surfaces $\Sigma_s$ by Lie dragging
$\Sigma$ along its own timelike $\hat g_{ab}$-unit normals, by the Bianchi
identities (written in their 3+1 form with respect to this foliation by
Friedrich \cite{Fr}) we obtain that the geometry of the domain of dependence
is locally isometric to the de Sitter spacetime. (Note that these Cauchy
surfaces $\Sigma_s$ for the domain of dependence of $\Sigma$ are \emph{not}
the hypersurfaces $\Sigma_t$ of subsection \ref{sub-4.4}. In fact, while all
the $\Sigma_s$ cut $\mathscr{I}^+$ in the \emph{same} 2-surface ${\cal S}$,
the surfaces $\Sigma_t\cap\mathscr{I}^+$ foliate a neighbourhood of ${\cal S}$
in $\mathscr{I}^+$.)

Finally, since the domain of dependence of $\Sigma$ is locally de Sitter, the
spinor fields $\hat\sigma^A$ and $\bar{\hat\pi}^{A'}$ provide a correct
initial condition for (\ref{eq:4.5.14}) on $\Sigma$. However, by the results
of subsection \ref{sub-4.5.3}, its solution is completely determined by the
value of the solution e.g. at a point of $\Sigma\cap\mathscr{I}^+$. Since
both $\hat\sigma^A$ and $\hat\pi^A$ are vanishing there, the whole solution
on $\Sigma$ must be vanishing. Hence, the differential operator

\begin{equation}
\tilde{\cal D}:C^\infty\bigl(\Sigma,\mathbb{D}^\alpha\bigr)\cap L_2\bigl(\Sigma,
\mathbb{D}^\alpha\bigr)\rightarrow C^\infty\bigl(\Sigma,\mathbb{D}^\alpha\bigr):
\Phi^\alpha\mapsto\tilde{\cal D}^\alpha{}_\beta\Phi^\beta \label{eq:4.6.9}
\end{equation}
is an \emph{injective} linear map on the space of the smooth square integrable
Dirac spinor fields, where $\Phi^\alpha:=(\hat\sigma^A,\bar{\hat\pi}^{A'})$ (as
a column vector) and

\begin{equation}
\tilde{\cal D}^\alpha{}_\beta\Phi^\beta:=\bigl(\hat{\cal D}^A{}_{B'}\bar{\hat\pi}
^{B'}+\frac{3}{2}K\hat\sigma^A,\hat{\cal D}^{A'}{}_B\hat\sigma^B+\frac{3}{2}K
\bar{\hat\pi}^{A'}\bigr). \label{eq:4.6.10}
\end{equation}
Therefore, if the system (\ref{eq:4.6.5a})-(\ref{eq:4.6.5b}) admits a smooth
solution, then that is unique in $C^\infty(\Sigma,\mathbb{D}^\alpha)$.

To show the existence of a solution of the \emph{inhomogeneous}
(\ref{eq:4.6.5a})-(\ref{eq:4.6.5b}), and also that non-smooth solutions of
(\ref{eq:4.6.6}) do not exist either, we should reformulate the problem in
appropriate function spaces and use certain functional analytic techniques.
We start this here and defer the details to the appendix. If $(\hat\sigma^A,
\bar{\hat\pi}^{A'})$ were a solution of (\ref{eq:4.6.5a})-(\ref{eq:4.6.5b})
with \emph{differentiable} extension of the corresponding $\hat\alpha^A$ and
$\hat\beta^A$ to the conformal boundary, then $(\hat\sigma^A,\bar{\hat\pi}
^{A'})=O(\Omega^2)$ would hold (see equation (\ref{eq:4.5.13})). Hence, for
the Dirac spinor $\Phi^\alpha=(\hat\sigma^A,\bar{\hat\pi}^{A'})$ we would have
that

\begin{eqnarray*}
\vert\Phi^\alpha\vert^2\!\!\!\!&:=\!\!\!\!&\sqrt{2}\hat t_{AA'}\bigl(
  \hat\sigma^A\bar{\hat\sigma}^{A'}+\hat\pi^A\bar{\hat\pi}^{A'}\bigr)=
  O(\Omega^3), \\
\vert\hat{\cal D}_e\Phi^\alpha\vert^2\!\!\!\!&:=\!\!\!\!&-\hat h^{ef}\sqrt{2}
  \hat t_{AA'}\Bigl((\hat{\cal D}_e\hat\sigma^A)(\hat{\cal D}_f\bar{\hat\sigma}
  ^{A'})+(\hat{\cal D}_e\hat\pi^A)(\hat{\cal D}_f\bar{\hat\pi}^{A'})\Bigr)=
  O(\Omega^3).
\end{eqnarray*}
(In the second of these, we used how $\hat{\cal D}_e$ is related to the Sen
derivative operator ${\cal D}_e$ in the conformal geometry, and that $\hat
\sigma^A=\Omega^2{}_0\sigma^A$ and $\hat\pi^A=\Omega^2{}_0\pi^A$ hold for some
smooth ${}_0\sigma^A$ and ${}_0\pi^A$ on $M$.) Thus, both $\Omega^{-2\delta}
\vert\Phi^\alpha\vert^2$ and $\Omega^{-2\delta}\vert\hat{\cal D}_e\Phi^\alpha
\vert^2$ would be integrable on $\Sigma$ for the \emph{same} $\delta<
\frac{1}{2}$, and hence

\begin{equation}
\bigl(\Vert\Phi^\alpha\Vert_{1,\delta}\bigr)^2:=\int_\Sigma\Omega^{-2\delta}\bigl(
\vert\Phi^\alpha\vert^2+\vert\hat{\cal D}_e\Phi^\alpha\vert^2\bigr){\rm d}\hat
\Sigma<\infty. \label{eq:4.6.11}
\end{equation}
However, the norm that (\ref{eq:4.6.11}) defines is \emph{not} the weighted
Sobolev norm (for the latter see e.g. \cite{ChCh}). This is the classical
Sobolev norm\footnote{Strictly speaking, the dimensionally correct norm
would be the square root of the integral of $\Omega^{-2\delta}(\vert\Phi
^\alpha\vert^2+L^2\vert\hat{\cal D}_e\Phi^\alpha\vert^2)$, where $L$ is a
positive constant with \emph{length} physical dimension, e.g. $L=1/\sqrt{
\Lambda}$. Since, however, it is the \emph{topology} of the Banach spaces
that the norm defines that has significance (but not the norm itself), we
adopt the standard (but physically incorrect) definition of the Sobolev norms.
This yields formally incorrect sums of quantities with different physical
dimension in certain estimates.} with respect to the \emph{weighted volume
element} $\Omega^{-2\delta}{\rm d}\hat\Sigma$. The weighted Sobolev spaces do
not appear to be the natural function spaces on the asymptotically
hyperboloidal $\Sigma$ because the fields and their derivatives have the
\emph{same} fall-off properties. However, this fall-off rate cannot be
arbitrary: The investigations in subsection \ref{sub-4.6.1} show that the
spinor fields $\hat\sigma^A$ and $\hat\pi^A$ \emph{must} fall off
\emph{faster} than $\Omega^{\frac{3}{2}}$, i.e. they must be square integrable
(with $\delta=0$), otherwise $H[\alpha,\bar\alpha]+H[\beta,\bar\beta]$ would
not be finite. The spaces $H_{s,\delta}(\Sigma,\mathbb{D}^\alpha)$ (or simply
$H_{s,\delta}$) of the Dirac spinor fields for which the norm has the structure
(\ref{eq:4.6.11}) with the number of derivatives $s=0,1,2,...$ will be
discussed in Appendix \ref{sub-A.1}.

Now we show that the Dirac spinor $(\hat\rho^A,\bar{\hat\omega}^{A'})$ belongs
to the weighted Lebesgue spaces $L^\delta_2=H_{0,\delta}$ for $\delta<\frac{1}
{2}$. Since ${}_0\hat\alpha^A={}_0\alpha^A$ and ${}_0\bar{\hat\beta}^{A'}={}_0
\bar\beta^{A'}$ satisfy the algebraic boundary condition (\ref{eq:4.5.4}) and
since they were constructed from the solution of (\ref{eq:4.5.8}) and
(\ref{eq:4.5.9}), by equation (\ref{eq:4.5.10a}) (with the notations of
subsection \ref{sub-4.5.2}), we have that

\begin{eqnarray*}
\hat{\cal D}_{A'A}\,{}_0\hat\alpha^A+\frac{3}{2}\!\!\!\!&{}\!\!\!\!&K\,{}_0\bar
 {\hat\beta}_{A'}={\cal D}_{A'A}\,{}_0\alpha^A+\frac{3}{2}\,{}_0\bar\gamma_{A'}=
 \Delta_{A'A}\,{}_0\alpha^A-v_{A'A}v^e{\cal D}_e\,{}_0\alpha^A+
 \frac{3}{2}\,{}_0\bar\gamma_{A'}=O(\Omega), \\
K\hat{\cal D}_{AA'}\,{}_0\bar{\hat\beta}^{A'}+\!\!\!\!&{}\!\!\!\!&\frac{3}{2}
 K^2\,{}_0\hat\alpha_A=\vert D_e\Omega\vert\, v_B{}^{A'}\bigl({\cal D}_{A'A}\,
 {}_0\alpha^B+\frac{1}{2}\,{}_0\bar\gamma_{A'}\delta^B_A\bigr)+O(\Omega)\\
=\!\!\!\!&{}\!\!\!\!&\vert D_e\Omega\vert\,\Bigl(v_A{}^{A'}\Delta_{A'B}\,{}_0
 \alpha^B-\frac{1}{2}v^e({\cal D}_e\,{}_0\alpha_A)-\frac{1}{2}v_{AA'}\,{}_0\bar
 \gamma^{A'}\Bigr)+O(\Omega)=O(\Omega).
\end{eqnarray*}
However, this fall-off means, in fact, that $(\hat\rho^A,\bar{\hat\omega}^{A'})
\in L^\delta_2$ for any $\delta<\frac{1}{2}$. Hence, it seems natural to expect
that (\ref{eq:4.6.10}) defines a bounded linear operator $\tilde{\cal D}$
from $H_{1,\delta}$ into $L^\delta_2$ with $\delta<\frac{1}{2}$, and we need to
show only that $(\hat\rho^A,\bar{\hat\omega}^{A'})\in{\rm Im}\,\tilde{\cal D}$
and that $\ker\tilde{\cal D}=\emptyset$. The former would imply the existence,
the latter the uniqueness of the solution of the renormalized Witten equation
(even among the non-smooth solutions). We complete the proof in the Appendix.

In fact, in Appendix \ref{sub-A.3} we show that the extension of $\tilde
{\cal D}$ from the space of the square integrable smooth Dirac spinor fields
to the first Sobolev space of the spinor fields, i.e. $\tilde{\cal D}:H_{1,0}
\rightarrow L_2$, is a topological vector space \emph{isomorphism} if the
hypersurface $\Sigma$ is chosen such that its `boost parameter function' $W$
satisfies $\frac{1}{3}\Lambda W<1$ (Theorem \ref{t:A.3.1}). Thus, in
particular, $\ker\tilde{\cal D}\subset H_{1,0}$ is empty, i.e. the homogeneous
equations (\ref{eq:4.6.6}) do not have even non-smooth square integrable
solutions with square integrable first derivative. Also, $(\hat\rho^A,\bar
{\hat\omega}^{A'})\in{\rm Im}\,\tilde{\cal D}$ holds, and hence
(\ref{eq:4.5.1}) has a unique square integrable smooth solution. However,
since $(\hat\rho^A,\bar{\hat\omega}^{A'})$ is not only square integrable but
belongs to the weighted Lebesgue spaces $L^\delta_2=H_{0,\delta}$ for $\delta<
\frac{1}{2}$, moreover the solution of (\ref{eq:4.6.5a})-(\ref{eq:4.6.5b})
is unique, the solution is not only square integrable, but belongs to $H
_{1,\delta}$ for any $\delta<\frac{1}{2}$. Therefore, the renormalized Witten
equation (\ref{eq:4.5.1}) with the algebraic boundary condition
(\ref{eq:4.5.4}) has a unique solution on such a $\Sigma$, proving that
$H[\alpha,\bar\alpha]+H[\beta,\bar\beta]$ is finite and non-negative. Since
$H[\alpha,\bar\alpha]+H[\beta,\bar\beta]$ depends only on the cut, its
finiteness and non-negativity are independent of the choice of $\Sigma$.

Finally, it could be worth noting that the positivity proof can be extended
to hypersurfaces with more than one asymptotic end; and also with inner
boundaries representing future marginally trapped surfaces, where the spinor
fields are subject to the chiral boundary conditions of \cite{Gibbonsetal}.

\subsubsection{Rigidity}
\label{sub-4.6.3}

It is easy to see that $H[\alpha,\bar\alpha]+H[\beta,\bar\beta]$ is vanishing
for any smooth cut of the conformal boundary of the de Sitter spacetime. In
fact, we saw in subsection \ref{sub-4.5.3} that in de Sitter spacetime
equation (\ref{eq:4.5.14}) admits four linearly independent solutions. Then
the restriction of the Weyl spinor constituents $\hat\alpha_A$ and $\hat\beta
_A$ of such a solution to any smooth spacelike hypersurface $\Sigma$ extending
to the conformal boundary $\mathscr{I}^+$ solve the 3-surface twistor equation
on $\Sigma$. Therefore, by (\ref{eq:4.6.3}), $H[\alpha,\bar\alpha]+H[\beta,
\bar\beta]$ is vanishing.

Now we show that the converse of this statement is, in some sense, also true:
If $H[\alpha,\bar\alpha]+H[\beta,\bar\beta]$ is zero, then the domain of
dependence of the spacelike hypersurface $\Sigma$ is isometric to an open
neighbourhood of a piece of the conformal boundary of the de Sitter spacetime.
Thus, the vanishing of $H[\alpha,\bar\alpha]+H[\beta,\bar\beta]$ is equivalent
to the local de Sitter nature of the spacetime near its future conformal
boundary. The present proof is an adaptation of the proof of an analogous
statement in \cite{Sz13}, and its logic is essentially the same that we
followed in proving the non-existence of smooth solutions of the homogeneous
equations (\ref{eq:4.6.6}) in the previous subsection. Thus here we only
sketch the key points of the proof.

Thus, let us suppose that $H[\alpha,\bar\alpha]+H[\beta,\bar\beta]=0$ for
some solution $(\hat\alpha_A,\bar{\hat\beta}_{A'})$ of (\ref{eq:4.5.1}). Then
both $\hat\alpha_A$ and $\hat\beta_A$ solve the 3-surface twistor equation,
$\hat{\cal D}_{(AB}\hat\alpha_{C)}=0$ and $\hat{\cal D}_{(AB}\hat\beta_{C)}=0$,
too. Hence, by the Witten equations (\ref{eq:4.5.1}) and the decomposition
(\ref{eq:4.6.2}), these satisfy

\begin{equation}
\hat{\cal D}_e\hat\alpha^A+KP^{AA'}_e\bar{\hat\beta}_{A'}=0, \hskip 20pt
\hat{\cal D}_e\bar{\hat\beta}^{A'}+KP^{AA'}\hat\alpha_A=0.
\label{eq:4.6.12}
\end{equation}
Like in the previous subsection, their solutions $\hat\alpha^A$ and $\hat
\beta^A$ cannot be proportional with each other on any open subset of
$\Sigma$, and hence, by the dominant energy condition, (\ref{eq:4.6.3})
gives that $\hat T_{ab}=0$ on $\Sigma$. Also, the integrability conditions of
(\ref{eq:4.6.12}) yields that the domain of dependence of $\Sigma$ is
locally isometric to the de Sitter spacetime.

\subsection{The total energy-momentum}
\label{sub-4.7}

\subsubsection{The structure of the  2-surface twistor space}
\label{sub-4.7.1}

In subsection \ref{sub-4.6.1} we saw that the functional $H[\alpha,\bar\alpha]
+H[\beta,\bar\beta]$ of the solutions $(\hat\alpha^A,\bar{\hat\beta}^{A'})$ of
the renormalized Witten equation can be finite only if the spinor fields
solve the 2-surface twistor equations on ${\cal S}=\Sigma\cap\mathscr{I}^+$,
i.e.:

\begin{equation}
-{\cal T}^+(\alpha):={\edth}^\prime\alpha_1-\bar\sigma^0\alpha_0=0, \hskip 20pt
{\cal T}^-(\alpha):={\edth}\alpha_0+\sigma^0\alpha_1=0. \label{eq:4.7.1}
\end{equation}
It is known that on topological 2-spheres the 2-surface twistor equations
admit at least four, and in the generic case precisely four linearly
independent solutions \cite{Ba}. However, examples are known for topological
2-spheres on which the 2-surface twistor equations admit five independent
solutions \cite{Je}. We will show that the 2-surface twistor space, i.e. the
space $\ker{\cal T}:=\ker({\cal T}^-\oplus{\cal T}^+)$ of the solutions of
the 2-surface twistor equations on ${\cal S}\subset\mathscr{I}^+$, is even
dimensional.

First, let us observe that the \emph{constant} normal $N_a$ of $\mathscr{I}
^+$ yields a non-trivial \emph{extra structure} on the kernel of a number of
differential operators. Indeed, for any spinor field $\lambda_A$ on
${\cal S}$ let us form the $\mathbb{C}$-anti-linear map

\begin{equation}
\nu:\lambda_A\mapsto\nu(\lambda)_A:=\sqrt{2}N_A{}^{A'}\bar\lambda_{A'},
\label{eq:4.7.2}
\end{equation}
i.e. in terms of spinor components $\nu:(\lambda_0,\lambda_1)\mapsto(-\bar
\lambda_{1'},\bar\lambda_{0'})$. Then the algebraic boundary condition
(\ref{eq:4.5.4}) is simply $\beta_A=\pm{\rm i}\nu(\alpha)_A$. Then it is a
simple calculation to check that this map yields the $\mathbb{C}$-anti-linear
\emph{isomorphisms}

\begin{equation}
\ker{\cal T}\rightarrow\ker{\cal T}, \hskip 15pt
\ker\Delta\rightarrow\ker\Delta, \hskip 15pt
\ker{\cal H}^+\rightarrow\ker{\cal H}^-, \hskip 15pt
\ker{\cal C}^+\rightarrow\ker{\cal C}^-. \label{eq:4.7.3}
\end{equation}
Here $\Delta:=\Delta^+\oplus\Delta^-$, ${\cal H}^\pm:=\Delta^\pm\oplus{\cal T}
^\pm$ and ${\cal C}^\pm:=\Delta^\pm\oplus{\cal T}^\mp$; and where

\begin{equation}
\Delta^+(\lambda):={\edth}^\prime\lambda_0+\rho^0\lambda_1, \hskip 20pt
-\Delta^-(\lambda):={\edth}\lambda_1-\rho^0\lambda_0. \label{eq:4.7.4}
\end{equation}
Thus, $\ker\Delta$ is the kernel of the Dirac operator built from the
2-dimensional Sen connection $\Delta_a:=\Pi^b_a\nabla_b$ on ${\cal S}$;
$\ker{\cal H}^\pm$ is the space of the holomorphic/anti-holomorphic spinor
fields of Dougan and Mason \cite{DM}; while, with the $\sigma^0=0$
substitution, $\ker{\cal C}^\pm$ is the space of Bramson's spinors \cite{Br}
at the future/past null infinity of asymptotically flat spacetimes (where
the relevant shears fall off \emph{faster} then the divergences). (For a
more detailed discussion of these operators, see the appendix of \cite{Sz01}.)

In particular, $\nu$ takes solutions of the 2-surface twistor equation into
solutions. Clearly, $\nu(\alpha)_A$ is not proportional to $\alpha_A$ (and
hence it is linearly independent of $\alpha_A$) because $\nu(\alpha)_A\alpha^A
=-\sqrt{2}N_{AA'}\alpha^A\bar\alpha^{A'}$, which is zero only if $\alpha_A$
itself is vanishing. Moreover, the spinor fields $\nu(\alpha)_A$ and $\alpha
_A$ are orthogonal to each other with respect to $N^{AA'}$, and $\nu^2=-{\rm
Id}$ holds. Hence, each solution $\alpha_A$ has a naturally determined
linearly independent counterpart $\nu(\alpha)_A$ and the $\alpha_A
\leftrightarrow\nu(\alpha)_A$ correspondence is one-to-one. Therefore, on
2-surfaces in $\mathscr{I}^+$, $\ker{\cal T}$ is necessarily even dimensional,
i.e. no odd number of `extra' solutions can exist. We can form the quotient
$\ker{\cal T}/\nu$, which, since $\nu$ is an isomorphism, can be identified
with the space of the complex 2-planes $[\alpha_A]$ of $\ker{\cal T}$ spanned
by $\alpha_A$ and $\nu(\alpha)_A$ for any $\alpha_A\in\ker{\cal T}$. This
$\ker{\cal T}/\nu$ is at least two, and generically is precisely two complex
dimensional. We assume that there are no `extra' solutions of the 2-surface
twistor equations on ${\cal S}$, and hence that $\ker{\cal T}$ is precisely
four, and hence that $\ker{\cal T}/\nu$ is precisely two dimensional.

Let us define $G(\ker{\cal T},\nu)$ to be the set of automorphisms $\Phi$
of $\ker{\cal T}$ for which $\nu\circ\Phi=\Phi\circ\nu$. Clearly, this is a
subgroup of $GL(\ker{\cal T})$, which can be called the symmetry group of
the 2-surface twistor space. Let ${\bf S}_{\bA}\subset\ker{\cal T}$ be a two
dimensional subspace which is not invariant under $\nu$, and hence for
which $\ker{\cal T}={\bf S}_{\bA}\oplus\nu({\bf S}_{\bA})$ holds. (This ${\bf
S}_{\bA}$ is a representative of $\ker{\cal T}/\nu$, and obviously it is
not canonically defined.) Let us fix a basis in ${\bf S}_{\bA}$, say $\{\alpha
^A_{\bA}\}$, ${\bA}={\bf 0},{\bf 1}$. (Thus, note that the \emph{boldface}
name indices refer to a basis in the abstract solution space, while the
\emph{underlined} name indices to a frame \emph{field} on ${\cal S}$. Note
also that since $\nu$ contains complex conjugation and it is only the abstract
index that is converted by $N^A{}_{A'}$ to an unprimed one, the boldface index
in $\nu(\alpha_{{\bA}'})^A$ is, in fact, primed.) Then $\{\alpha^A_{\bA},\nu
(\alpha_{{\bA}'})^A\}$ is a basis in $\ker{\cal T}$, and in this basis $\nu$
takes the form of a $4\times4$ complex matrix with $2\times2$ blocks $0$,
$-\delta^{\bA}_{\bB}$, and, in the second row, $\delta^{{\bA}'}_{{\bB}'}$ and $0$.
(Here $0$ is the zero matrix.) Thus, since $\nu$ is \emph{anti}-linear, its
action in this basis is the matrix multiplication with this matrix, following
the complex conjugation. Hence, in this basis, $\Phi\in G(\ker{\cal T},\nu)$
is a matrix of the form

\begin{equation}
\Phi=\left(\begin{array}{cc}
      A^{\bA}{}_{\bB} & B^{\bA}{}_{{\bB}'} \\
 -\bar{B}^{{\bA}'}{}_{\bB} & \bar{A}^{{\bA}'}{}_{{\bB}'} \\
      \end{array}\right), \label{eq:4.7.5}
\end{equation}
where $A$ and $B$ are $2\times2$ complex matrices and $A$ is nonsingular.
Clearly, these matrices can be factorized in a unique way according to

\begin{equation}
\Phi=\left(\begin{array}{cc}
      A^{\bA}{}_{\bB} & 0 \\
      0 & \bar{A}^{{\bA}'}{}_{{\bB}'} \\
      \end{array}\right)
\left(\begin{array}{cc}
       \delta^{\bB}_{\bC} & C^{\bB}{}_{{\bC}'} \\
 -\bar{C}^{{\bB}'}{}_{\bC} & \delta^{{\bB}'}_{{\bC}'} \\
      \end{array}\right); \label{eq:4.7.5a}
\end{equation}
and the matrices of the form ${\rm diag}(A,\bar A)$ form a subgroup in $G
(\ker{\cal T},\nu)$, which is isomorphic to $GL(1,\mathbb{C})\times SL(2,
\mathbb{C})$. The determinant of these factors is $\vert\det(A)\vert^2$ and
$1+{\rm Tr}(C\bar C)+\vert\det(C)\vert^2\geq0$, respectively. Hence the
latter should be required to be positive. Therefore, the symmetry group
$G(\ker{\cal T},\nu)$ factorized by the multiplicative group of the
determinants $\det(\Phi)$ is a 15 parameter subgroup of $SL(4,\mathbb{C})$.
This subgroup turns out to be isomorphic to $SL(2,\mathbb{H})$, the spin
group of $SO(1,5)$.

To see this, first let us determine its Lie algebra. Let us denote the
standard $SL(2,\mathbb{C})$ Pauli matrices (divided by $\sqrt{2}$) by
$\sigma^{\bA{\bB}'}_{\ua}$, ${\ua}=0,...,3$; and, for ${\bi}=1,2,3$, let
$\sigma^{\bA}_{\bi}{}_{\bB}:=\sqrt{2}\sigma^{{\bA}{\bA}'}_{\bi}\sigma^0_{{\bA}'{\bB}}$,
which are the standard $SU(2)$ Pauli matrices (also divided by $\sqrt{2}$).
Then the basis of the Lie algebra of $G(\ker{\cal T},\nu)$ corresponding to
the factorization (\ref{eq:4.7.5a}) is

\begin{eqnarray*}
&{}&d:=\frac{1}{2}\left(\begin{array}{cc}
      \delta^{\bA}_{\bB} & 0 \\
                    0  & \delta^{{\bA}'}_{{\bB}'} \\
      \end{array}\right), \hskip 55pt
a_0:=\frac{\rm i}{2}\left(\begin{array}{cc}
      \delta^{\bA}_{\bB} & 0 \\
                    0  & -\delta^{{\bA}'}_{{\bB}'} \\
      \end{array}\right), \\
&{}&a_{\bi}:=\frac{1}{\sqrt{2}}\left(\begin{array}{cc}
      \sigma^{\bA}_{\bi}{}_{\bB} & 0 \\
                    0  & \sigma^{{\bA}'}_{\bi}{}_{{\bB}'} \\
      \end{array}\right), \hskip 22pt
\tilde a_{\bi}:=\frac{\rm i}{\sqrt{2}}\left(\begin{array}{cc}
      \sigma^{\bA}_{\bi}{}_{\bB} & 0 \\
                    0  & -\sigma^{{\bA}'}_{\bi}{}_{{\bB}'} \\
      \end{array}\right), \\
&{}&c_{\ua}:=\frac{1}{\sqrt{2}}\left(\begin{array}{cc}
                    0 & \sigma^{\bA}_{\ua{\bB}'}  \\
        -\sigma^{{\bA}'}_{\ua{\bB}} & 0 \\
      \end{array}\right), \hskip 20pt
\tilde c_{\ua}:=\frac{\rm i}{\sqrt{2}}\left(\begin{array}{cc}
                    0 & \sigma^{\bA}_{\ua{\bB}'}  \\
        \sigma^{{\bA}'}_{\ua{\bB}} & 0 \\
      \end{array}\right).
\end{eqnarray*}
Clearly, $d$ commutes with all the basis elements and generates the change of
the determinant of $\Phi$. The Lie algebra elements $a_0$, $a_{\bi}$, $\tilde
a_{\bi}$, $c_{\ua}$ and $\tilde c_{\ua}$ form a 15 real dimensional Lie algebra
with the Lie products

\begin{eqnarray*}
&{}& [a_0,a_{\bi}]=[a_0,\tilde a_{\bi}]=0, \hskip 20pt
  [a_0,c_{\ua}]=\tilde c_{\ua}, \hskip 20pt
  [a_0,\tilde c_{\ua}]=-c_{\ua}, \\
&{}& [a_{\bi},a_{\bj}]=\epsilon_{\bi\bj}{}^{\bk}\tilde a_{\bk},
  \hskip 30pt
  [a_{\bi},\tilde a_{\bj}]=-\epsilon_{\bi\bj}{}^{\bk}a_{\bk}, \\
&{}& [a_{\bi},c_0]=c_{\bi}, \hskip 20pt
  [a_{\bi},c_{\bj}]=-\eta_{\bi\bj}c_0, \hskip 20pt
  [a_{\bi},\tilde c_0]=\tilde c_{\bi}, \hskip 20pt
  [a_{\bi},\tilde c_{\bj}]=-\eta_{\bi\bj}\tilde c_0, \\
&{}& [\tilde a_{\bi},\tilde a_{\bj}]=-\epsilon_{\bi\bj}{}^{\bk}\tilde a_{\bk},
  \hskip 20pt
  [\tilde a_{\bi},c_0]=[\tilde a_{\bi},\tilde c_0]=0, \hskip 20pt
  [\tilde a_{\bi},c_{\bj}]=-\epsilon_{\bi\bj}{}^{\bk}c_{\bk},  \hskip 20pt
  [\tilde a_{\bi},\tilde c_{\bj}]=-\epsilon_{\bi\bj}{}^{\bk}\tilde c_{\bk},\\
&{}& [c_0,c_{\bi}]=-a_{\bi}, \hskip 20pt
  [c_0,\tilde c_0]=-a_0, \hskip 20pt
  [c_0,\tilde c_{\bi}]=0, \\
&{}& [c_{\bi},c_{\bj}]=-\epsilon_{\bi\bj}{}^{\bk}\tilde a_{\bk}, \hskip 20pt
  [c_{\bi},\tilde c_0]=0, \hskip 20pt
  [c_{\bi},\tilde c_{\bj}]=-\eta_{\bi\bj}a_0, \\
&{}& [\tilde c_0,\tilde c_{\bi}]=-a_{\bi}, \hskip 20pt
 [\tilde c_{\bi},\tilde c_{\bj}]=-\epsilon_{\bi\bj}{}^{\bk}\tilde a_{\bk}.
\end{eqnarray*}
(Note that we lower the small boldface indices by the \emph{negative
definite} $\eta_{\bi\bj}=-\delta_{\bi\bj}$, the capital boldface indices by
the anti-symmetric Levi-Civita symbol $\epsilon_{\bA\bB}$; and we raise them
by their inverses.)

To see the structure of this Lie algebra, consider its non-trivial
subalgebras. First, $\{\tilde a_{\bi},a_{\bi}\}$ spans the Lorentz Lie algebra
$so(1,3)\simeq sl(2,\mathbb{C})$, and both $\{\tilde a_{\bi},c_{\bi}\}$ and
$\{\tilde a_{\bi},\tilde c_{\bi}\}$ (which are isomorphic with each other) span
$so(4)\simeq su(2)\oplus su(2)$. The subalgebras spanned by $\{\tilde a_{\bi},
a_{\bi},c_{\bi},c_0\}$ and by $\{\tilde a_{\bi},a_{\bi},\tilde c_{\bi},\tilde c_0\}$
are also isomorphic, and since they contain $so(1,3)$ and $so(4)$, it is
natural to expect them to be just the de Sitter algebra $so(1,4)\simeq
sp(1,1)$. Similarly, $\{\tilde a_{\bi},c_{\bi},\tilde c_{\bi},a_0\}$ contains
two (overlapping) copies of $so(4)$, hence this can be expected to be just
$so(5)\simeq sp(2)$. It is straightforward (e.g. by explicit calculations)
to show that these are, in fact, isomorphic to the $so(1,4)$ and $so(5)$
Lie algebras, respectively. In addition, $\{a_0,c_0,\tilde c_0\}$ spans
$so(1,2)\simeq sl(2,\mathbb{R})$. Therefore, the Lie algebra of the symmetry
group $G(\ker{\cal T},\nu)$ should be $so(1,5)\oplus\mathbb{R}$.

Since $so(1,5)$ is isomorphic to $sl(2,\mathbb{H})$, the Lie algebra of the
special linear group in two dimensions over the quaternions \cite{LM},
$G(\ker{\cal T},\nu)$ factorized by the determinants is locally isomorphic
to $SL(2,\mathbb{H})$. In fact, the explicit form (\ref{eq:4.7.5}) of the
symmetry group $G(\ker{\cal T},\nu)$ is just the complex realization of
$GL(2,\mathbb{H})$ \cite{Ro}, i.e. the quotient of $G(\ker{\cal T},\nu)$ and
the multiplicative group $(0,\infty)$ of the determinants $\det(\Phi)$ is
\emph{precisely} $SL(2,\mathbb{H})$ (which in its actual complex form is also
denoted by $SU^*(4)$ \cite{Ro}), the spin group of $SO(1,5)$.

\subsubsection{The general form of the total energy-momentum}
\label{sub-4.7.2}

Since the solution $(\hat\alpha^A,\bar{\hat\beta}^{A'})$ of (\ref{eq:4.5.1}) is
completely controlled e.g. by the boundary value of $\alpha^A$ on the cut of
the conformal boundary, the functional $H$ yields a well defined positive
definite quadratic form on the space $\ker{\cal T}$ by

\begin{equation}
H^*:\ker{\cal T}\rightarrow[0,\infty):\alpha^A\mapsto\frac{1}{2}\Bigl(H\bigl[
\alpha,\bar\alpha\bigr]+H\bigl[\nu(\alpha),\overline{\nu(\alpha)}\bigr]\Bigr).
\label{eq:4.7.6}
\end{equation}
Then the polarization formula (\ref{eq:4.1p}) (applied to $H^*$) makes it
possible to extend $H^*$ to be a positive Hermitian bilinear form on $\ker
{\cal T}$. This gives, for any $\lambda^A,\mu^A\in\ker{\cal T}$, that

\begin{equation}
H^*\bigl[\lambda,\bar\mu\bigr]=\frac{1}{2}\Bigl(H\bigl[\lambda,\bar\mu\bigr]
+H\bigl[\nu(\mu),\overline{\nu(\lambda)}\bigr]\Bigr), \label{eq:4.7.7}
\end{equation}
by means of which it is easy to see that

\begin{equation}
H^*\bigl[\lambda,\overline{\nu(\mu)}\bigr]=-H^*\bigl[\mu,\overline{\nu(
\lambda)}\bigr], \hskip 20pt
H^*\bigl[\nu(\lambda),\overline{\nu(\mu)}\bigr]=\overline{H^*\bigl[\lambda,
\bar\mu\bigr]} \label{eq:4.7.8}
\end{equation}
hold. In particular, by the first of these $H^*[\lambda,\overline{\nu(
\lambda)}]=0$ for any $\lambda^A\in\ker{\cal T}$.

Let us fix a basis $\{\alpha^A_{\bA},\nu(\alpha_{{\bA}'})^A\}$ in $\ker{\cal T}$
(see subsection \ref{sub-4.7.1}). Then by (\ref{eq:4.7.8}) $H^*$ in this
basis is a $4\times4$ complex matrix

\begin{equation}
H^*=\left(\begin{array}{cc}
      {\tt P} & {\tt Q} \\
 -\bar{\tt Q} & \bar{\tt P} \\
      \end{array}\right), \label{eq:4.7.9}
\end{equation}
where ${\tt P}:=H^*[\alpha_{\bA},\bar\alpha_{{\bB}'}]$ (or, rather ${\tt P}
_{{\bA}{\bB}'}$) is a $2\times2$ Hermitian, while ${\tt Q}:=H^*[\alpha_{\bA},
\overline{\nu(\alpha_{{\bB}'})}]=H^*[\alpha_{\bA},\overline{\nu}(\alpha_{\bB})]$
is a $2\times2$ complex anti-symmetric matrix. By the positivity and rigidity
results of subsection \ref{sub-4.6} the matrix ${\tt P}$ is positive, and
one of its diagonal elements (i.e. the product ${\tt P}_{{\bf 0}{\bf 0}'}{\tt P}
_{{\bf 1}{\bf 1}'}$) is vanishing if and only if the domain of dependence of the
hypersurface $\Sigma$ is locally isometric to the de Sitter spacetime, in
which case the whole $H^*$ is vanishing. By the anti-symmetry of ${\tt Q}$ it
can also be written as $Q\epsilon_{\bA\bB}$, where $Q:=H^*[\alpha_{\bf 0},\bar
\nu(\alpha_{\bf 1})]\in\mathbb{C}$.

Since the choice for ${\bf S}_{\bA}$ is \emph{not} canonical, neither the
basis nor ${\tt P}_{{\bA}{\bB}'}$ is canonically defined. Under the action of
a basis transformation with $\Phi\in G(\ker{\cal T},\nu)$ given by
(\ref{eq:4.7.5}), the components ${\tt P}_{{\bA}{\bB}'}$ and $Q$ change as

\begin{eqnarray*}
&{}&{\tt P}_{{\bA}{\bA}'}\mapsto{\tt P}_{{\bB}{\bB}'}\bigl(A^{\bB}{}_{\bA}
 \bar{A}^{{\bB}'}{}_{{\bA}'}+B^{\bB}{}_{{\bA}'}\bar{B}^{{\bB}'}{}_{\bA}\bigr)
 -Q\,\epsilon_{\bC\bD}A^{\bC}{}_{\bA}B^{\bD}{}_{{\bA}'}-\bar{Q}\,
 \epsilon_{{\bC}'{\bD}'}\bar{A}^{{\bC}'}{}_{{\bA}'}\bar{B}^{{\bD}'}{}_{\bA}, \\
&{}&Q\mapsto Q\,\det(A)+\bar{Q}\,\det(B)+{\tt P}_{{\bA}{\bA}'}A^{\bA}{}_{\bC}
 \bar{B}^{{\bA}'}{}_{\bD}\epsilon^{\bC\bD}.
\end{eqnarray*}
Hence, in the lack of any \emph{further} extra structure on $\ker{\cal T}$,
it does not seem to be able to extract ${\tt P}_{{\bA}{\bB}'}$ as the
energy-momentum 4-vector in a \emph{canonical} way from $H^*$, even though it
shares the positivity and rigidity properties. Such an extra structure would
be needed to reduce $G(\ker{\cal T},\nu)$ to $GL(1,\mathbb{C})\times SL(2,
\mathbb{C})=\{{\rm diag}(A,\bar A)\}$. In this case we would have a well
defined energy-momentum ${\tt P}_{{\bA}{\bB}'}$. However, even if we had such
a reduction, still we would not have any natural symplectic metric to define
mass as the length of this energy-momentum. Still we would have to rule out
the factor $GL(1,\mathbb{C})$.

A simple calculation shows that

\begin{equation}
\det(H^*)=\bigl(\det({\tt P}_{{\bA}{\bB}'})-Q\bar Q\bigr)^2, \label{eq:4.7.10}
\end{equation}
and hence it might be tempting to introduce the concept of mass by $\det
(H^*)$, even if ${\tt P}_{{\bA}{\bB}'}$ cannot be defined in a canonical way.
Since $H^*$ is positive definite, by the rigidity property $\det(H^*)=0$ (i.e.
$\det({\tt P}_{{\bA}{\bB}'})=Q\bar Q$) is equivalent to the vanishing of $H^*$,
i.e. to the local de Sitter nature of the domain of dependence of the
spacelike hypersurface $\Sigma$ in the spacetime. Unfortunately, however,
$\det(\Phi)$ is not one, and hence this determinant is still \emph{not} an
invariant, it is only its sign (positive or zero) that is invariant. We would
need an extra structure on the 2-surface twistor space, e.g. a geometrically
defined \emph{volume 4-form}, by means of which the symmetry group could be
reduced to $SL(2,\mathbb{H})$.

Nevertheless, still $H^*$ is a well defined observable with useful properties,
but in general asymptotically de Sitter spacetimes this cannot be
\emph{interpreted} as energy-momentum in a natural way. In the next
subsection we consider special cases when additional extra structures are
present on the 2-surface twistor space by means of which the analog of the
Bondi mass can be defined by (\ref{eq:4.7.10}).

\subsubsection{Further extra structures on $\ker{\cal T}$ on non-contorted
cuts}
\label{sub-4.7.3}

The 2-surface twistor space can also be considered as the space of the pairs
${\tt Z}^\alpha:=(\lambda^A,{\rm i}\Delta_{A'B}\lambda^B)$, $\lambda^A\in\ker
{\cal T}$; and $\pi_{A'}:={\rm i}\Delta_{A'B}\lambda^B$ is called the secondary
part of the 2-surface twistor ${\tt Z}^\alpha$ (see \cite{PR2}). For any
four twistors ${\tt Z}^\alpha_i=(\lambda^A_i,\pi^i_{A'})$, $i=1,...,4$, one can
define

\begin{equation*}
\varepsilon:=\frac{1}{4}\varepsilon_{\alpha\beta\gamma\delta}{\tt Z}^\alpha_1
{\tt Z}^\beta_2{\tt Z}^\gamma_3{\tt Z}^\delta_4:=\epsilon^{ij}{}_{kl}
\lambda^0_i\lambda^1_j\pi^k_{0'}\pi^l_{1'}=\epsilon^{ijkl}\lambda^0_i
\lambda^1_j\bigl({\edth}\lambda^0_k\bigr)\bigl({\edth}'\lambda^1_l\bigr),
\end{equation*}
where $\epsilon^{ijkl}$ is the Levi-Civita alternating symbol, and we used the
GHP form of the secondary part of the twistors. In general, this $\varepsilon$
is a complex valued \emph{function} on the cut. If, however, $\varepsilon$
were \emph{constant} on the cut, then this $\varepsilon_{\alpha\beta\gamma\delta}$
would define a volume 4-form, a further \emph{extra structure}, on the
2-surface twistor space $\ker{\cal T}$. The presence of this volume form
would reduce the symmetry group to its volume-preserving subgroup, i.e. to
$SL(2,\mathbb{H})$. Then we could define the mass in an invariant way by
(\ref{eq:4.7.10}), where the basis $\{\alpha^A_{\bA},\nu(\alpha_{{\bA}'})^A\}$
would be chosen such that the components of $\varepsilon_{\alpha\beta\gamma\delta}$
are those of the Levi-Civita symbol. It is still not known what are the
necessary and sufficient conditions on the geometrical properties of the cut
that could ensure the existence of such a volume 4-form.

If a Hermitian metric exists on $\ker{\cal T}$, then it defines a
geometrically given volume 4-form (see \cite{PR2}). To introduce this, for
any two twistors ${\tt Z}^\alpha=(\lambda^A,\pi_{A'})$ and ${\tt W}^\alpha=
(\mu^A,\rho_{A'})$ let us define

\begin{equation*}
{\tt h}_{\alpha\beta'}{\tt Z}^\alpha\bar{\tt W}^{\beta'}:=\lambda^A\bar\rho_A+
\pi_{A'}\bar\mu^{A'}=: h(\lambda,\bar\mu).
\end{equation*}
In general, this is \emph{not} constant on the cut, but when it is, then it
defines a (conformally invariant) Hermitian metric on $\ker{\cal T}$ with
signature $(+,+,-,-)$. By definition, the group of the linear transformations
of $\ker{\cal T}$ preserving this metric is $SU(2,2)$, the spin group of
$SO(2,4)$. Hence, in the presence of such a Hermitian metric, the symmetry
group of the 2-surface twistor space is $SL(2,\mathbb{H})\cap SU(2,2)\simeq
SP(1,1)$, just the spin group of the de Sitter group $SO(1,4)=SO(1,5)\cap
SO(2,4)$. Thus, the de Sitter group (in its spinor representation) emerges
as the (reduced) symmetry group of the 2-surface twistor space. The
2-surfaces for which such a Hermitian scalar product ${\tt h}_{\alpha\alpha'}$
exists are called \emph{non-contorted}, and these are known to be just the
2-surfaces which can be embedded, at least locally, into a conform Minkowski
spacetime with their first and second fundamental forms \cite{Tod86,Je84,Je87}.
For example, the cuts of an \emph{intrinsically locally conformally flat}
conformal boundary are all non-contorted (see also \cite{Tod83}). Thus, we
have a large class of radiative spacetimes in which the cuts are
non-contorted.

Thus, suppose that the cut ${\cal S}$ is non-contorted and hence the
Hermitian metric ${\tt h}_{\alpha\alpha'}$ exists on $\ker{\cal T}$. Then a
straightforward calculation yields that

\begin{equation}
h(\nu(\lambda),\overline{\nu(\mu)})=\overline{h(\lambda,\bar\mu)},
\hskip 20pt
h(\lambda,\overline{\nu(\mu)})=-h(\mu,\overline{\nu(\lambda)});
\label{eq:4.7.11}
\end{equation}
i.e. the Hermitian metric is anti-invariant under the action of $\nu$. (It
might be worth noting that these equations already ensure the existence of
the infinity twistor on $\ker{\cal T}$ \cite{Tod15}.) Thus, in particular,
$h(\lambda,\overline{\nu(\lambda)})=0$ for any $\lambda^A$; i.e. $\lambda^A$
and $\nu(\lambda)^A$ are orthogonal to each other with respect to $h$, too.
The properties (\ref{eq:4.7.11}) make it possible to choose the basis
$\{\alpha^A_{\bA},\nu(\alpha_{{\bA}'})^A\}$ in $\ker{\cal T}$ in a more specific
way. Namely, let us choose $\alpha^A_{\bf 0}$ such that $h(\alpha_{\bf 0},\bar
\alpha_{{\bf 0}'})=1$. Then by (\ref{eq:4.7.11}) $\nu(\alpha_{{\bf 0}'})^A$ is
$h$-orthogonal to $\alpha^A_{\bf 0}$; and its norm is also 1. Thus the spinors
in the 2-plane $[\alpha^A_{\bf 0}]$ have positive norm. Then let us choose
$\alpha^A_{\bf 1}$ to be $h$-orthogonal to the 2-plane $[\alpha^A_{\bf 0}]$.
Because of the signature of ${\tt h}_{\alpha\alpha'}$ its norm is negative, and
we choose it to be $-1$. Then by (\ref{eq:4.7.11}) the norm of $\nu(\alpha
_{{\bf 1}'})^A$ is $-1$, and it is $h$-orthogonal not only to $\alpha^A_{\bf 1}$,
but to the whole 2-plane $[\alpha^A_{\bf 0}]$, too. Hence, the resulting basis
$\{\alpha^A_{\bA},\nu(\alpha_{{\bA}'})^A\}$ is $h$--orthonormal. Thus, in this
basis, $h$ and its contravariant form $h^{-1}$ have the form ${\rm diag}
(1,-1,1,-1)$. Hence, the invariants of $H^*$ are given by the trace of the
first four powers of $h^{-1}H^*$. These are $2({\tt P}_{{\bf 0}{\bf 0}'}-{\tt P}
_{{\bf 1}{\bf 1}'})$, $4(\det({\tt P}_{{\bA}{\bB}'})-Q\bar Q)+2({\tt P}
_{{\bf 0}{\bf 0}'}-{\tt P}_{{\bf 1}{\bf 1}'})^2$, $6({\tt P}_{{\bf 0}{\bf 0}'}
-{\tt P}_{{\bf 1}{\bf 1}'})(\det({\tt P}_{{\bA}{\bB}'})-Q\bar Q)+2({\tt P}
_{{\bf 0}{\bf 0}'}-{\tt P}_{{\bf 1}{\bf 1}'})^3$ and $2(2(\det({\tt P}_{{\bA}{\bB}'})-Q
\bar Q)+({\tt P}_{{\bf 0}{\bf 0}'}-{\tt P}_{{\bf 1}{\bf 1}'})^2)^2$, respectively.
Therefore, $H^*$ has two independent invariants, both of them real:

\begin{equation}
{\tt M}^2:=\det\bigl({\tt P}_{{\bA}{\bB}'}\bigr)-Q\bar Q, \hskip 30pt
{\tt N}:={\tt P}_{{\bf 0}{\bf 0}'}-{\tt P}_{{\bf 1}{\bf 1}'}.
\end{equation}
Clearly, ${\tt M}$ is analogous to the Bondi mass of asymptotically flat
spacetimes, but the meaning of the other invariant, the difference ${\tt N}$
of the value of the functional $H^*$ on the positive and negative norm
elements of the subspace ${\bf S}_{\bA}$, is still unclear. Further
investigation of these quantities, viz. their group theoretical properties,
their alternative expressions, the analog of `mass-loss', etc. will be given
in a separate paper.

\section{Appendix: The Sen--Witten operator in weighted
function spaces}
\label{sec-A}

The aim of this appendix is to introduce and develop the necessary functional
analytic tools by means of which we can prove the existence and uniqueness
of the solution of the renormalized Witten equation
(\ref{eq:4.6.5a})-(\ref{eq:4.6.5b}) on \emph{asymptotically hyperboloidal}
hypersurfaces in a rigorous way. The key ideas and statements are motivated
by those of \cite{ChCh} developed for \emph{asymptotically flat} Riemannian
manifolds. First we introduce the appropriate weighted function spaces and
state two of their properties. Then we prove a number of estimates for the
renormalized Sen--Witten operator $\tilde{\cal D}$, by means of which finally
we prove that $\tilde{\cal D}$ is an \emph{isomorphism}. In this appendix
all the quantities and objects are in the \emph{physical} spacetime, but,
for the sake of simplicity, we leave the `hats' off of them.

\subsection{The weighted function spaces on asymptotically
hyperboloidal hypersurfaces}
\label{sub-A.1}

Following the general ideas of \cite{ChCh} we define the weighted Lebesgue
spaces of pairs of Weyl spinor fields, i.e. of Dirac spinor fields, on the
asymptotically hyperboloidal hypersurfaces discussed in subsection
\ref{sub-4.4}. (The following concepts can be generalized in a natural way to
cross-sections of Hermitian vector bundles over asymptotically hyperboloidal
$n$-manifolds, even with more than one asymptotic end.) Thus, we assume that
all the geometric structures of those hypersurfaces $\Sigma$ are present and
are smooth. In particular, $h_{ab}$ is an asymptotically hyperboloidal
(negative definite) metric and $\chi_{ab}$ is the extrinsic curvature with the
asymptotic form (\ref{eq:4.4.2}) and (\ref{eq:4.4.3}), respectively. We also
consider the future pointing unit timelike normal (and hence the positive
definite Hermitian metric on the spinor spaces), the conformal factor
$\Omega$ (or radial coordinate $1/\Omega$) and the foliation ${\cal S}_\Omega$
etc. to be given. Clearly, the conformal factor can be assumed to be one on
some `large enough' compact subset $K\subset\Sigma$ with smooth boundary
$\partial K\approx S^2$ and strictly monotonically decreasing on $\Sigma-K$.
${\cal D}_e$ will denote the covariant derivative operator acting on the
Dirac spinor fields, determined by the Sen connection.

For $\delta\in\mathbb{R}$ and a measurable Dirac spinor field $\Phi^\alpha=
(\sigma^A,\bar\pi^{A'})$ on $\Sigma$ we define

\begin{equation}
\bigl(\Vert\Phi^\alpha\Vert_\delta\bigr)^2:=\int_\Sigma\Omega^{-2\delta}\sqrt{2}
t_{AA'}\bigl(\sigma^A\bar\sigma^{A'}+\pi^A\bar\pi^{A'}\bigr){\rm d}\Sigma
\label{eq:A.1.1}
\end{equation}
and let $L^\delta_2(\Sigma,\mathbb{D}^\alpha)$, or shortly $L^\delta_2$, denote
the space of the spinor fields $\Phi^\alpha$ for which $\Vert\Phi^\alpha\Vert
_\delta<\infty$. This space is a Banach space with the norm $\Vert\, .\,\Vert
_\delta$, which is, in fact, a Hilbert space with the obvious Hermitian scalar
product: $\langle\Phi^\alpha,\Psi^\alpha\rangle:=\int_\Sigma\Omega^{-2\delta}
\sqrt{2}t_{AA'}(\sigma^A\bar\alpha^{A'}+\bar\pi^{A'}\beta^A){\rm d}\Sigma$ for
any $\Phi^\alpha=(\sigma^A,\bar\pi^{A'})$ and $\Psi^\alpha=(\alpha^A,\bar\beta
^{A'})$.

With the convention $\Vert\Phi^\alpha\Vert_{0,\delta}:=\Vert\Phi^\alpha\Vert
_\delta$, for any $s=0,1,2,...$ let us define\footnote{See the footnote to
equation (\ref{eq:4.6.11}) in subsection \ref{sub-4.6.2}.}

\begin{equation}
\bigl(\Vert\Phi^\alpha\Vert_{s,\delta}\bigr)^2:=\sum^s_{k=0}\int_\Sigma\Omega
^{-2\delta}\vert {\cal D}_{e_k}...{\cal D}_{e_1}\Phi^\alpha\vert^2{\rm d}\Sigma
\label{eq:A.1.2}
\end{equation}
for any measurable spinor field $\Phi^\alpha$ with measurable ${\cal D}
_e$--derivatives (in the weak sense) up to order $s$, where

\begin{eqnarray}
\vert{\cal D}_{e_k}...{\cal D}_{e_1}\Phi^\alpha\vert^2:\!\!\!\!&=\!\!\!\!&
 \sqrt{2}t_{AA'}(-h^{e_1f_1})...(-h^{e_kf_k})\Bigl(({\cal D}_{e_k}...{\cal D}_{e_1}
 \sigma^A)({\cal D}_{f_k}...{\cal D}_{f_1}\bar\sigma^{A'})+ \nonumber \\
\!\!\!\!&+\!\!\!\!&({\cal D}_{e_k}...{\cal D}_{e_1}\pi^A)({\cal D}_{f_k}
 ...{\cal D}_{f_1}\bar\pi^{A'})\Bigr) \label{eq:A.1.3}
\end{eqnarray}
is the positive definite pointwise norm of the $k$th derivative. Then the
space $H_{s,\delta}(\Sigma,\mathbb{D}^\alpha)$, or shortly $H_{s,\delta}$, is
defined to be the space of those spinor fields $\Phi^\alpha$ for which $\Vert
\Phi^\alpha\Vert_{s,\delta}<\infty$. This is a Hilbert space with the obvious
scalar product. Note that the spaces $H_{s,\delta}$ are \emph{not} the familiar
weighted Sobolev spaces (see e.g. \cite{ChCh}), rather these are the classical
Sobolev spaces with the overall weighted volume element. The concept of this
kind of spaces is motivated by the observation that the fall-off rate of both
the solution of the renormalized Witten equation and its derivatives are
the same (see subsection \ref{sub-4.6.2}). By definition $H_{0,\delta}=L^\delta
_2$, and $H_{s,0}$ is just the classical Sobolev space $H_s(\Sigma,\mathbb{D}
^\alpha)$.

We also need to define the space $C^s_\delta(\Sigma,\mathbb{D}^\alpha)$ (or
simply $C^s_\delta$) of the $C^s$ Dirac spinor fields $\Phi^\alpha$ on $\Sigma$
for which\footnote{See the footnote to equation (\ref{eq:4.6.11}) in
subsection \ref{sub-4.6.2}.}

\begin{equation}
\Vert\Phi^\alpha\Vert_{C^s_\delta}:=\sup\Bigl\{\sum^s_{k=0}\Omega^{-\delta}\vert
{\cal D}_{e_k}...{\cal D}_{e_1}\Phi^\alpha\vert(p)\,\vert\,p\in\Sigma\,\Bigr\}
<\infty. \label{eq:A.1.4}
\end{equation}
$C^s_\delta$ is a Banach space with the norm $\Vert\,.\,\Vert_{C^s_\delta}$.
The concept of these spaces is motivated by the weighted function spaces of
$C^s$ tensor fields of \cite{ChCh}, but note that the norms are different and
hence the spaces $C^s_\delta$ are different from those of \cite{ChCh}. Here
the weight functions in front of the different order terms are the
\emph{same}. The significance of these spaces is that they give a control on
the fall-off properties of the spinor fields. In particular, $\Phi^\alpha$
is continuous and $\Phi^\alpha=O(\Omega^k)$ (in the sense that $\Omega^{-k}\Phi
^\alpha$ can be extended to the conformally compactified $\Sigma$ as a
continuous spinor field) precisely when $\Phi^\alpha\in C^0_\delta$ with
$\delta=k-\frac{1}{2}$.

It follows immediately from the definitions that $H_{s,\delta}\subset
H_{s',\delta'}$ if $\delta'\leq\delta$ and $s'\leq s$. The next lemma is
analogous to the Rellich lemma for the classical Sobolev spaces over compact
domains:

\begin{lemma}
If $\delta'<\delta$ and $s'<s$, then the injection $i:H_{s,\delta}\rightarrow
H_{s',\delta'}$ is compact.  \label{l:A.1.1}
\end{lemma}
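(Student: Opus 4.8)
The plan is to combine the classical Rellich--Kondrachov theorem on compact pieces of $\Sigma$ with a ``small tail'' estimate at the asymptotic end that comes from the \emph{strict} inequality $\delta'<\delta$. Write $U_\epsilon:=\{p\in\Sigma:\Omega(p)<\epsilon\}$ for the end of $\Sigma$, and $K_\epsilon:=\{p\in\Sigma:\Omega(p)\geq\epsilon\}$. By the asymptotically hyperboloidal structure of $\Sigma$ described in subsection \ref{sub-4.4} (single asymptotic end, no inner boundary, so that $\Sigma\cup{\cal S}$ is a compact manifold-with-boundary on which $\Omega$ vanishes exactly on ${\cal S}$), each $K_\epsilon$ is a compact submanifold-with-boundary of $\Sigma$, and the sets $K_{1/j}$, $j=1,2,\dots$, exhaust $\Sigma$. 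On any fixed $K_\epsilon$ the weight functions $\Omega^{-2\delta}$ and $\Omega^{-2\delta'}$ are bounded between positive constants, and the Sen connection ${\cal D}_e$ differs from the intrinsic Levi--Civita operator $D_e$ by the (smooth, hence on $K_\epsilon$ bounded) extrinsic curvature term; consequently the norms $\Vert\,.\,\Vert_{s,\delta}$ and $\Vert\,.\,\Vert_{s',\delta'}$ restricted to $K_\epsilon$ are equivalent to the classical Sobolev norms of $H_s(K_\epsilon,\mathbb{D}^\alpha)$ and $H_{s'}(K_\epsilon,\mathbb{D}^\alpha)$, respectively.

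First I would record the tail estimate. For $\Phi^\alpha\in H_{s,\delta}$ and any $\epsilon>0$, since $s'<s$ (so $s'\leq s$) and $\delta-\delta'>0$, using $\Omega<\epsilon$ on $U_\epsilon$,
\begin{equation*}
\sum_{k=0}^{s'}\int_{U_\epsilon}\Omega^{-2\delta'}\vert{\cal D}_{e_k}\dots{\cal D}_{e_1}\Phi^\alpha\vert^2\,{\rm d}\Sigma=\sum_{k=0}^{s'}\int_{U_\epsilon}\Omega^{2(\delta-\delta')}\,\Omega^{-2\delta}\vert{\cal D}_{e_k}\dots{\cal D}_{e_1}\Phi^\alpha\vert^2\,{\rm d}\Sigma\leq\epsilon^{2(\delta-\delta')}\bigl(\Vert\Phi^\alpha\Vert_{s,\delta}\bigr)^2 .
\end{equation*}
Thus, writing $\Vert\,.\,\Vert_{s',\delta',A}$ for the norm (\ref{eq:A.1.2}) with the integrals taken over $A\subset\Sigma$, for a sequence $\{\Phi^\alpha_n\}$ with $\Vert\Phi^\alpha_n\Vert_{s,\delta}\leq C$ one has $\Vert\Phi^\alpha_n\Vert_{s',\delta',U_\epsilon}\leq\epsilon^{\delta-\delta'}C$ uniformly in $n$, and this tends to $0$ as $\epsilon\to0$.

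Next I would extract a convergent subsequence by a diagonal argument. On each compact $K_{1/j}$ the restrictions $\{\Phi^\alpha_n|_{K_{1/j}}\}$ are bounded in $H_s(K_{1/j},\mathbb{D}^\alpha)$, so by the Rellich--Kondrachov theorem for sections of a Hermitian vector bundle over a compact manifold-with-boundary (applicable since $s'<s$) they have a subsequence converging in $H_{s'}(K_{1/j},\mathbb{D}^\alpha)$, equivalently in the $\Vert\,.\,\Vert_{s',\delta',K_{1/j}}$-norm. Iterating in $j$ and passing to the diagonal subsequence $\{\Psi^\alpha_n\}$ yields a sequence that converges, for every $j$, in $\Vert\,.\,\Vert_{s',\delta',K_{1/j}}$. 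Given $\eta>0$, choose $j$ so large that $(1/j)^{\delta-\delta'}C\leq\eta/4$; the tail estimate then gives $\Vert\Psi^\alpha_m-\Psi^\alpha_n\Vert_{s',\delta',U_{1/j}}\leq\eta/2$ for all $m,n$, while on $K_{1/j}$ the sequence is Cauchy, so $\Vert\Psi^\alpha_m-\Psi^\alpha_n\Vert_{s',\delta',K_{1/j}}\leq\eta/2$ for $m,n$ large; adding the two contributions shows $\{\Psi^\alpha_n\}$ is Cauchy in $H_{s',\delta'}$. Since $H_{s',\delta'}$ is a Hilbert space, hence complete, it converges, which proves that $i$ is compact.

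The genuinely non-routine ingredient is the Rellich--Kondrachov theorem on a compact manifold-with-boundary for bundle-valued sections, which is classical and can be invoked directly; the sharp interplay $s'<s$ enters there, while $\delta'<\delta$ enters only through the tail estimate. The remaining points — that $K_\epsilon$ is genuinely compact and that the weighted and the classical Sobolev norms are equivalent on it — are exactly the content of the asymptotically hyperboloidal structure established in subsection \ref{sub-4.4}, and the rest is the bookkeeping in the diagonal/tail combination above. The argument extends verbatim to hypersurfaces with finitely many asymptotic ends, applying the tail estimate to each end separately.
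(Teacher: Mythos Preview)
Your argument is correct and is precisely the standard ``tail estimate plus compact Rellich plus diagonal subsequence'' proof; this is exactly the strategy of Lemma~2.1 of \cite{ChCh} to which the paper defers, so you have simply written out in detail what the paper cites. The one point worth flagging is that the compactness of $K_\epsilon$ and the equivalence of the weighted and unweighted Sobolev norms on it are, as you say, immediate from the conformal compactification of $\Sigma$ described in subsections~\ref{sub-4.4} and~\ref{sub-A.1}, so nothing is missing.
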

\begin{proof}
The statement is the adaptation of Lemma 2.1 of \cite{ChCh} to the actual
spaces, and the proof is similar to that.
\end{proof}

Also, as a simple consequence of the definitions, $C^{s'}_{\delta'}\subset H_{s,
\delta}$ holds if $s'\geq s$ and $\delta'>\delta+1$. (For tensor fields on
$n$ dimensional $\Sigma$ the latter condition would be $\delta'>\delta+(n-1)
/2$, while on asymptotically flat $\Sigma$ it is known to be $\delta'>\delta
+n/2$.) In particular, the spinor fields with $o(\Omega^{\frac{3}{2}})$
fall-off, i.e. the elements of $C^0_{\delta'}$ with $\delta'>1$, are square
integrable, while those with $O(\Omega^2)$ fall-off belong to $L^\delta_2$ with
$\delta<\frac{1}{2}$. The next lemma states that, for appropriate indices,
the inclusion holds in the opposite direction, too. This statement is
analogous to the classical Sobolev lemma, and is the adaptation of Lemma 2.4
of \cite{ChCh} to the present case:

\begin{lemma}
If $s\geq s'+2$ and $\delta>\delta'-1$, then $H_{s,\delta}\subset C^{s'}
_{\delta'}$.
\label{l:A.1.2}
\end{lemma}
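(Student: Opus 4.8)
The plan is to adapt the classical Sobolev embedding theorem to the present setting of asymptotically hyperboloidal hypersurfaces, following the strategy of Lemma~2.4 of \cite{ChCh} but accounting for the fact that here the weight functions in front of all the derivative terms are the \emph{same} (rather than increasing in powers of $\Omega$), and that the underlying geometry is hyperboloidal rather than asymptotically flat. The key point is that the statement is local in nature: since $\Sigma$ decomposes as the union of a compact piece $K$ (on which $\Omega\equiv 1$ and all the weighted norms are equivalent to the ordinary Sobolev and $C^s$ norms, so the classical Sobolev lemma applies directly) and the asymptotic end $\Sigma-K$ (foliated by the surfaces $\hat{\cal S}_\Omega$, with the metric of the form (\ref{eq:4.4.2})), it suffices to establish the embedding near the conformal boundary.

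First I would cover the asymptotic end $\Sigma-K$ by a locally finite family of coordinate `annuli' $A_j$, each of the form $\{\,2^{-(j+1)}<\Omega<2^{-j}\,\}$ (or, equivalently, unit-width shells in the radial coordinate $\tilde r=1/\Omega$ after a logarithmic change of variable $\tau:=-\ln\Omega$, so that $\tau\in[j,j+1]$), together with the angular coordinates $\tilde x^\mu$ on ${\cal S}_0$. In these coordinates, by (\ref{eq:4.4.2}) and (\ref{eq:4.4.3}) and the smoothness assumptions of subsection~\ref{sub-A.1}, the rescaled metric and the connection coefficients of ${\cal D}_e$ have \emph{uniformly bounded} $C^\infty$-norms on each $A_j$, with constants independent of $j$; this is the manifestation of the fact that an asymptotically hyperboloidal end, after this logarithmic substitution, looks like a half-cylinder $[0,\infty)\times S^2$ with a uniformly controlled metric. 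On a single such shell, which is a fixed compact manifold-with-boundary, the ordinary Sobolev embedding $H_s\hookrightarrow C^{s'}$ holds whenever $s\geq s'+2$ (the `$+2$' coming from $\dim\Sigma=3$, via $s-s'>n/2=3/2$), with an embedding constant that can be taken uniform in $j$ because the geometry is uniformly controlled. Tracking how the weight $\Omega^{-2\delta}$ in (\ref{eq:A.1.2}) and $\Omega^{-\delta}$ in (\ref{eq:A.1.4}) behave — on $A_j$ the weight is comparable to the constant $2^{2j\delta}$, resp.\ $2^{j\delta}$, up to a fixed multiplicative factor — one gets on each shell a pointwise bound $\Vert\Phi^\alpha\Vert_{C^{s'}_\delta(A_j)}\lesssim 2^{-j(\delta-\delta')}\cdot 2^{j\delta}\,(\text{\,local }H_s\text{ norm without weight}) = 2^{j\delta'}(\text{local }H_s)$; but the shift between the squared $L_2$-type weight $2^{2j\delta}$ and the $C^0$-type weight $2^{j\delta'}$, together with the extra loss $\sqrt{|A_j|}$-type factor from converting an integral bound to a sup bound, is precisely what the hypothesis $\delta>\delta'-1$ is designed to absorb (this inequality is the analog of the `$\delta'>\delta+(n-1)/2$' gap noted in the text for the reverse inclusion, reflecting the dimension drop).

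The remaining work is to sum over $j$: from the shell-by-shell estimate one obtains $\sup_j\,2^{-j\delta'}\Vert\Phi^\alpha\Vert_{C^{s'}(A_j)}\lesssim\big(\sum_j 2^{-2j(\delta-\delta')-\text{(correction)}}\Vert\Phi^\alpha\Vert^2_{H_{s,\delta}(A_j)}\big)^{1/2}$, and since the exponent $-2(\delta-\delta')$ plus the dimensional correction is negative precisely under $\delta>\delta'-1$, the geometric-type series is summable, giving $\Vert\Phi^\alpha\Vert_{C^{s'}_{\delta'}(\Sigma-K)}\lesssim\Vert\Phi^\alpha\Vert_{H_{s,\delta}(\Sigma-K)}$; combining with the compact piece $K$ finishes the proof. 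The main obstacle I anticipate is the careful bookkeeping of the weight exponents through the rescaling to the cylinder: one must verify that the combination of (i) the factor from $\Omega^{-2\delta}\,{\rm d}\Sigma$ written in the $\tau$-variable, (ii) the volume of a unit shell, and (iii) the conversion from an $L_2$-bound to a pointwise bound via the \emph{un}weighted Sobolev inequality, collapses exactly to the single inequality $\delta>\delta'-1$, with the dimension $n=3$ entering only through the requirement $s\geq s'+2$. A secondary technical point is justifying the uniformity in $j$ of the local Sobolev constants, which I would handle by the standard argument that all shells $A_j$ are, after the $\tau$-translation, diffeomorphic to a fixed domain with metrics converging in $C^\infty$ to the limiting cylindrical metric (extracted from the leading terms in (\ref{eq:4.4.2})), so a compactness/continuity argument gives a uniform bound; alternatively one can simply invoke that the $C^k$-norms of the metric coefficients are bounded independently of $j$ by the smoothness of the conformally rescaled structures on the compact $\overline{\Sigma}$. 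As with Lemma~\ref{l:A.1.1}, the cleanest write-up is to say that the argument is the adaptation of Lemma~2.4 of \cite{ChCh} to the present weighted spaces, with the asymptotically flat dyadic decomposition replaced by the hyperboloidal (logarithmic) one.
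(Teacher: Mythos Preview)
Your proposal is essentially correct and follows the same route as the paper's own (very terse) proof, which simply invokes Lemmas 2.3 and 2.4 of \cite{ChCh} with the single modification that on an $n$-dimensional asymptotically hyperboloidal $\Sigma$ the scaling map $T_\varepsilon$ is an isomorphism $H_{s,\delta}\to H_{s,\delta'}$ for $\delta>\delta'-(n-1)/2$ rather than $\delta>\delta'-n/2$; for $n=3$ this is exactly your condition $\delta>\delta'-1$. Your dyadic shell decomposition in $\Omega$ is precisely the concrete realization of the ChCh scaling argument, so there is no genuine difference in strategy.

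Two small points of tidying. First, your claim that the asymptotic end ``looks like a half-cylinder'' after the substitution $\tau=-\ln\Omega$ is not literally true for the \emph{physical} metric $\hat h_{ab}$ (the angular part still carries a factor $\tilde r^2=e^{2\tau}$); what is true, and what you actually use, is that $(\Sigma,\hat h_{ab})$ has \emph{bounded geometry} (bounded curvature and injectivity radius bounded below), so that geodesic balls of a fixed radius are mutually uniformly comparable and the local Sobolev constants can be taken uniform. Equivalently, you may pull back each shell by the ChCh scaling map $T_{2^{-j}}$ to a fixed reference annulus; this is the cleaner phrasing and is what Lemma~2.3 of \cite{ChCh} does. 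Second, your displayed weight bookkeeping in the middle paragraph is garbled as written (the intermediate equalities do not parse); the correct accounting is that on a shell $A_j$ one has ${\rm vol}(A_j)\sim\Omega^{-(n-1)}\sim 2^{j(n-1)}$ (rather than $\Omega^{-n}$ as in the asymptotically flat case, since the radial thickness in $\hat h$-arclength is $O(1)$, not $O(\tilde r)$), and it is this single power shift that converts $n/2$ to $(n-1)/2$. Once you state that clearly, the rest of your argument goes through.
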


\begin{proof} The proof can be based on the proof of Lemma 2.3 and Lemma 2.4
of \cite{ChCh}, and on the classical Sobolev lemma $H_s\subset C^{s'}$, $s>s'+
n/2$, for compact domains and its extension from the $n$ dimensional
asymptotically Euclidean to asymptotically hyperboloidal geometries. The only
deviation from the asymptotically flat case is that in Lemma 2.3 of
\cite{ChCh} the map $T_\varepsilon: H_{s,\delta}\rightarrow H_{s,\delta'}$ on $n$
dimensional asymptotically hyperboloidal $\Sigma$ is a topological vector
space isomorphism for $\delta>\delta'-(n-1)/2$, rather than for $\delta>
\delta'-n/2$.
\end{proof}

\subsection{The basic estimates for $\tilde{\cal D}$}
\label{sub-A.2}

Since the Sen--Witten operator, given explicitly by  ${\cal D}^\alpha{}_\beta
\Phi^\beta=({\cal D}^A{}_{B'}\bar\pi^{B'},{\cal D}^{A'}{}_{B}\sigma^B)$, is
elliptic, $\tilde{\cal D}$, given explicitly by $\tilde{\cal D}^\alpha{}_\beta
\Phi^\beta:={\cal D}^\alpha{}_\beta\Phi^\beta+\frac{3}{2}K\Phi^\alpha$, is also
elliptic. If $\Phi^\alpha\in H_{1,\delta}$ (with arbitrary $\delta\in
\mathbb{R}$), then

\begin{eqnarray*}
\Bigl(\Vert{\cal D}^\alpha{}_\beta\Phi^\beta\Vert_{0,\delta}\Bigr)^2\!\!\!\!&{}
 \!\!\!\!&=\int_\Sigma\Omega^{-2\delta}\Bigl(\vert{\cal D}_{AA'}\bar\pi^{A'}\vert^2
 +\vert{\cal D}_{A'A}\sigma^A\vert^2\Bigr){\rm d}\Sigma \\
\leq\!\!\!\!&{}\!\!\!\!&\int_\Sigma\Omega^{-2\delta}\Bigl(\vert{\cal D}_{AA'}
 \bar\pi^{A'}\vert^2+\frac{3}{2}\vert{\cal D}_{(A'B'}\bar\pi_{C')}\vert^2+\vert
 {\cal D}_{A'A}\sigma^A\vert^2+\frac{3}{2}\vert{\cal D}_{(AB}\sigma_{C)}\vert^2
 \Bigr){\rm d}\Sigma \\
=\!\!\!\!&{}\!\!\!\!&\frac{3}{2}\int_\Sigma\Omega^{-2\delta}\Bigl(\vert{\cal D}
 _e\bar\pi^{A'}\vert^2+\vert{\cal D}_e\sigma^A\vert^2\Bigr){\rm d}\Sigma \leq
 \frac{3}{2}\Bigl(\Vert\Phi^\alpha\Vert_{1,\delta}\Bigr)^2
\end{eqnarray*}
holds, i.e. ${\cal D}:H_{1,\delta}\rightarrow L^\delta_2$ is a \emph{bounded}
linear operator. (Here we used the orthogonal decomposition (\ref{eq:4.6.2})
of the ${\cal D}_e$-derivative of the spinor fields.) Then

\begin{eqnarray*}
\Vert\tilde{\cal D}^\alpha{}_\beta\Phi^\beta\Vert_{0,\delta}\!\!\!\!&=\!\!\!\!&
 \Vert{\cal D}^\alpha{}_\beta\Phi^\beta+\frac{3}{2}K\Phi^\alpha\Vert_{0,\delta}\leq
 \Vert{\cal D}^\alpha{}_\beta\Phi^\beta\Vert_{0,\delta}+\frac{3}{2}
 \sqrt{\frac{\Lambda}{6}}\Vert\Phi^\alpha\Vert_{0,\delta} \\
\!\!\!\!&\leq\!\!\!\!&\sqrt{\frac{3}{2}}\Vert\Phi^\alpha\Vert_{1,\delta}+
 \frac{3}{2}\sqrt{\frac{\Lambda}{6}}\Vert\Phi^\alpha\Vert_{0,\delta}\leq
 \sqrt{\frac{3}{2}}\Bigl(1+\sqrt{\frac{\Lambda}{4}}\Bigr)\Vert\Phi^\alpha
 \Vert_{1,\delta};
\end{eqnarray*}
i.e. the renormalized Sen--Witten operator

\begin{equation}
\tilde{\cal D}:H_{1,\delta}\rightarrow L^\delta_2 \label{eq:A.2.1}
\end{equation}
is also bounded, and hence \emph{continuous}, for any $\delta\in\mathbb{R}$.

Next we prove a number of lemmas that we need in the proof of the isomorphism
theorem for $\tilde{\cal D}$ in the next subsection. The first of these is
the so-called fundamental elliptic estimate:

\begin{lemma} Let the dominant energy condition hold on $\Sigma$ and let $B
\subset\Sigma$ be an open set with compact closure and smooth boundary. Then

\begin{equation}
\Vert\Phi^\alpha_A\Vert_{1,0}\leq\sqrt{2}\Vert\tilde{\cal D}^\alpha{}_\beta\Phi
^\beta_A\Vert_{0,0}+\sqrt{1+\frac{3}{4}\Lambda}\Vert\Phi^\alpha_A\Vert_{0,0}
\label{eq:A.2.2a}
\end{equation}
for any $\Phi^\alpha_A\in H_{1,0}$, $\supp(\Phi^\alpha_A)\subset\Sigma-B$. Also,

\begin{equation}
\Vert\Phi^\alpha_B\Vert_{H_1(B)}\leq\sqrt{2}\Vert\tilde{\cal D}^\alpha{}_\beta\Phi
^\beta_B\Vert_{L_2(B)}+\sqrt{1+\frac{3}{4}\Lambda}\Vert\Phi^\alpha_B\Vert_{L_2(B)}
\label{eq:A.2.2b}
\end{equation}
for any $\Phi^\alpha_B\in H_{1,0}$, $\supp(\Phi^\alpha_B)\subset B$.
\label{l:A.2.1}
\end{lemma}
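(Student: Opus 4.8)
The plan is to derive both inequalities from the Sen--Witten (Witten) identity, which in the gauge $\tilde{\cal D}_{A'A}\sigma^A+\frac32 K\bar\pi_{A'}=\rho_A$, $\tilde{\cal D}_{AA'}\bar\pi^{A'}+\frac32 K\sigma_A=\bar\omega_{A'}$ (i.e. for spinor fields that need not solve the homogeneous Witten equation) takes the schematic form
\begin{equation*}
\oint_{\partial\Sigma}\tilde u(\cdot)_{ab}
=\int_\Sigma\Bigl(-t^{AA'}h^{ef}(\tilde{\cal D}_e\Phi^\alpha)(\overline{\tilde{\cal D}}_f\Phi^\alpha)
+\tfrac12\varkappa t^aT_{ab}\vert\Phi\vert^2
-\langle\text{source terms in }\tilde{\cal D}\Phi,\rho,\bar\omega\rangle\Bigr)\,{\rm d}\Sigma.
\end{equation*}
For $\Phi^\alpha$ supported in $\Sigma-B$ (so the boundary term at $\partial B$ vanishes and, since $\Phi^\alpha\in H_{1,0}$ has $o(\Omega^{3/2})$ fall-off, the boundary term at $\mathscr I^+$ vanishes too), integrating the identity applied to $\Phi^\alpha$ itself gives
\begin{equation*}
\bigl(\Vert\tilde{\cal D}\Phi^\alpha_A\Vert_{0,0}\bigr)^2
=\bigl\Vert t^{AA'}h^{ef}\tilde{\cal D}_e\Phi\bigr\Vert^2
-\tfrac12\varkappa\int_\Sigma t^aT_{ab}\vert\Phi\vert^2\,{\rm d}\Sigma
+(\text{3-surface twistor term}),
\end{equation*}
where the 3-surface twistor term $\int_\Sigma \tfrac{4}{\varkappa}t^{AA'}t^{BB'}t^{CC'}(\tilde{\cal D}_{(AB}\Phi_{C)})(\overline{\tilde{\cal D}_{(A'B'}\Phi_{C')}})\,{\rm d}\Sigma$ is non-negative by positivity of the Sen metric, and $-\tfrac12\varkappa\int t^aT_{ab}\vert\Phi\vert^2\le 0$ under the dominant energy condition. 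Hence the full Dirichlet-type energy $\int_\Sigma(-t^{AA'}h^{ef})(\tilde{\cal D}_e\Phi)(\overline{\tilde{\cal D}}_f\Phi)\,{\rm d}\Sigma$ is controlled by $\Vert\tilde{\cal D}\Phi^\alpha\Vert_{0,0}^2$, up to the zeroth-order contributions one must track.

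Next I would convert this into the $H_{1,0}$-norm. The point is that $\tilde{\cal D}_e\Phi^\alpha$ differs from ${\cal D}_e\Phi^\alpha$ only by the algebraic term $\frac12 K P_e^{\;\cdot}\Phi$ of (\ref{eq:4.3}), so $\vert{\cal D}_e\Phi\vert^2\le \vert\tilde{\cal D}_e\Phi\vert^2 + (\text{const}\cdot\Lambda)\vert\Phi\vert^2$ pointwise after an elementary Cauchy--Schwarz/Young step, and then
\begin{equation*}
\bigl(\Vert\Phi^\alpha_A\Vert_{1,0}\bigr)^2
=\int_\Sigma\bigl(\vert\Phi\vert^2+\vert{\cal D}_e\Phi\vert^2\bigr)\,{\rm d}\Sigma
\end{equation*}
is bounded by a combination of the Dirichlet energy just estimated and $\Vert\Phi^\alpha_A\Vert_{0,0}^2$. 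Reorganising, moving the zeroth-order piece (which carries the $\frac32 K$ from $\tilde{\cal D}$ and the $\frac12\varkappa T$ term, so a coefficient of the shape $1+\frac34\Lambda$) to the right-hand side and taking square roots, and using $\sqrt{a+b}\le\sqrt a+\sqrt b$ repeatedly, yields (\ref{eq:A.2.2a}) with the stated constants $\sqrt2$ and $\sqrt{1+\frac34\Lambda}$. Bookkeeping of these constants is where I would be most careful: I want the $\frac32 K$ cross-terms in $\Vert\tilde{\cal D}\Phi+\text{zeroth order}\Vert^2$ to be absorbed exactly so that the $\sqrt2$ in front of $\Vert\tilde{\cal D}\Phi\Vert_{0,0}$ comes out, and similarly for the $\frac34\Lambda$ under the second square root.

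For the interior estimate (\ref{eq:A.2.2b}) the weight $\Omega$ is identically $1$ on $K\supset B$, so the weighted and unweighted norms coincide there and no fall-off issue arises; the claim is just the classical fundamental elliptic (Gårding-type) estimate for the first-order elliptic operator $\tilde{\cal D}$ on the compact manifold-with-boundary $B$, applied to $\Phi^\alpha_B$ with $\supp(\Phi^\alpha_B)\subset B$ so that boundary contributions in the integration by parts vanish. Indeed one can run the very same Sen--Witten identity over $B$: the boundary term $\oint_{\partial B}$ is zero, the dominant energy condition gives the matter term the right sign, and the 3-surface twistor term is again $\ge0$, producing exactly (\ref{eq:A.2.2b}) with the same constants as in (\ref{eq:A.2.2a}). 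So in fact both inequalities are two instances of one computation: the Sen--Witten identity plus the dominant energy condition, differing only in where the boundary term is killed (by fall-off in the exterior case, by compact support in the interior case).

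The main obstacle, and the only genuinely non-routine point, is the justification of the vanishing of the boundary term at $\mathscr I^+$ in (\ref{eq:A.2.2a}) for a general element of $H_{1,0}$ rather than a smooth one: the identity is derived by integration by parts, so one must argue by density of smooth compactly-supported-away-from-$B$ spinor fields in $H_{1,0}$ together with the continuity of $\tilde{\cal D}:H_{1,\delta}\to L^\delta_2$ already established, and use that $\Phi^\alpha\in H_{1,0}$ forces $o(\Omega^{3/2})$ decay (as noted in subsection \ref{sub-4.6.2}, via $C^{s'}_{\delta'}\subset H_{s,\delta}$ and Lemma \ref{l:A.1.2}), which is precisely the rate at which the GHP form (\ref{eq:4.6.7b}) of the Nester--Witten boundary integrand vanishes in the $\Omega\to0$ limit. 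Everything else is Young's inequality and keeping track of constants.
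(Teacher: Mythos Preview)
Your overall strategy—Sen--Witten identity, drop the matter term by the dominant energy condition, then pass from smooth to general $H_{1,0}$ elements by density—is exactly what the paper does. The boundary discussion (vanishing at $\partial B$ by support, vanishing at infinity by the $o(\Omega^{3/2})$ decay encoded in $H_{1,0}$-membership, as in (\ref{eq:4.6.7b})) and the density argument at the end are both right.

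However, your route through the \emph{renormalized} identity is unnecessarily complicated, and the specific identity you write down is not correct. The equation you state,
\[
\bigl(\Vert\tilde{\cal D}\Phi^\alpha_A\Vert_{0,0}\bigr)^2
=\bigl\Vert\tilde{\cal D}_e\Phi\bigr\Vert^2
-\tfrac12\varkappa\!\int_\Sigma t^aT_{ab}\vert\Phi\vert^2\,{\rm d}\Sigma
+(\text{3-surface twistor term}),
\]
conflates the gauge-fixed form (\ref{eq:4.6.3}) (which holds only on solutions of the Witten equation) with the off-shell identity. Off-shell, the renormalized identity produces \emph{cross terms} of the type $\langle\tilde{\cal D}\Phi,K\Phi\rangle$ (see the proof of Lemma~\ref{l:A.2.2}), not a separate twistor term; this is the bookkeeping you flag but do not carry out, and it is genuinely messier than you anticipate.

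The paper sidesteps all of this by working with the \emph{un-renormalized} Sen--Witten identity (\ref{eq:A.2.3}) for ${\cal D}$, in which $\Lambda$ appears through Einstein's equations as part of the curvature term $-\tfrac{1}{\sqrt{2}}t_a(\varkappa T^a{}_b+\Lambda\delta^a_b)\sigma^B\bar\sigma^{B'}$ and hence, together with the matter term, is simply dropped by the dominant energy condition and $\Lambda>0$. This gives in one stroke
\[
\bigl(\Vert\Phi^\alpha\Vert_{1,0}\bigr)^2\leq 2\bigl(\Vert{\cal D}^\alpha{}_\beta\Phi^\beta\Vert_{0,0}\bigr)^2+\bigl(\Vert\Phi^\alpha\Vert_{0,0}\bigr)^2,
\]
and only \emph{then} substitutes ${\cal D}\Phi=\tilde{\cal D}\Phi-\tfrac32 K\Phi$ and applies the triangle inequality. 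The constants $\sqrt{2}$ and $\sqrt{1+\tfrac34\Lambda}$ fall out immediately from $2(\Vert\tilde{\cal D}\Phi\Vert+\tfrac32|K|\Vert\Phi\Vert)^2+\Vert\Phi\Vert^2$ with $|K|^2=\Lambda/6$, with no cross-term absorption needed. So: right idea, but the un-renormalized identity is the clean tool here.
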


\smallskip
\noindent
{\bf Scholium:} The index $A$ of $\Phi^\alpha_A$ and $B$ of $\Phi^\alpha_B$ are
not spinor indices. The former indicates that the spinor field is localized
in the asymptotic region on $\Sigma$, while the latter is localized in a
bounded domain in $\Sigma$. (See also the first paragraph of the proof of
Lemma \ref{l:A.2.4} below.) $H_s(B)$, $s=0,1,2,...$, denotes the classical
Sobolev spaces on the domain $B$ with \emph{compact} closure.
\smallskip

\begin{proof}
For any smooth spinor field $\sigma^A$ on $\Sigma$ we have the Sen--Witten
identity

\begin{eqnarray}
\vert{\cal D}_e\sigma^A\vert^2\!\!\!\!&=\!\!\!\!&2\vert{\cal D}_{A'A}\sigma^A
 \vert^2-\frac{1}{\sqrt{2}}t_a\bigl(\varkappa T^a{}_b+\Lambda\delta^a_b\bigr)
 \sigma^B\bar\sigma^{B'} \nonumber \\
\!\!\!\!&-\!\!\!\!&\sqrt{2}D_{AA'}\Bigl(t^{AB'}\bar\sigma^{A'}{\cal D}_{B'B}
 \sigma^B-t^{BA'}\bar\sigma^{B'}{\cal D}_{BB'}\sigma^A\Bigr). \label{eq:A.2.3}
\end{eqnarray}
If $\sigma^A$ is square integrable, then the integral of the total divergence
at infinity is vanishing (see its 2-surface integral form (\ref{eq:4.6.7b})
and the argumentation in the second paragraph of subsection \ref{sub-4.6.2}),
and if $\supp(\sigma^A)\subset\Sigma-B$ or $\supp(\sigma^A)\subset B$, then
the boundary integral on $\partial B$ is also vanishing. Hence, by $\Lambda
>0$ and the dominant energy condition this and the analogous argument
for $\pi^A$ in $\Phi^\alpha=(\sigma^A,\bar\pi^{A'})$ we obtain

\begin{eqnarray}
\bigl(\Vert\Phi^\alpha\Vert_{1,0}\bigr)^2\!\!\!\!&\leq\!\!\!\!&2\bigl(
 \Vert{\cal D}^\alpha{}_\beta\Phi^\beta\Vert_{0,0}\bigr)^2+\bigl(\Vert\Phi^\alpha
 \Vert_{0,0}\bigr)^2=2\bigl(\Vert\tilde{\cal D}^\alpha{}_\beta\Phi^\beta-
 \frac{3}{2}K\Phi^\alpha\Vert_{0,0}\bigr)^2+\bigl(\Vert\Phi^\alpha\Vert_{0,0}
 \bigr)^2 \nonumber \\
\!\!\!\!&\leq\!\!\!\!&\Bigl(\sqrt{2}\Vert\tilde{\cal D}^\alpha{}_\beta\Phi
 ^\beta\Vert_{0,0}+\sqrt{1+\frac{3}{4}\Lambda}\Vert\Phi^\alpha\Vert_{0,0}\Bigr)^2.
 \label{eq:A.2.3a}
\end{eqnarray}
Thus, recalling that for $\Phi^\alpha=\Phi^\alpha_B$ the $H_{0,0}$ and $H_{1,0}$
norms coincide with the $L_2(B)$ and $H_1(B)$ norms, respectively, the
inequalities hold for smooth $\Phi^\alpha_A$ and $\Phi^\alpha_B$, respectively.

Finally, let $\Phi^\alpha$ ($=\Phi^\alpha_A$ or $\Phi^\alpha_B$) be an arbitrary
element of $H_{1,0}$. Since the smooth spinor fields form a dense subspace in
$H_{1,0}$, there exists a sequence $\{\Phi^\alpha_i\}$, $i\in\mathbb{N}$, of
smooth spinor fields in $H_{1,0}$ which converges to $\Phi^\alpha$ strongly.
Applying the estimate (\ref{eq:A.2.3a}) to the smooth spinor fields $\Phi
^\alpha_i$ and recalling that the norms $\Vert \,.\,\Vert_{s,\delta}:H_{s,\delta}
\rightarrow[0,\infty)$ are continuous, the inequalities follow.
\end{proof}
\noindent
N.B.: The statement holds true even for $\Lambda\leq0$; and in the $\Lambda
<0$ case the cosmological constant term gives additional contribution to the
constant in front of $\Vert\Phi^\alpha\Vert_{0,0}$. Note also that although in
this appendix we assume that the dominant energy condition holds, estimates
of the form (\ref{eq:A.2.2a}), (\ref{eq:A.2.2b}) could be derived from the
ellipticity of $\tilde{\cal D}$ alone, without the use of the dominant energy
condition. In this case the forthcoming statements that depend on the
fundamental elliptic estimate would hold independently of the dominant energy
condition. The price that we had to pay for the simplicity of the proof of
this estimate is the requirement of the dominant energy condition.

Another consequence of the Sen--Witten identity is given by the next lemma:

\begin{lemma} Let $B\subset\Sigma$ be an open subset with compact closure
and smooth boundary. Then

\begin{equation}
\Bigl(\Vert\tilde{\cal D}_e\Phi^\alpha_A\Vert_{0,0}\Bigr)^2\leq2\Bigl(\Vert
\tilde{\cal D}^\alpha{}_\beta\Phi^\beta_A\Vert_{0,0}\Bigr)^2+4\sqrt{\frac{2}{3}
\Lambda}\,\Vert\tilde{\cal D}^\alpha{}_\beta\Phi^\beta_A\Vert_{0,0}\,\Vert\Phi
^\alpha_A\Vert_{0,0} \label{eq:A.2.4}
\end{equation}
holds for any $\Phi^\alpha_A\in H_{1,0}$, $\supp(\Phi^\alpha_A)\subset\Sigma-B$.
\label{l:A.2.2}
\end{lemma}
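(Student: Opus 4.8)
The plan is to derive a renormalized Sen--Witten identity for the pair $(\sigma_A,\bar\pi_{A'})$ and then to read off (\ref{eq:A.2.4}) after discarding the matter term by the dominant energy condition. I would start from the integrated identity (\ref{eq:4.5})--(\ref{eq:4.5B}) applied to $(\sigma_A,\bar\pi_{A'})$ but \emph{without} imposing the Witten gauge condition (\ref{eq:4.6}), and with the choice $\tilde K=K$, $6K^2=-\Lambda$ already fixed after (\ref{eq:4.7}). Since $\Lambda>0$ makes $K$ imaginary, one has $K+\bar{\tilde K}=0$ and $\Lambda-6K\bar K=\Lambda-6\vert K\vert^2=0$, so the terms proportional to $K+\bar{\tilde K}$ in (\ref{eq:4.5}) and the term of (\ref{eq:4.5B}) proportional to $\Lambda-6\vert K\vert^2$ drop out, and $\oint_{\partial\Sigma}(\tilde u(\sigma,\bar\sigma)_{ab}+\tilde u(\pi,\bar\pi)_{ab})=\int_\Sigma(B(\sigma)+B(\pi)){\rm d}\Sigma$, where $B(\sigma)+B(\pi)$ reduces to the full renormalized-derivative term $-t^{AA'}h^{ef}[(\tilde{\cal D}_e\sigma_A)(\overline{\tilde{\cal D}}_f\bar\sigma_{A'})+(\overline{\tilde{\cal D}}_e\pi_A)(\tilde{\cal D}_f\bar\pi_{A'})]$, the matter term $\frac{1}{2}\varkappa t^aT_{ab}(\sigma^B\bar\sigma^{B'}+\pi^B\bar\pi^{B'})$, the ``twistor-residual'' term $-2t^{AA'}\varepsilon^{BD}\varepsilon^{B'D'}[(\tilde{\cal D}_{A'B}\sigma_D)(\overline{\tilde{\cal D}}_{AB'}\bar\sigma_{D'})+(\overline{\tilde{\cal D}}_{A'B}\pi_D)(\tilde{\cal D}_{AB'}\bar\pi_{D'})]$ and the cross terms on the last two lines of (\ref{eq:4.5B}). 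For $\Phi^\alpha_A\in H_{1,0}$ with $\supp(\Phi^\alpha_A)\subset\Sigma-B$ the left-hand boundary integral vanishes: there is no contribution from $\partial B$ by the support condition, and, since $K$ imaginary implies $u=\tilde u$ for the relevant sum (cf. the remark after (\ref{eq:4.7})), the contribution at infinity is the $\Omega\to0$ limit of $\oint_{\hat{\cal S}_\Omega}(u(\sigma,\bar\sigma)+u(\pi,\bar\pi))$, which vanishes for square-integrable spinor fields exactly as in Lemma \ref{l:A.2.1} and subsection \ref{sub-4.6.2}.

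The key algebraic point is that the antisymmetrized Sen derivatives in the twistor-residual and cross terms are nothing but the components of $\tilde{\cal D}^\alpha{}_\beta\Phi^\beta$. Indeed, contracting the defining relation (\ref{eq:4.3}) with $\varepsilon^{AB}$ and using $P^{DD'}_{AA'}=\delta^D_A\delta^{D'}_{A'}-t^{DD'}t_{AA'}$ together with $t^{DD'}t_{DA'}=\frac{1}{2}\delta^{D'}_{A'}$ gives $\varepsilon^{AB}\tilde{\cal D}_{AA'}\sigma_B={\cal D}_{A'A}\sigma^A+\frac{3}{2}K\bar\pi_{A'}$, which, up to the position of the index $A'$, is the second component of $\tilde{\cal D}^\alpha{}_\beta\Phi^\beta$ in (\ref{eq:4.6.10}); the $\pi$-residual gives the first component. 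Hence, after integration, the twistor-residual term contributes $-2(\Vert\tilde{\cal D}^\alpha{}_\beta\Phi^\beta_A\Vert_{0,0})^2$ with the \emph{same} overall normalization with which the full-derivative term contributes $(\Vert\tilde{\cal D}_e\Phi^\alpha_A\Vert_{0,0})^2$, while the two cross-term lines, being complex conjugate in pairs, collapse to $8\,{\rm Re}$ of two expressions of the form $\int_\Sigma K\,(\text{component of }\Phi^\alpha_A)\,\overline{(\text{component of }\tilde{\cal D}^\alpha{}_\beta\Phi^\beta_A)}\,{\rm d}\Sigma$. Solving the identity for $(\Vert\tilde{\cal D}_e\Phi^\alpha_A\Vert_{0,0})^2$ and dropping the matter integral, which is non-negative by the dominant energy condition and the future-pointing choice of $t^a$, gives
\[
\bigl(\Vert\tilde{\cal D}_e\Phi^\alpha_A\Vert_{0,0}\bigr)^2\le 2\bigl(\Vert\tilde{\cal D}^\alpha{}_\beta\Phi^\beta_A\Vert_{0,0}\bigr)^2+\Bigl|\,8\,{\rm Re}\!\int_\Sigma K\,(\cdots)\overline{(\cdots)}\,{\rm d}\Sigma\,\Bigr|.
\]

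To bound the last term I would apply the pointwise Cauchy--Schwarz inequality for the positive definite Hermitian form $t^{AA'}(\cdot)_A\overline{(\cdot)}_{A'}$ on the spinor spaces, followed by Cauchy--Schwarz on $\Sigma$ and the obvious combination of the $\sigma$- and $\pi$-parts; this bounds it by $8\vert K\vert\,\Vert\Phi^\alpha_A\Vert_{0,0}\,\Vert\tilde{\cal D}^\alpha{}_\beta\Phi^\beta_A\Vert_{0,0}$, and since $\vert K\vert=\sqrt{\Lambda/6}$ and $8\sqrt{\Lambda/6}=4\sqrt{\frac{2}{3}\Lambda}$, this is exactly (\ref{eq:A.2.4}). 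This having been established for smooth $\Phi^\alpha_A$, the general case follows by approximating an arbitrary $\Phi^\alpha_A\in H_{1,0}$ with $\supp(\Phi^\alpha_A)\subset\Sigma-B$ by smooth spinor fields in $H_{1,0}$ and passing to the limit, using the continuity of the norms $\Vert\,.\,\Vert_{s,\delta}$, exactly as in the last paragraph of the proof of Lemma \ref{l:A.2.1}.

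I expect the main obstacle to be the bookkeeping in the second step. One must verify that, with the renormalization $\tilde K=K$, $6K^2=-\Lambda$, the identity obtained from (\ref{eq:4.5})--(\ref{eq:4.5B}) retains \emph{no} term proportional to $(\Vert\Phi^\alpha_A\Vert_{0,0})^2$ --- this is the whole point of the renormalization and rests on the cancellations $K+\bar K=0$ and $\Lambda-6\vert K\vert^2=0$ --- and that the surviving twistor-residual term reproduces $-2(\Vert\tilde{\cal D}^\alpha{}_\beta\Phi^\beta_A\Vert_{0,0})^2$ with the same constant as the full-derivative term, so that the coefficient $2$ in (\ref{eq:A.2.4}) comes out correctly. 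The remaining factors of $\sqrt 2$ (which cancel against the $1/\sqrt 2$'s in the pointwise norm convention) and the precise value $8\vert K\vert=4\sqrt{2\Lambda/3}$ of the cross-term coefficient then follow by direct, if slightly tedious, computation.
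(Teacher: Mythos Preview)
Your proposal is correct and follows essentially the same route as the paper: both start from the integrated Sen--Witten identity (\ref{eq:4.5})--(\ref{eq:4.5B}) without imposing the gauge condition, note that the boundary terms vanish by the support hypothesis and square-integrability, identify the ``twistor-residual'' term with $-2\Vert\tilde{\cal D}^\alpha{}_\beta\Phi^\beta_A\Vert_{0,0}^2$ and the cross terms with $4(\langle\tilde{\cal D}^\alpha{}_\beta\Phi^\beta_A,K\Phi^\alpha_A\rangle+{\rm c.c.})$, drop the matter term by the dominant energy condition, and finish with Cauchy--Schwarz to obtain the coefficient $8\vert K\vert=4\sqrt{2\Lambda/3}$. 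The paper is simply terser about the algebra and omits the final density argument, which it already carried out in Lemma~\ref{l:A.2.1}.
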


\begin{proof}
By the Sen--Witten identity for $\Phi^\alpha_A=(\sigma^A,\bar\pi^{A'})$ and the
form (\ref{eq:4.5B}) of the boundary term

\begin{eqnarray*}
2\Bigl(\Vert\tilde{\cal D}^\alpha{}_\beta\Phi^\beta_A\Vert_{0,0}\Bigr)^2
 \!\!\!\!&{}\!\!\!\!&=\Bigl(\Vert\tilde{\cal D}_e\Phi^\alpha_A\Vert_{0,0}
 \Bigr)^2+\frac{\varkappa}{\sqrt{2}}\int_{\Sigma-B}t_aT^a{}_b\bigl(\sigma^B
 \bar\sigma^{B'}+\pi^B\bar\pi^{B'}\bigr){\rm d}\Sigma \\
 +4K\int_{\Sigma-B}\sqrt{2}t_{AA'}\!\!\!\!&{}\!\!\!\!&\Bigl(\bigl(
 \overline{\tilde{\cal D}}{}^A{}_{B'}\bar\sigma^{B'}\bigr)\bar\pi^{A'}-
 \bigl(\tilde{\cal D}^{A'}{}_B\sigma^B\bigr)\pi^A+\bigl(\overline{\tilde
 {\cal D}}{}^{A'}{}_B\pi^B\bigr)\sigma^A-\bigl(\tilde{\cal D}^A{}_{B'}
 \bar\pi^{B'}\bigr)\bar\sigma^{A'}\Bigr){\rm d}\Sigma
\end{eqnarray*}
holds, where we used that the integral of the boundary terms vanishes. Then
by $K=-\bar K$ and the dominant energy condition it follows that

\begin{equation*}
2\Bigl(\Vert\tilde{\cal D}^\alpha{}_\beta\Phi^\beta_A\Vert_{0,0}\Bigr)^2
\geq\Bigl(\Vert\tilde{\cal D}_e\Phi^\alpha_A\Vert_{0,0}\Bigr)^2+4\Bigl(
\langle\tilde{\cal D}^\alpha{}_\beta\Phi^\beta_A,K\Phi^\alpha_A\rangle+
\overline{\langle\tilde{\cal D}^\alpha{}_\beta\Phi^\beta_A,K\Phi^\alpha_A
\rangle}\Bigr),
\end{equation*}
where $\langle\, ,\,\rangle$ denotes the natural $L_2$ scalar product of the
Dirac spinors. Then by the Cauchy--Schwarz inequality this yields
(\ref{eq:A.2.4}).
\end{proof}

The inequality in our next lemma is analogous to the Hardy inequality:

\begin{lemma} Let $\delta\in\mathbb{R}$ and let $\Sigma$ be an asymptotically
hyperboloidal hypersurface for which the `boost parameter function' $W$
satisfies $\frac{1}{3}\Lambda W<2\delta+1$. Then there exists an open set
$B\subset\Sigma$ with compact closure and smooth boundary, and a positive
constant $c$ such that

\begin{equation}
c\Vert\Phi^\alpha_A\Vert_{0,\delta}\leq\Vert\tilde{\cal D}_e\Phi^\alpha_A\Vert
_{0,\delta} \label{eq:A.2.5}
\end{equation}
for any $\Phi^\alpha_A\in H_{1,\delta}$, $\supp(\Phi^\alpha_A)\subset\Sigma-B$.
\label{l:A.2.3}
\end{lemma}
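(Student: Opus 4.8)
The strategy is to prove this as a weighted radial Hardy inequality localized to the asymptotic end of $\Sigma$. Since $\supp(\Phi^\alpha_A)$ may be taken in $\Sigma-B$ for $B$ as large as we wish, I work entirely in a neighbourhood of $\mathscr{I}^+$ in which the metric, volume element and extrinsic curvature have the explicit forms of subsection \ref{sub-4.4}: in the coordinates $(\tilde r,\tilde x^\mu)$ with $\tilde r=1/\Omega$ one has ${\rm d}\hat\Sigma=\tilde rR^2\vert T_e\vert\,{\rm d}{\cal S}_0\,{\rm d}\tilde r$, $(\partial/\partial\tilde r)^a=\tilde r^{-1}\vert T_e\vert\,\hat v^a+O(\tilde r^{-2})$ with $\hat v^a:=\Omega v^a$ the physical $\hat h_{ab}$-unit normal to $\hat{\cal S}_\Omega$ inside $\Sigma$, and $\hat\chi_{ab}=\vert T_e\vert^{-1}(1+\tfrac13\Lambda W)\hat h_{ab}+O(\tilde r^{-1})$ from (\ref{eq:4.4.3}). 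Because $-\hat h^{ef}$ is positive definite, $\vert\tilde{\cal D}_e\Phi^\alpha\vert^2\geq\vert\hat v^e\tilde{\cal D}_e\Phi^\alpha\vert^2$, so it suffices to dominate $\Vert\Phi^\alpha_A\Vert_{0,\delta}$ by the $L^\delta_2$-norm of the single radial derivative $\hat v^e\tilde{\cal D}_e\Phi^\alpha_A$; and by density of smooth spinor fields compactly supported in $\Sigma-\bar B$ in the relevant subspace of $H_{1,\delta}$, it is enough to treat such fields.

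Writing $f:=\vert\Phi^\alpha\vert^2=\sqrt2\,\hat t_{AA'}(\sigma^A\bar\sigma^{A'}+\pi^A\bar\pi^{A'})\geq0$ and integrating $\int_{r_0}^\infty\tilde r^{2\delta+1}f\,{\rm d}\tilde r$ by parts at fixed $\tilde x^\mu$ against ${\rm d}(\tilde r^{2\delta+2})/(2\delta+2)$ — the boundary term at $\partial B$ vanishing since $\Phi^\alpha_A$ vanishes there, that at $\tilde r=\infty$ because of compact support — gives $(2\delta+2)\int\tilde r^{2\delta+1}f=-\vert T_e\vert\int\tilde r^{2\delta+1}\hat v^e\hat\nabla_ef+(\text{terms lower by one power of }\tilde r)$, using $\partial_{\tilde r}f=\tilde r^{-1}\vert T_e\vert\hat v^e\hat\nabla_ef+O(\tilde r^{-2})$. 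The heart of the matter is the identity
\[
\hat v^e\hat\nabla_ef=-\frac{1+\tfrac13\Lambda W}{\vert T_e\vert}\,\sqrt2\,\hat v_{AA'}\bigl(\sigma^A\bar\sigma^{A'}+\pi^A\bar\pi^{A'}\bigr)+2\sqrt2\,{\rm Re}\bigl[\hat t_{AA'}\bar\sigma^{A'}\hat v^e\tilde{\cal D}_e\sigma^A+\hat t_{AA'}\pi^A\hat v^e\tilde{\cal D}_e\bar\pi^{A'}\bigr]+O(\tilde r^{-1}f),
\]
obtained by the Leibniz rule, using the Weingarten relation $\hat v^e\hat\nabla_e\hat t_a=-\hat\chi_{ab}\hat v^b$ and the relation between $\tilde{\cal D}_e$, the Sen derivative and $D_e$; the crucial point is that the extra zeroth-order terms generated by the $K\varepsilon$-pieces of $\tilde{\cal D}_e$ in (\ref{eq:4.3}) combine with coefficient $K+\bar K=0$ (since $K=\pm{\rm i}\sqrt{\Lambda/6}$ is imaginary, exactly as in the cancellation noted after (\ref{eq:4.7})), so \emph{no} $\sqrt{\Lambda}$-proportional zeroth-order term survives. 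Using $\vert\hat v_{AA'}X^a\vert\leq\hat t_{AA'}X^a$ for future-null $X^a$ and Cauchy--Schwarz on the bracket, this yields $\vert\hat v^e\hat\nabla_ef\vert\leq c'\sqrt f\,\vert\hat v^e\tilde{\cal D}_e\Phi^\alpha\vert+\bigl(\vert T_e\vert^{-1}(1+\tfrac13\Lambda W)+O(\tilde r^{-1})\bigr)f$ with an absolute constant $c'$.

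Feeding this back, the $f$-contribution on the right becomes $(1+\tfrac13\Lambda W)\int\tilde r^{2\delta+1}f$, which I move to the left; the remaining coefficient is $2\delta+2-(1+\tfrac13\Lambda W)=2\delta+1-\tfrac13\Lambda W$, \emph{positive precisely under the hypothesis} $\tfrac13\Lambda W<2\delta+1$. Choosing $B$ large enough that the accumulated $O(\tilde r^{-1})$ corrections stay below, say, half of this value, a final Cauchy--Schwarz in $\tilde r$ and integration over the compact ${\cal S}_0$ against the bounded weight $R^2\vert T_e\vert$ give $\Vert\Phi^\alpha_A\Vert_{0,\delta}\leq c^{-1}\Vert\hat v^e\tilde{\cal D}_e\Phi^\alpha_A\Vert_{0,\delta}\leq c^{-1}\Vert\tilde{\cal D}_e\Phi^\alpha_A\Vert_{0,\delta}$ with $c>0$ depending only on $\delta$, $\Lambda$ and $W\vert_{{\cal S}_0}$, and the general $\Phi^\alpha_A\in H_{1,\delta}$ follows by density. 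I expect the main obstacle to be the bookkeeping in the displayed identity: one must verify both the $K+\bar K=0$ cancellation \emph{and} that the extrinsic-curvature term contributes exactly the coefficient $1+\tfrac13\Lambda W$, since it is this precise value that makes the stated threshold sharp; a secondary technical point is justifying the vanishing of the boundary term at $\tilde r=\infty$ in these non-standard weighted spaces (equivalently, extracting a sequence $\tilde r_k\to\infty$ with $\tilde r_k^{2\delta+2}f\to0$ from the finiteness of $\int\tilde r^{2\delta+1}f$).
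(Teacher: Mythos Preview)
Your argument is correct and reaches the same sharp threshold $\tfrac{1}{3}\Lambda W<2\delta+1$, but by a different technical route from the paper. The paper does not project onto the radial direction; instead it introduces an auxiliary real parameter $l$, writes $\tilde{\cal D}_e\Phi^\alpha=\Omega^{-l}\tilde{\cal D}_e(\Omega^l\Phi^\alpha)-l\Omega^{-1}(D_e\Omega)\Phi^\alpha$, drops the non-negative term $\Omega^{-2l}\vert\tilde{\cal D}_e(\Omega^l\Phi^\alpha)\vert^2$, and integrates the resulting cross term by parts \emph{covariantly}, picking up the Laplacian $D_eD^e\Omega$ and the extrinsic-curvature piece $v^a\chi_{ab}$. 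The coefficient that emerges is $\vert l\vert\bigl(2\delta+1-\vert l\vert-\tfrac{1}{3}\Lambda\max_{\cal S}W\bigr)$, positive for $l\in(\tfrac{1}{3}\Lambda W-1-2\delta,0)$. Your version is more elementary and coordinate-bound (literally the one-dimensional Hardy inequality in $\tilde r$ at fixed angle, then integrated over the compact ${\cal S}_0$), while the paper's is coordinate-free with the extra parameter $l$ to play with; but the essential geometric inputs---the $K+\bar K=0$ cancellation you correctly anticipate, and the asymptotic proportionality $\hat\chi_{ab}\sim\vert T_e\vert^{-1}(1+\tfrac{1}{3}\Lambda W)\hat h_{ab}$ that supplies exactly the coefficient $1+\tfrac{1}{3}\Lambda W$---are identical, and both make the threshold sharp for the same reason.

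One minor slip: with the paper's convention $\chi_{ab}=P^c_aP^d_b\nabla_ct_d$ one has $\hat v^e\hat\nabla_e\hat t_a=+\hat\chi_{ab}\hat v^b$, not $-\hat\chi_{ab}\hat v^b$; but since you pass immediately to $\vert\hat v^e\hat\nabla_ef\vert$ and use $\vert\sqrt{2}\,\hat v_aX^a\vert\leq f$, the sign is immaterial to your estimate.
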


\begin{proof}
Let $\Phi^\alpha=(\sigma^A,\bar\pi^{A'})\in H_{1,\delta}$ and $l\in\mathbb{R}$.
Then

\begin{eqnarray*}
\vert\tilde{\cal D}_e\Phi^\alpha\vert^2:=\!\!\!\!&{}\!\!\!\!&-h^{ef}\sqrt{2}
  t_{AA'}\Bigl(\bigl(\tilde{\cal D}_e\sigma^A\bigr)\bigl(\overline{\tilde
  {\cal D}}_f\bar\sigma^{A'}\bigr)+\bigl(\tilde{\cal D}_e\bar\pi^{A'}\bigr)
  \bigl(\overline{\tilde{\cal D}}_f\pi^A\bigr)\Bigr) \\
=\!\!\!\!&{}\!\!\!\!&-h^{ef}\sqrt{2}t_{AA'}\Bigl\{\Bigl(\Omega^{-l}\tilde{\cal
  D}_e\bigl(\Omega^l\sigma^A\bigr)-l\sigma^A\Omega^{-1}D_e\Omega\Bigr)\Bigl(
  \Omega^{-l}\overline{\tilde{\cal D}}_f\bigl(\Omega^l\bar\sigma^{A'}\bigr)-
  l\bar\sigma^{A'}\Omega^{-1}D_f\Omega\Bigr) \\
\!\!\!\!&{}\!\!\!\!&+\Bigl(\Omega^{-l}\tilde{\cal D}_e\bigl(\Omega^l\bar\pi
  ^{A'}\bigr)-l\bar\pi^{A'}\Omega^{-1}D_e\Omega\Bigr)\Bigl(\Omega^{-l}
  \overline{\tilde{\cal D}}_f\bigl(\Omega^l\pi^A\bigr)-l\pi^A\Omega^{-1}D_f
  \Omega\Bigr)\Bigr\} \\
=\!\!\!\!&{}\!\!\!\!&\Omega^{-2l}\vert\tilde{\cal D}_e\bigl(\Omega^l\Phi
  ^\alpha\bigr)\vert^2+l^2\Omega^{-2}\vert D_e\Omega\vert^2\vert\Phi^\alpha
  \vert^2 \\
\!\!\!\!&{}\!\!\!\!&+l\Omega^{-2l-1}\sqrt{2}\bigl(D^e\Omega\bigr){\cal D}_e
  \Bigl(\Omega^{2l}\bigl(\sigma^A\bar\sigma^{A'}+\pi^A\bar\pi^{A'}\bigr)\Bigr)
  t_{AA'},
\end{eqnarray*}
where in the last step we used the expression (\ref{eq:4.3}) of $\tilde
{\cal D}_e$ in terms of ${\cal D}_e$ and $K$. Multiplying this by $\Omega
^{-2\delta}$, choosing $\Phi^\alpha$ to be $\Phi^\alpha_A$ (i.e. in $H_{1,\delta}$
and such that $\supp(\Phi^\alpha_A)\subset\Sigma-B$), integrating on $\Sigma-B$
and using that the boundary terms both on $\partial B$ and at infinity give
zero, we obtain

\begin{eqnarray*}
\int_{\Sigma-B}\Omega^{-2\delta}\vert\tilde{\cal D}_e\Phi^\alpha_A\vert^2
  {\rm d}\Sigma\geq\!\!\!\!&{}\!\!\!\!&-l\int_{\Sigma-B}\Omega^{-2\delta}
  \Bigl((2\delta+1+l)\Omega^{-2}\vert D_e\Omega\vert^2\vert\Phi^\alpha_A
  \vert^2\\
\!\!\!\!&{}\!\!\!\!&+\Omega^{-1}\bigl(D_eD^e\Omega\bigr)\vert\Phi^\alpha_A
  \vert^2+\Omega^{-1}\vert D_e\Omega\vert \sqrt{2}v^a\chi_{ab}\bigl(\sigma^B
  \bar\sigma^{B'}+\pi^B\bar\pi^{B'}\bigr)\Bigr){\rm d}\Sigma.
\end{eqnarray*}
If, however, $B$ is chosen to be large enough, then by (\ref{eq:4.4.DO}) and
(\ref{eq:4.4.3})

\begin{equation*}
\vert D_e\Omega\vert=\frac{\Omega}{\vert T_e\vert}+O(\Omega^2),
\hskip 15pt
D_eD^e\Omega=\frac{\Omega}{\vert T_e\vert^2}+O(\Omega^2),
\hskip 15pt
v^a\chi_{ab}=\frac{1}{\vert T_e\vert}\bigl(1+\frac{1}{3}\Lambda W\bigr)v_b+
O(\Omega)
\end{equation*}
hold on $\Sigma-B$ in which the corrections to the leading terms are already
small, and hence the sign of these expressions on $\Sigma-B$ is the sign of
their leading term. Taking into account these asymptotic expressions, and
that since $\supp(\Phi^\alpha_A)\subset\Sigma-B$ the left hand side is
$(\Vert\tilde{\cal D}_e\Phi^\alpha_A\Vert_{0,\delta})^2$, we obtain

\begin{eqnarray*}
\bigl(\Vert\tilde{\cal D}_e\Phi^\alpha_A\Vert_{0,\delta}\bigr)^2\geq-l\int
  _{\Sigma-B}\bigl(\frac{1}{\vert T_e\vert^2}+O(\Omega)\bigr)\Bigl(\!\!\!\!&
  {}\!\!\!\!&(2\delta+2+l)\vert\Phi^\alpha_A\vert^2\\
\!\!\!\!&{}\!\!\!\!&+(1+\frac{1}{3}\Lambda W)\sqrt{2}v_a\bigl(\sigma^A\bar
  \sigma^{A'}+\pi^A\bar\pi^{A'}\bigr)\Bigr){\rm d}\Sigma.
\end{eqnarray*}
Since $\frac{1}{3}\Lambda W<2\delta+1$, we can always choose $l$ to be
\emph{negative} such that $\frac{1}{3}\Lambda W-1-2\delta<l<0$. Moreover,
since $(\sigma^A\bar\sigma^{A'}+\pi^A\bar\pi^{A'})$ is future pointing and
non-spacelike, $\vert\Phi^\alpha_A\vert^2=\sqrt{2}t_{AA'}(\sigma^A\bar\sigma^{A'}
+\pi^A\bar\pi^{A'})\geq\sqrt{2}\vert v_a(\sigma^A\bar\sigma^{A'}+\pi^A\bar
\pi^{A'})\vert$ holds, and hence we find that

\begin{equation*}
\bigl(\Vert\tilde{\cal D}_e\Phi^\alpha_A\Vert_{0,\delta}\bigr)^2\geq\vert l\vert
\bigl(2\delta+1-\vert l\vert-\frac{1}{3}\Lambda\max_{\cal S}\{ W\}\bigr)
\inf_{\Sigma-B}\bigl\{\frac{1}{\vert T_e\vert^2}+O(\Omega)\bigr\}\bigl(\Vert
\Phi^\alpha_A\Vert_{0,\delta}\bigr)^2;
\end{equation*}
i.e. for some positive constant $c$ and for all $\Phi^\alpha_A$ the inequality
$\Vert\tilde{\cal D}_e\Phi^\alpha_A\Vert_{0,\delta}\geq c\Vert\Phi^\alpha_A\Vert
_{0,\delta}$ holds.
\end{proof}

\begin{corollary} Under the conditions of Lemma \ref{l:A.2.3} with $\delta=0$

\begin{equation*}
\Bigl(\sqrt{c^2+8\vert K\vert^2}-2\sqrt{2}\vert K\vert\Bigr)\Vert\Phi^\alpha_A
\Vert_{0,0}\leq\sqrt{2}\Vert\tilde{\cal D}^\alpha{}_\beta\Phi^\beta_A\Vert_{0,0}
\end{equation*}
holds for any $\Phi^\alpha_A\in H_{1,0}$, $\supp(\Phi^\alpha_A)\subset\Sigma-B$.
\label{c:A.2.1}
\end{corollary}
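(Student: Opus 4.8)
The plan is to derive the estimate purely by combining Lemma~\ref{l:A.2.3} (specialized to $\delta=0$) with Lemma~\ref{l:A.2.2} and then solving an elementary quadratic inequality. First I would fix, via Lemma~\ref{l:A.2.3} with $\delta=0$, an open set $B\subset\Sigma$ with compact closure and smooth boundary together with a constant $c>0$ such that $c\Vert\Phi^\alpha_A\Vert_{0,0}\leq\Vert\tilde{\cal D}_e\Phi^\alpha_A\Vert_{0,0}$ for every $\Phi^\alpha_A\in H_{1,0}$ with $\supp(\Phi^\alpha_A)\subset\Sigma-B$; and I would record that, because $\tilde K=K$ and $6K\tilde K=-\Lambda$, one has $\vert K\vert^2=\Lambda/6$, so the constant in Lemma~\ref{l:A.2.2} can be rewritten as $4\sqrt{\tfrac23\Lambda}=8\vert K\vert$.

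Writing $a:=\Vert\tilde{\cal D}^\alpha{}_\beta\Phi^\beta_A\Vert_{0,0}$, $b:=\Vert\Phi^\alpha_A\Vert_{0,0}$ and $d:=\Vert\tilde{\cal D}_e\Phi^\alpha_A\Vert_{0,0}$, the two lemmas read $d^2\geq c^2b^2$ and $d^2\leq2a^2+8\vert K\vert\,ab$, and therefore
\begin{equation*}
2a^2+8\vert K\vert\,ab-c^2b^2\geq0.
\end{equation*}
Viewed as a quadratic in $a\geq0$, this polynomial takes the non-positive value $-c^2b^2$ at $a=0$, so $a$ must be at least its larger (and, since $c>0$, strictly positive) root, i.e.
\begin{equation*}
a\geq\frac{b}{2}\Bigl(\sqrt{16\vert K\vert^2+2c^2}-4\vert K\vert\Bigr).
\end{equation*}
Multiplying by $\sqrt2$ and using $\sqrt2\,\sqrt{16\vert K\vert^2+2c^2}=2\sqrt{c^2+8\vert K\vert^2}$ turns this into
\begin{equation*}
\sqrt2\,\Vert\tilde{\cal D}^\alpha{}_\beta\Phi^\beta_A\Vert_{0,0}\geq\Bigl(\sqrt{c^2+8\vert K\vert^2}-2\sqrt2\,\vert K\vert\Bigr)\Vert\Phi^\alpha_A\Vert_{0,0},
\end{equation*}
which is the assertion; the case $b=0$ is trivial, and $c^2+8\vert K\vert^2>8\vert K\vert^2$ ensures the coefficient on the right is genuinely positive.

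Since this is a corollary of two already-established lemmas, there is no substantive obstacle. The only points needing care are bookkeeping: checking that the relevant root of the quadratic is the larger one and that it is non-negative (a consequence of $\sqrt{16\vert K\vert^2+2c^2}\geq4\vert K\vert$), and noting that the boundary terms in the Sen--Witten identity underlying both lemmas vanish for spinor fields supported in $\Sigma-B$, which was already verified in the proofs of Lemmas~\ref{l:A.2.2} and~\ref{l:A.2.3}.
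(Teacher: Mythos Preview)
Your proof is correct and follows essentially the same approach as the paper: both combine Lemma~\ref{l:A.2.3} (with $\delta=0$) and Lemma~\ref{l:A.2.2}, identify $4\sqrt{2\Lambda/3}=8\vert K\vert$, and then solve the resulting quadratic inequality $2a^2+8\vert K\vert ab\geq c^2b^2$. The only cosmetic difference is that the paper completes the square, rewriting $2a^2+8\vert K\vert ab=2(a+2\vert K\vert b)^2-8\vert K\vert^2b^2$, whereas you apply the quadratic formula directly; the two manipulations are algebraically equivalent.
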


\begin{proof}
Combining the inequalities of Lemma \ref{l:A.2.3} and Lemma \ref{l:A.2.2}
we find that

\begin{equation*}
c^2\bigl(\Vert\Phi^\alpha_A\Vert_{0,0}\bigr)^2\leq\bigl(\Vert\tilde{\cal D}_e
\Phi^\alpha_A\Vert_{0,0}\bigr)^2\leq2\Bigl(\Vert\tilde{\cal D}^\alpha{}_\beta
\Phi^\beta_A\Vert_{0,0}+2\vert K\vert \Vert\Phi^\alpha_A\Vert_{0,0}\Bigr)^2-8
\vert K\vert^2\bigl(\Vert\Phi^\alpha_A\Vert_{0,0}\bigr)^2,
\end{equation*}
which is just the inequality that we wanted to prove.
\end{proof}
Hence, on $\Sigma-B$, the $L_2$-norm of $\tilde{\cal D}^\alpha{}_\beta\Phi^\beta
_A$ is bounded from below by the $L_2$-norm of $\Phi^\alpha_A$ itself.

\begin{corollary} Under the conditions of Lemma \ref{l:A.2.3} with $\delta=0$
there is a positive constant $\tilde c$ such that

\begin{equation*}
\Vert\Phi^\alpha_A\Vert_{1,0}\leq\tilde c\Vert\tilde{\cal D}^\alpha{}_\beta
\Phi^\beta_A\Vert_{0,0}
\end{equation*}
holds for any $\Phi^\alpha_A\in H_{1,0}$, $\supp(\Phi^\alpha_A)\subset\Sigma-B$.
\label{c:A.2.2}
\end{corollary}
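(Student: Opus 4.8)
The plan is to combine the two previous results, namely the fundamental elliptic estimate of Lemma \ref{l:A.2.1} and Corollary \ref{c:A.2.1}, in precisely the same spirit as Corollary \ref{c:A.2.1} itself was obtained from Lemmas \ref{l:A.2.2} and \ref{l:A.2.3}. Both statements apply to spinor fields $\Phi^\alpha_A\in H_{1,0}$ supported in $\Sigma-B$, so there is no compatibility issue; and for such fields the $H_{1,0}$ and $H_{0,0}$ norms are just the usual $H_1$ and $L_2$ norms restricted to the support, so all the norms appearing are finite.

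First I would recall from Lemma \ref{l:A.2.1}, specialized to the asymptotic region, the estimate
\begin{equation*}
\Vert\Phi^\alpha_A\Vert_{1,0}\leq\sqrt{2}\Vert\tilde{\cal D}^\alpha{}_\beta\Phi
^\beta_A\Vert_{0,0}+\sqrt{1+\tfrac{3}{4}\Lambda}\,\Vert\Phi^\alpha_A\Vert_{0,0}.
\end{equation*}
The only quantity on the right that is not yet controlled by $\Vert\tilde{\cal D}^\alpha{}_\beta\Phi^\beta_A\Vert_{0,0}$ is $\Vert\Phi^\alpha_A\Vert_{0,0}$, and this is exactly what Corollary \ref{c:A.2.1} supplies: under the stated conditions on $W$ (with $\delta=0$) there is a positive constant
$$c_0:=\sqrt{c^2+8\vert K\vert^2}-2\sqrt{2}\vert K\vert>0$$
with $c_0\Vert\Phi^\alpha_A\Vert_{0,0}\leq\sqrt{2}\Vert\tilde{\cal D}^\alpha{}_\beta\Phi^\beta_A\Vert_{0,0}$, i.e. $\Vert\Phi^\alpha_A\Vert_{0,0}\leq(\sqrt{2}/c_0)\Vert\tilde{\cal D}^\alpha{}_\beta\Phi^\beta_A\Vert_{0,0}$.

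Substituting this bound into the fundamental elliptic estimate gives
\begin{equation*}
\Vert\Phi^\alpha_A\Vert_{1,0}\leq\Bigl(\sqrt{2}+\frac{\sqrt{2}}{c_0}\sqrt{1+\tfrac{3}{4}\Lambda}\Bigr)\Vert\tilde{\cal D}^\alpha{}_\beta\Phi^\beta_A\Vert_{0,0},
\end{equation*}
so one may simply take $\tilde c:=\sqrt{2}\bigl(1+c_0^{-1}\sqrt{1+\tfrac{3}{4}\Lambda}\bigr)$, which is a finite positive constant since $c_0>0$. This proves the corollary. I do not anticipate any genuine obstacle here: the argument is a one-line chaining of two already-established inequalities, the only thing to be careful about is that the open set $B$ and the constant $c$ (hence $c_0$) are furnished by Corollary \ref{c:A.2.1}, which in turn inherits them from Lemma \ref{l:A.2.3}, so the same $B$ must be used in the fundamental elliptic estimate — but Lemma \ref{l:A.2.1} holds for \emph{any} such $B$, so there is no conflict. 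The positivity $c_0>0$ is immediate from $\sqrt{c^2+8\vert K\vert^2}>\sqrt{8\vert K\vert^2}=2\sqrt{2}\vert K\vert$, the strict inequality coming from $c>0$.
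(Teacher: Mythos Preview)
Your proof is correct and in fact more direct than the paper's. Both arguments use Corollary~\ref{c:A.2.1} to bound $\Vert\Phi^\alpha_A\Vert_{0,0}$ by $\Vert\tilde{\cal D}^\alpha{}_\beta\Phi^\beta_A\Vert_{0,0}$, but then diverge: you feed this directly into the fundamental elliptic estimate (Lemma~\ref{l:A.2.1}), which already controls the full $H_{1,0}$ norm, and are done in one line. The paper instead goes back to Lemma~\ref{l:A.2.2} to bound $\Vert\tilde{\cal D}_e\Phi^\alpha_A\Vert_{0,0}$, then proves the equivalence of the $H_{1,\delta}$ norms built from ${\cal D}_e$ and from $\tilde{\cal D}_e$, and finally assembles $\Vert\Phi^\alpha_A\Vert_{1,0}$ from $\Vert\Phi^\alpha_A\Vert_{0,0}$ and $\Vert\tilde{\cal D}_e\Phi^\alpha_A\Vert_{0,0}$. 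Your route avoids that detour entirely; the paper's route has the minor side benefit of recording the norm equivalence explicitly, but that equivalence is not used elsewhere. Your remark about the compatibility of the set $B$ (fixed by Lemma~\ref{l:A.2.3}, while Lemma~\ref{l:A.2.1} accepts any such $B$) is exactly right and worth keeping.
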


\begin{proof} By Corollary \ref{c:A.2.1} there is a positive constant $c_1$
such that $\Vert\Phi^\alpha_A\Vert_{0,0}\leq c_1\Vert\tilde{\cal D}^\alpha{}
_\beta\Phi^\beta_A\Vert_{0,0}$. Combining this with the inequality of Lemma
\ref{l:A.2.2}, we obtain

\begin{equation*}
\bigl(\Vert\tilde{\cal D}_e\Phi^\alpha_A\Vert_{0,0}\bigr)^2\leq2(1+4\vert K
\vert c_1)\bigl(\Vert\tilde{\cal D}^\alpha{}_\beta\Phi^\beta_A\Vert_{0,0}\bigr)^2.
\end{equation*}
Clearly, for any $\delta\in\mathbb{R}$ the $H_{s,\delta}$-norms defined with
${\cal D}_e$ and with $\tilde{\cal D}_e$ are equivalent. In fact, in
particular,

\begin{eqnarray*}
\bigl(\Vert\Phi^\alpha\Vert_{1,\delta}\bigr)^2\!\!\!\!&=\!\!\!\!&\bigl(\Vert
  \Phi^\alpha\Vert_{0,\delta}\bigr)^2+\bigl(\Vert{\cal D}_e\Phi^\alpha\Vert
  _{0,\delta}\bigr)^2 \\
\!\!\!\!&\leq\!\!\!\!&\bigl(\Vert\Phi^\alpha\Vert_{0,\delta}\bigr)^2+\bigl(
  \Vert\tilde{\cal D}_e\Phi^\alpha\Vert_{0,\delta}+\frac{\sqrt{3}}{2}\vert K
  \vert\Vert\Phi^\alpha\Vert_{0,\delta}\bigr)^2 \\
\!\!\!\!&\leq\!\!\!\!&(1+\frac{3}{4}\vert K\vert^2)\Bigl(\bigl(\Vert\Phi
  ^\alpha\Vert_{0,\delta}\bigr)^2+\bigl(\Vert\tilde{\cal D}_e\Phi^\alpha
  \Vert_{0,\delta}\bigr)^2\Bigr)+\sqrt{3}\vert K\vert\Vert\tilde{\cal D}_e
  \Phi^\alpha\Vert_{0,\delta}\,\Vert\Phi^\alpha\Vert_{0,\delta} \\
\!\!\!\!&\leq\!\!\!\!&(1+\frac{\sqrt{3}}{2}\vert K\vert)^2\Bigl(\bigl(\Vert
  \Phi^\alpha\Vert_{0,\delta}\bigr)^2+\bigl(\Vert\tilde{\cal D}_e\Phi^\alpha
  \Vert_{0,\delta}\bigr)^2\Bigr).
\end{eqnarray*}
Hence

\begin{eqnarray*}
\bigl(\Vert\Phi^\alpha_A\Vert_{1,0}\bigr)^2\!\!\!\!&\leq\!\!\!\!&(1+
  \frac{\sqrt{3}}{2}\vert K\vert)^2\Bigl(\bigl(\Vert\Phi^\alpha_A\Vert_{0,0}
  \bigr)^2+\bigl(\Vert\tilde{\cal D}_e\Phi^\alpha_A\Vert_{0,0}\bigr)^2\Bigr) \\
\!\!\!\!&\leq\!\!\!\!&(1+\frac{\sqrt{3}}{2}\vert K\vert)^2\bigl(c_1^2+8\vert
K\vert c_1+2\bigr)\bigl(\Vert\tilde{\cal D}^\alpha{}_\beta\Phi^\beta_A
  \Vert_{0,0}\bigr)^2;
\end{eqnarray*}
i.e. the inequality of the corollary holds with $\tilde c^2=(1+
\frac{\sqrt{3}}{2}\vert K\vert)^2(c_1^2+8\vert K\vert c_1+2)$.
\end{proof}

The next lemma and its corollary are the adaptation of Theorem 6.2 of
\cite{ChCh} to the present situation:

\begin{lemma}
Let $\Sigma$ be an asymptotically hyperboloidal hypersurface for which the
`boost parameter function' $W$ satisfies $\frac{1}{3}\Lambda W<1$. Then, for
any $\delta'\leq0$, there exists a positive constant $C$ such that

\begin{equation}
\Vert\Phi^\alpha\Vert_{1,0}\leq C\Bigl(\Vert\tilde{\cal D}^\alpha{}_\beta\Phi
^\beta\Vert_{0,0}+\Vert\Phi^\alpha\Vert_{0,\delta'}\Bigr) \label{eq:A.2.6}
\end{equation}
holds for any $\Phi^\alpha\in H_{1,0}$.
\label{l:A.2.4}
\end{lemma}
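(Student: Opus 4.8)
The plan is to prove (\ref{eq:A.2.6}) by the standard patching argument: decompose $\Phi^\alpha$ into an exterior piece, controlled by Corollary \ref{c:A.2.2}, and an interior piece supported in a single compact domain, controlled by the bounded-domain elliptic estimate (\ref{eq:A.2.2b}) of Lemma \ref{l:A.2.1}; the hypothesis $\frac{1}{3}\Lambda W<1$ is precisely what makes Corollary \ref{c:A.2.2} (via Lemma \ref{l:A.2.3} with $\delta=0$, since $\frac{1}{3}\Lambda W<1=2\cdot 0+1$) available. Concretely, I would first fix the open set $B_0\subset\Sigma$ with compact closure and smooth boundary furnished by Lemma \ref{l:A.2.3} with $\delta=0$, enlarge it to an open $B_1\supset\overline{B_0}$ with compact closure, and choose a smooth cutoff $\psi:\Sigma\to[0,1]$ with $\psi\equiv 1$ on a neighbourhood of $\overline{B_0}$ and $\supp\psi\subset B_1$. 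Then I set $\Phi^\alpha_B:=\psi\Phi^\alpha$ (supported in the compact $\overline{B_1}$) and $\Phi^\alpha_A:=(1-\psi)\Phi^\alpha$ (supported in $\Sigma-B_0$), so that the hypotheses on supports in Corollary \ref{c:A.2.2} and in (\ref{eq:A.2.2b}) are met (taking the bounded domain in Lemma \ref{l:A.2.1} to be a ball containing $\overline{B_1}$). Since $\tilde{\cal D}^\alpha{}_\beta$ is first order, the commutator $[\tilde{\cal D},\psi]$ is a smooth bundle endomorphism built from $D_e\psi$ and supported in the compact collar $\overline{B_1}\setminus B_0$; hence $\Vert[\tilde{\cal D},\psi]\Phi^\alpha\Vert_{0,0}\le C_0\Vert\Phi^\alpha\Vert_{L_2(\overline{B_1})}$, and likewise for $\tilde{\cal D}(\psi\Phi^\alpha)$ on $B_1$.

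Next I would estimate the two pieces separately. For the exterior piece, Corollary \ref{c:A.2.2} gives $\Vert\Phi^\alpha_A\Vert_{1,0}\le\tilde c\,\Vert\tilde{\cal D}^\alpha{}_\beta\Phi^\beta_A\Vert_{0,0}$; writing $\tilde{\cal D}^\alpha{}_\beta\Phi^\beta_A=(1-\psi)\tilde{\cal D}^\alpha{}_\beta\Phi^\beta-[\tilde{\cal D},\psi]^\alpha{}_\beta\Phi^\beta$ and using $0\le 1-\psi\le 1$ (so that $\Vert(1-\psi)(\cdot)\Vert_{0,0}\le\Vert(\cdot)\Vert_{0,0}$) yields $\Vert\Phi^\alpha_A\Vert_{1,0}\le\tilde c\big(\Vert\tilde{\cal D}^\alpha{}_\beta\Phi^\beta\Vert_{0,0}+C_0\Vert\Phi^\alpha\Vert_{L_2(\overline{B_1})}\big)$. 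For the interior piece, (\ref{eq:A.2.2b}) gives $\Vert\Phi^\alpha_B\Vert_{H_1(B_1)}\le\sqrt{2}\,\Vert\tilde{\cal D}^\alpha{}_\beta\Phi^\beta_B\Vert_{L_2(B_1)}+\sqrt{1+\tfrac{3}{4}\Lambda}\,\Vert\Phi^\alpha_B\Vert_{L_2(B_1)}\le C_1\big(\Vert\tilde{\cal D}^\alpha{}_\beta\Phi^\beta\Vert_{0,0}+\Vert\Phi^\alpha\Vert_{L_2(\overline{B_1})}\big)$, and since $\Phi^\alpha_B$ is supported in the compact $\overline{B_1}$, on which $\Omega$ is bounded between positive constants, $\Vert\Phi^\alpha_B\Vert_{1,0}$ is equivalent to $\Vert\Phi^\alpha_B\Vert_{H_1(B_1)}$. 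Adding, $\Vert\Phi^\alpha\Vert_{1,0}\le\Vert\Phi^\alpha_A\Vert_{1,0}+\Vert\Phi^\alpha_B\Vert_{1,0}\le C_2\big(\Vert\tilde{\cal D}^\alpha{}_\beta\Phi^\beta\Vert_{0,0}+\Vert\Phi^\alpha\Vert_{L_2(\overline{B_1})}\big)$. Finally, because $\overline{B_1}$ is compact and $\delta'\le 0$, one has $\Omega^{-2\delta'}\ge\min_{\overline{B_1}}\Omega^{-2\delta'}>0$ there, hence $\Vert\Phi^\alpha\Vert_{L_2(\overline{B_1})}\le C_3\Vert\Phi^\alpha\Vert_{0,\delta'}$, which gives (\ref{eq:A.2.6}) with $C:=C_2\max\{1,C_3\}$. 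As in the last paragraph of the proof of Lemma \ref{l:A.2.1}, I would establish the estimate first for smooth $\Phi^\alpha$ and then pass to arbitrary $\Phi^\alpha\in H_{1,0}$ by density of the smooth fields together with continuity of the norms $\Vert\,\cdot\,\Vert_{s,\delta}$.

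I do not expect any serious analytic obstacle here: all the hard work (the Hardy-type lower bound, the fundamental elliptic estimate, the Rellich and Sobolev lemmas) is already packaged in Lemmas \ref{l:A.1.1}, \ref{l:A.1.2}, \ref{l:A.2.1}--\ref{l:A.2.3} and Corollary \ref{c:A.2.2}. The only point requiring care — and it is bookkeeping rather than analysis — is the choice of the cutoff $\psi$ so that the exterior piece genuinely has support in the \emph{fixed} set $\Sigma-B_0$ demanded by Corollary \ref{c:A.2.2}, while the interior piece has support in one compact domain to which (\ref{eq:A.2.2b}) applies, and then to verify that the commutator terms, which live on the compact overlap $\overline{B_1}\setminus B_0$, are dominated by $\Vert\Phi^\alpha\Vert_{0,\delta'}$. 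Here the role of the two hypotheses is clean: $\frac{1}{3}\Lambda W<1$ is used \emph{only} to invoke Lemma \ref{l:A.2.3} and Corollary \ref{c:A.2.2} in the exterior region, and $\delta'\le 0$ is used \emph{only} to bound the compact-region $L_2$-norm by the weighted $L^{\delta'}_2$-norm. This is the analog of Theorem 6.2 of \cite{ChCh}, and (together with the Rellich lemma, Lemma \ref{l:A.1.1}) it is the semi-Fredholm estimate that will feed into the isomorphism theorem \ref{t:A.3.1}.
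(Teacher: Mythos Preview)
Your proposal is correct and follows essentially the same route as the paper: a cutoff decomposition into an exterior piece handled by Corollary \ref{c:A.2.2} (available precisely because $\frac{1}{3}\Lambda W<1$ is the $\delta=0$ case of Lemma \ref{l:A.2.3}) and a compactly supported interior piece handled by (\ref{eq:A.2.2b}), with commutator terms absorbed into the compact-region $L_2$-norm which is then bounded by $\Vert\Phi^\alpha\Vert_{0,\delta'}$ via $\delta'\le 0$. The paper uses the level sets $B_a=\{\Omega>a\}$ and $B_{a/2}$ in place of your $B_0\subset B_1$, but this is only a cosmetic difference; your bookkeeping of the supports and the role of each hypothesis matches the paper's argument.
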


\begin{proof}
Let $0<a<1$, and $B_a:=\{p\in\Sigma\,\vert\,\Omega(p)>a\,\}$. Clearly, its
closure, $\overline{B_a}$, is compact with smooth boundary. Let $\beta:\Sigma
\rightarrow[0,1]$ be a smooth function such that $\beta(p)=1$ for $p\in B_a$
and $\beta(p)=0$ for $p\in\Sigma- B_{\frac{1}{2}a}$; i.e. in particular $\supp
(\beta)\subset\overline{B_{\frac{1}{2}a}}$. For any $\Phi^\alpha\in H_{1,0}$ let 
us define$\Phi^\alpha_B:=\beta\Phi^\alpha$ and $\Phi^\alpha_A:=(1-\beta)\Phi
^\alpha$, thebounded (or localized) and the asymptotic part of $\Phi^\alpha$, 
respectively.Clearly, $\supp(\Phi^\alpha_B)\subset\overline{B_{\frac{1}{2}a}}$, 
$\supp(\Phi^\alpha_A)\subset\Sigma-B_a$ and

\begin{equation}
\Vert\Phi^\alpha\Vert_{1,0}\leq\Vert\Phi^\alpha_B\Vert_{1,0}+\Vert\Phi^\alpha_A
\Vert_{1,0} \label{eq:A.2.7}
\end{equation}
hold. We derive the estimate (\ref{eq:A.2.6}) for $\Phi^\alpha_B$ and
$\Phi^\alpha_A$ separately, which by (\ref{eq:A.2.7}) yield
(\ref{eq:A.2.6}) for $\Phi^\alpha$.

Since $\supp(\Phi^\alpha_B)\subset\overline{B_{\frac{1}{2}a}}$, by
(\ref{eq:A.2.2b}) in Lemma \ref{l:A.2.1} there is a positive constant $c'$
such that

\begin{equation}
\Vert\Phi^\alpha_B\Vert_{H_1(B_{\frac{1}{2}a})}\leq c'\Bigl(\Vert\tilde{\cal D}
^\alpha{}_\beta\Phi^\beta_B\Vert_{L_2(B_{\frac{1}{2}a})}+\Vert\Phi^\alpha_B\Vert
_{L_2(B_{\frac{1}{2}a})}\Bigr). \label{eq:A.2.8}
\end{equation}
Since $\supp(\Phi^\alpha_B)\subset\overline{B_{\frac{1}{2}a}}$, the norm on the
left hand side is in fact $\Vert\Phi^\alpha_B\Vert_{1,0}$. In the first term on
the right we can write

\begin{equation*}
\tilde{\cal D}^\alpha{}_\beta\Phi^\beta_B=\tilde{\cal D}^\alpha{}_\beta\bigl(\beta
\Phi^\beta\bigr)=\beta\tilde{\cal D}^\alpha{}_\beta\Phi^\beta+\Bigl(\bar\pi^{B'}
D_{B'}{}^A\beta,\sigma^BD_B{}^{A'}\beta\Bigr),
\end{equation*}
and a straightforward calculation shows that the square of the pointwise norm
of its second term is

\begin{equation*}
\vert\bigl(\bar\pi^{B'}D_{B'}{}^A\beta,\sigma^BD_B{}^{A'}\beta\bigr)\vert^2=
\frac{1}{2}\vert D_e\beta\vert^2\,\vert\Phi^\alpha\vert^2.
\end{equation*}
Thus the first term on the right in (\ref{eq:A.2.8}) can be estimated as

\begin{eqnarray*}
\Vert\tilde{\cal D}^\alpha{}_\beta\Phi^\beta_B\Vert_{L_2(B_{\frac{1}{2}a})}
  \!\!\!\!&\leq\!\!\!\!&\Vert\tilde{\cal D}^\alpha{}_\beta\Phi^\beta\Vert
  _{L_2(B_{\frac{1}{2}a})}+\frac{1}{\sqrt{2}}\Bigl(\int_{B_{\frac{1}{2}a}}
  \vert D_e\beta\vert^2\vert\Phi^\alpha\vert^2{\rm d}\Sigma\Bigr)
  ^{\frac{1}{2}} \\
\!\!\!\!&\leq\!\!\!\!&\Vert\tilde{\cal D}^\alpha{}_\beta\Phi^\beta\Vert_{0,0}
  +\tilde C\Vert\Phi^\alpha\Vert_{L_2(B_{\frac{1}{2}a})}.
\end{eqnarray*}
Here, in the first step we used the triangle inequality and that $\beta\leq1$,
and then, in the second, we used the notation $\sqrt{2}\tilde C:=\sup\{
\vert D_e\beta\vert(p)\, \vert \, p\in B_{\frac{1}{2}a}\,\}$, the Cauchy--Schwarz
inequality and $\Vert\tilde{\cal D}^\alpha{}_\beta\Phi^\beta\Vert
_{L_2(B_{\frac{1}{2}a})}\leq\Vert\tilde{\cal D}^\alpha{}_\beta\Phi^\beta\Vert_{0,0}<
\infty$. Combining this inequality with (\ref{eq:A.2.8}) we find that

\begin{equation}
\Vert\Phi^\alpha_B\Vert_{1,0}\leq c'\Bigl(\Vert\tilde{\cal D}^\alpha{}_\beta\Phi
^\beta\Vert_{0,0}+(1+\tilde C)\Vert\Phi^\alpha_B\Vert_{L_2(B_{\frac{1}{2}a})}\Bigr).
\label{eq:A.2.9}
\end{equation}
Since $\Omega^{-2\delta'}\leq1$ for $\delta'\leq0$, $\Vert\Phi^\alpha\Vert
_{0,\delta'}\leq\Vert\Phi^\alpha\Vert_{0,0}\leq\Vert\Phi^\alpha\Vert_{1,0}<\infty$
holds; and hence the second norm on the right hand side can be estimated as

\begin{eqnarray}
\bigl(\Vert\Phi^\alpha_B\Vert_{L_2(B_{\frac{1}{2}a})}\bigr)^2\!\!\!\!&\leq\!\!\!\!&
  \bigl(\Vert\Phi^\alpha\Vert_{L_2(B_{\frac{1}{2}a})}\bigr)^2=\int_{B_{\frac{1}{2}a}}
  \Omega^{2\delta'}\Omega^{-2\delta'}\vert\Phi^\alpha\vert^2{\rm d}\Sigma
  \nonumber \\
\!\!\!\!&\leq\!\!\!\!&\sup\{\Omega^{2\delta'}(p)\,\vert\,p\in B_{\frac{1}{2}a}\,\}
  \bigl(\Vert\Phi^\alpha\Vert_{0,\delta'}\bigr)^2. \label{eq:A.2.10}
\end{eqnarray}
Combining this estimate with (\ref{eq:A.2.9}) we obtain that there exists
a positive constant $C_1$ such that

\begin{equation}
\Vert\Phi^\alpha_B\Vert_{1,0}\leq C_1\Bigl(\Vert\tilde{\cal D}^\alpha{}_\beta\Phi
^\beta\Vert_{0,0}+\Vert\Phi^\alpha\Vert_{0,\delta'}\Bigr) \label{eq:A.2.11}
\end{equation}
holds.

By Lemma \ref{l:A.2.3} and its Corollary \ref{c:A.2.2} there exist a
small enough $a\in(0,1)$ and a positive constant $\tilde c$ such that

\begin{equation}
\Vert\Phi^\alpha_A\Vert_{1,0}\leq\tilde c\Vert\tilde{\cal D}^\alpha{}_\beta
\Phi^\beta_A\Vert_{0,0} \label{eq:A.2.12}
\end{equation}
holds, where $\supp(\Phi^\alpha_A)\subset\Sigma-B_{\frac{1}{2}a}$. Since $\Phi
^\alpha_A:=(1-\beta)\Phi^\alpha$,

\begin{equation*}
\tilde{\cal D}^\alpha{}_\beta\Phi^\beta_A=(1-\beta)\tilde{\cal D}^\alpha{}_\beta
\Phi^\beta-\Bigl(\bar\pi^{B'}D_{B'}{}^A\beta,\sigma^BD_B{}^{A'}\beta\Bigr)
\end{equation*}
follows, and hence $\vert\tilde{\cal D}^\alpha{}_\beta\Phi^\beta_A\vert\leq\vert
\tilde{\cal D}^\alpha{}_\beta\Phi^\beta\vert+\frac{1}{\sqrt{2}}\vert D_e\beta
\vert\,\vert\Phi^\alpha\vert$. Using $\supp(\beta)\subset\overline{B
_{\frac{1}{2}a}}$ and the Cauchy--Schwarz inequality we obtain

\begin{eqnarray*}
\bigl(\Vert\tilde{\cal D}^\alpha{}_\beta\Phi^\beta_A\Vert_{0,0}\bigr)^2
\!\!\!\!&\leq\!\!\!\!&\int_\Sigma\vert\tilde{\cal D}^\alpha{}_\beta\Phi
  ^\beta\vert^2{\rm d}\Sigma+\int_{B_{\frac{1}{2}a}}\Bigl(\vert\tilde{\cal D}
  ^\alpha{}_\beta\Phi^\beta\vert2\tilde C\vert\Phi^\alpha\vert+\tilde C^2
  \vert\Phi^\alpha\vert^2\Bigr){\rm d}\Sigma \\
\!\!\!\!&\leq\!\!\!\!&\bigl(\Vert\tilde{\cal D}^\alpha{}_\beta\Phi^\beta
  \Vert_{0,0}\bigr)^2+2\tilde C\Vert\tilde{\cal D}^\alpha{}_\beta\Phi^\beta
  \Vert_{0,0}\,\Vert\Phi^\alpha\Vert_{L_2(B_{\frac{1}{2}a})}+\tilde C^2
  \bigl(\Vert\Phi^\alpha\Vert_{L_2(B_{\frac{1}{2}a})}\bigr)^2,
\end{eqnarray*}
where $\tilde C$ has been defined above. But by (\ref{eq:A.2.10}) and
(\ref{eq:A.2.12}) this implies that there exists a positive constant $C_2$
such that

\begin{equation}
\Vert\Phi^\alpha_A\Vert_{1,0}\leq C_2\Bigl(\Vert\tilde{\cal D}^\alpha{}_\beta\Phi
^\beta\Vert_{0,0}+\Vert\Phi^\alpha\Vert_{0,\delta'}\Bigr) \label{eq:A.2.13}
\end{equation}
holds. Finally, (\ref{eq:A.2.7}), (\ref{eq:A.2.11}) and
(\ref{eq:A.2.13}) yield the estimate (\ref{eq:A.2.6}).
\end{proof}

\begin{corollary} Under the conditions of Lemma \ref{l:A.2.4} there exists
a positive constant $C'$ such that

\begin{equation}
\Vert\Phi^\alpha\Vert_{1,0}\leq C'\Vert\tilde{\cal D}^\alpha{}_\beta\Phi^\beta
\Vert_{0,0}
\end{equation}
for any $\Phi^\alpha\in(\ker\tilde{\cal D})^\bot\cap H_{1,0}$, where $(\ker\tilde
{\cal D})^\bot$ denotes the orthogonal complement of $\ker\tilde{\cal D}$ in
$L_2$.
\label{c:A.2.3}
\end{corollary}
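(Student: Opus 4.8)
The plan is to derive this inequality from Lemma \ref{l:A.2.4} by the standard Fredholm-type argument: if no such constant $C'$ existed, one could construct a sequence on which the desired estimate fails, extract a convergent subsequence using the compactness of a suitable embedding (Lemma \ref{l:A.1.1}), and arrive at a contradiction with the injectivity of $\tilde{\cal D}$ established in subsection \ref{sub-4.6.2}. Concretely, I would argue by contradiction: suppose that for every $n\in\mathbb{N}$ there is $\Phi^\alpha_n\in(\ker\tilde{\cal D})^\bot\cap H_{1,0}$ with $\Vert\Phi^\alpha_n\Vert_{1,0}>n\Vert\tilde{\cal D}^\alpha{}_\beta\Phi^\beta_n\Vert_{0,0}$. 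Normalizing so that $\Vert\Phi^\alpha_n\Vert_{1,0}=1$, this gives $\Vert\tilde{\cal D}^\alpha{}_\beta\Phi^\beta_n\Vert_{0,0}<1/n\to 0$.

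Next I would apply Lemma \ref{l:A.2.4} with some fixed $\delta'<0$: since the sequence $\{\Phi^\alpha_n\}$ is bounded in $H_{1,0}$, by Lemma \ref{l:A.1.1} the injection $H_{1,0}\rightarrow H_{0,\delta'}$ is compact, so (passing to a subsequence) $\Phi^\alpha_n$ converges strongly in $H_{0,\delta'}=L^{\delta'}_2$, hence is Cauchy there. Feeding this into the estimate \eqref{eq:A.2.6} applied to differences $\Phi^\alpha_n-\Phi^\alpha_m$,
\begin{equation*}
\Vert\Phi^\alpha_n-\Phi^\alpha_m\Vert_{1,0}\leq C\Bigl(\Vert\tilde{\cal D}^\alpha{}_\beta(\Phi^\beta_n-\Phi^\beta_m)\Vert_{0,0}+\Vert\Phi^\alpha_n-\Phi^\alpha_m\Vert_{0,\delta'}\Bigr),
\end{equation*}
the right-hand side tends to zero as $n,m\to\infty$ because $\Vert\tilde{\cal D}^\alpha{}_\beta\Phi^\beta_n\Vert_{0,0}\to 0$ and $\{\Phi^\alpha_n\}$ is Cauchy in $H_{0,\delta'}$. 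Thus $\{\Phi^\alpha_n\}$ is Cauchy in $H_{1,0}$ and converges strongly to some $\Phi^\alpha\in H_{1,0}$ with $\Vert\Phi^\alpha\Vert_{1,0}=1$. Since $\tilde{\cal D}:H_{1,0}\rightarrow L_2$ is continuous (equation \eqref{eq:A.2.1}), $\tilde{\cal D}^\alpha{}_\beta\Phi^\beta=\lim\tilde{\cal D}^\alpha{}_\beta\Phi^\beta_n=0$, so $\Phi^\alpha\in\ker\tilde{\cal D}$. On the other hand, $(\ker\tilde{\cal D})^\bot$ is a closed subspace of $L_2$ and the convergence in $H_{1,0}$ implies convergence in $L_2$, so $\Phi^\alpha\in(\ker\tilde{\cal D})^\bot$. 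Therefore $\Phi^\alpha\in\ker\tilde{\cal D}\cap(\ker\tilde{\cal D})^\bot=\{0\}$, contradicting $\Vert\Phi^\alpha\Vert_{1,0}=1$.

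The only delicate point is making sure all the pieces fit: that $\frac{1}{3}\Lambda W<1$ (the hypothesis of Lemma \ref{l:A.2.4}) holds, which is part of the assumption inherited from the context; that the compact embedding of Lemma \ref{l:A.1.1} applies with $s=1>0=s'$ and $\delta=0>\delta'$; and that the orthogonal-complement structure is taken in $L_2=H_{0,0}$, consistent with how $\ker\tilde{\cal D}$ was identified there. I expect no genuine obstacle — this is the routine ``Lemma \ref{l:A.2.4} plus compactness plus injectivity'' packaging — but the step requiring the most care is verifying that the limit $\Phi^\alpha$ indeed lands in $\ker\tilde{\cal D}$ using only the $L_2$-continuity of $\tilde{\cal D}$ on $H_{1,0}$, together with the already-established fact (subsection \ref{sub-4.6.2}, via the Aronszajn-type argument and Friedrich's 3+1 Bianchi identities) that $\tilde{\cal D}$ has trivial kernel among square-integrable spinor fields with square-integrable first derivatives.
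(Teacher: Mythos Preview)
Your argument is correct and is essentially identical to the paper's own proof: contradiction, normalization in $H_{1,0}$, compact embedding from Lemma~\ref{l:A.1.1} to get a Cauchy subsequence in $H_{0,\delta'}$, then Lemma~\ref{l:A.2.4} on differences to upgrade to $H_{1,0}$-convergence, and finally the contradiction $\Phi^\alpha\in\ker\tilde{\cal D}\cap(\ker\tilde{\cal D})^\bot=\{0\}$ versus $\Vert\Phi^\alpha\Vert_{1,0}=1$. Note that the triviality of $\ker\tilde{\cal D}$ you invoke in your closing remarks is actually not needed here---the orthogonal-complement structure alone yields the contradiction, exactly as in your main argument and in the paper.
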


\begin{proof}
Suppose, on the contrary, that for any $i\in\mathbb{N}$ there exists a spinor
field $\Phi^\alpha_i\in(\ker\tilde{\cal D})^\bot\cap H_{1,0}$ for which $\Vert
\Phi^\alpha_i\Vert_{1,0}>i\Vert\tilde{\cal D}^\alpha{}_\beta\Phi^\beta_i\Vert_{0,0}$.
Then $\hat\Phi^\alpha_i:=(\Vert\Phi^\alpha_i\Vert_{1,0})^{-1}\Phi^\alpha_i$ defines
a sequence in $(\ker\tilde{\cal D})^\bot\cap H_{1,0}$ such that $\Vert\hat
\Phi^\alpha_i\Vert_{1,0}=1$, and hence

\begin{equation*}
1>i\Vert\tilde{\cal D}^\alpha{}_\beta\hat\Phi^\beta_i\Vert_{0,0}.
\end{equation*}
Thus, in particular, the sequence $\{\tilde{\cal D}^\alpha{}_\beta\hat\Phi^\beta
_i\}$, $i\in\mathbb{N}$, is Cauchy in $L_2$ and converges to zero. Since
$\{\hat\Phi^\alpha_i\}$, $i\in\mathbb{N}$, is bounded in $H_{1,0}$ and by Lemma
\ref{l:A.1.1} the injection $H_{1,0}\rightarrow H_{0,\delta'}$ is compact for
$\delta'<0$, there is a subsequence of $\{\hat\Phi^\alpha_i\}$, for the sake of
simplicity $\{\hat\Phi^\alpha_i\}$ itself, which is Cauchy in $H_{0,\delta'}$.
Applying (\ref{eq:A.2.6}) to $\hat\Phi^\alpha_i-\hat\Phi^\alpha_j$ we obtain

\begin{equation*}
\Vert\hat\Phi^\alpha_i-\hat\Phi^\alpha_j\Vert_{1,0}\leq C\Bigl(\Vert\tilde{\cal D}
^\alpha{}_\beta\hat\Phi^\beta_i-\tilde{\cal D}^\alpha{}_\beta\hat\Phi^\beta_j\Vert
_{0,0}+\Vert\hat\Phi^\alpha_i-\hat\Phi^\alpha_j\Vert_{0,\delta'}\Bigr);
\end{equation*}
i.e. $\{\hat\Phi^\alpha_i\}$, $i\in\mathbb{N}$, is Cauchy in $H_{1,0}$. Hence it
converges strongly in $H_{1,0}$ to some $\hat\Phi^\alpha$. Since the norm $\Vert
\,.\,\Vert_{1,0}:H_{1,0}\rightarrow[0,\infty)$ is continuous, $\Vert\hat\Phi
^\alpha\Vert_{1,0}=\lim_{i\rightarrow\infty}\Vert\hat\Phi^\alpha_i\Vert_{1,0}=1$, and
by the continuity of the scalar product $\hat\Phi^\alpha\in(\ker\tilde{\cal
D})^\bot$ also holds. Thus $\hat\Phi^\alpha$ is a non-zero vector which does
not belong to the kernel of $\tilde{\cal D}$. However, by the continuity of
$\tilde{\cal D}:H_{1,0}\rightarrow L_2$ and of the norm we have that $\Vert
\tilde{\cal D}^\alpha{}_\beta\hat\Phi^\beta\Vert_{0,0}=\lim_{i\rightarrow\infty}\Vert
\tilde{\cal D}^\alpha{}_\beta\hat\Phi^\beta_i\Vert_{0,0}=0$, which would yield
that $\hat\Phi^\alpha\in\ker\tilde{\cal D}$.
\end{proof}

Our last lemma is the so-called elliptic regularity estimate, which is just
Lemma A.2 of \cite{Sz12} applied now to Dirac spinors on the asymptotically
hyperboloidal $\Sigma$:

\begin{lemma} There exist positive constants $C_1$ and $C_2$ such that for
any $\Phi^\alpha\in H_{s,0}$, $s\geq1$, for which ${\cal D}^\alpha{}_\beta
\Phi^\beta\in H_{s,0}$ is also true, the inequality

\begin{equation*}
\Vert\Phi^\alpha\Vert_{s+1,0}\leq C_1\Vert{\cal D}^\alpha{}_\beta\Phi^\beta\Vert
_{s,0}+C_2\Vert\Phi^\alpha\Vert_{s,0}
\end{equation*}
holds.
\label{l:A.2.5}
\end{lemma}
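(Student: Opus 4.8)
Since $\delta=0$, the norms $\Vert\,.\,\Vert_{s,0}$ are the classical Sobolev norms of the \emph{physical} geometry $(\Sigma,h_{ab})$, so the statement is the standard a priori elliptic estimate for the first order elliptic operator ${\cal D}^\alpha{}_\beta$ (its ellipticity was noted at the beginning of subsection \ref{sub-A.2}), made \emph{uniform} along the non-compact $\Sigma$. The plan is to prove it by localization, exactly as in the proof of Lemma A.2 of \cite{Sz12}: combine the classical interior elliptic estimate in uniform coordinate charts with a partition of unity whose derivatives are bounded uniformly.

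First I would record the local ingredient. By the asymptotic forms (\ref{eq:4.4.2})--(\ref{eq:4.4.3}) of the induced metric and of the extrinsic curvature, together with the assumed smoothness of the conformal completion, the hypersurface $(\Sigma,h_{ab})$ has bounded geometry: its curvature and all its covariant derivatives are bounded, and the injectivity radius is bounded below by some $\rho_0>0$. Hence there is a $\rho\in(0,\rho_0)$ such that in $h$-geodesic normal coordinates on any ball $B(p,2\rho)$ the components of $h_{ab}$, of $h^{ab}$, of the Levi--Civita connection, and --- since the Sen connection differs from it only by the bounded extrinsic curvature term (recall ${\cal D}_e\alpha^A=D_e\alpha^A-\chi_e{}^A{}_{A'}t^{A'}{}_B\alpha^B$, with $\chi_{ab}$ of the asymptotic form (\ref{eq:4.4.3})) --- of ${\cal D}^\alpha{}_\beta$ itself, together with all their derivatives up to order $s+1$, are bounded by constants \emph{independent of $p$}. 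The classical elliptic estimate for constant-coefficient first order elliptic operators, upgraded to variable coefficients by freezing the coefficients at $p$ and absorbing the (small, for small $\rho$) error, then gives
\[
\Vert\Phi^\alpha\Vert_{H_{s+1}(B(p,\rho))}\leq K_1\Vert{\cal D}^\alpha{}_\beta\Phi^\beta\Vert_{H_s(B(p,2\rho))}+K_2\Vert\Phi^\alpha\Vert_{H_s(B(p,2\rho))}
\]
with $K_1,K_2$ independent of $p$.

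Then I would globalize. Using bounded geometry, pick a countable family $\{p_i\}$ so that $\{B(p_i,\rho)\}$ covers $\Sigma$ while $\{B(p_i,2\rho)\}$ has a uniformly bounded multiplicity $N$, and a subordinate smooth partition of unity $\{\phi_i\}$, $\supp\phi_i\subset B(p_i,\rho)$, $\sum_i\phi_i=1$, with all derivatives of $\phi_i$ up to order $s+1$ bounded uniformly in $i$. Since ${\cal D}^\alpha{}_\beta(\phi_i\Phi^\beta)=\phi_i{\cal D}^\alpha{}_\beta\Phi^\beta$ plus a term of order zero built algebraically from $d\phi_i$ (hence with uniformly bounded $H_s(B(p_i,2\rho))$ operator norm), I would apply the local estimate to $\phi_i\Phi^\alpha$, square, sum over $i$, and use the finite multiplicity to bound $\sum_i\Vert\,.\,\Vert_{H_\bullet(B(p_i,2\rho))}^2$ by $N\Vert\,.\,\Vert_{s,0}^2$, together with $\Vert\Phi^\alpha\Vert_{s+1,0}^2\leq C\sum_i\Vert\phi_i\Phi^\alpha\Vert_{H_{s+1}(B(p_i,\rho))}^2$ (Leibniz rule plus finite multiplicity). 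This gives $\Vert\Phi^\alpha\Vert_{s+1,0}\leq C_1\Vert{\cal D}^\alpha{}_\beta\Phi^\beta\Vert_{s,0}+C_2\Vert\Phi^\alpha\Vert_{s,0}$, first for smooth compactly supported $\Phi^\alpha$, and then, by density and the continuity of the norms, for every $\Phi^\alpha\in H_{s,0}$ with ${\cal D}^\alpha{}_\beta\Phi^\beta\in H_{s,0}$.

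The only genuinely non-routine point is the bounded-geometry claim invoked at the start of the second paragraph --- that asymptotically hyperboloidal $\Sigma$, with the structures fixed in subsection \ref{sub-4.4}, has uniformly bounded metric and curvature and injectivity radius bounded below, so that $K_1,K_2$, and hence $C_1,C_2$, can be chosen uniformly along the end. This is exactly what the explicit expansions (\ref{eq:4.4.2}), (\ref{eq:4.4.3}) and the smoothness of the conformal completion are there to guarantee, and it is the step at which the argument reduces to the adaptation of Lemma A.2 of \cite{Sz12} announced in the statement.
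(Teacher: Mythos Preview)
Your proposal is correct and follows essentially the same approach as the paper: both reduce the estimate to Lemma A.2 of \cite{Sz12}, with the key point being that the asymptotically hyperboloidal geometry of $\Sigma$ (via (\ref{eq:4.4.2})--(\ref{eq:4.4.3}) and smoothness of the conformal completion) has uniformly bounded curvature and connection, so the local elliptic estimates can be made uniform along the end. The only cosmetic difference is that the paper's sketch phrases the uniformity in terms of an orthonormal frame $\{e^a_{\bi},\vartheta^{\bi}_a\}$ with bounded Sen connection 1-form $\gamma^{\bi}_{e{\bj}}$, whereas you work in geodesic normal coordinates and a partition of unity of bounded multiplicity; these are equivalent implementations of the same bounded-geometry localization.
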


\begin{proof}
The proof is a straightforward modification of that of Lemma A.2 of
\cite{Sz12}. The only essential difference between the two proofs is that in
the present case we should use orthonormal dual bases $\{e^a_{\bi},\vartheta
^{\bi}_a\}$, ${\bi}=1,2,3$, for which the connection 1-form $\gamma^{\bi}
_{e{\bj}}:=\vartheta^{\bi}_a{\cal D}_ee^a_{\bj}=\vartheta^{\bi}_aD_ee^a_{\bj}$ is
bounded on the whole $\Sigma$. Since $\Sigma$ is asymptotically hyperboloidal
with bounded intrinsic curvature (see subsection \ref{sub-4.4}), such a frame
field always exists. Also, though in the present case $\Sigma$ is not compact,
its intrinsic and extrinsic curvatures and their finitely many derivatives
(up to order $(s-1)$) are also bounded.
\end{proof}

\subsection{The isomorphism theorem for $\tilde{\cal D}$}
\label{sub-A.3}

In subsection \ref{sub-4.6.2} we showed that $\tilde{\cal D}^\alpha{}_\beta
\Phi^\beta=0$ does not have any smooth square integrable solution. First we
show that it does not have any solution even in $H_{1,0}$.

\begin{proposition}
$\ker\tilde{\cal D}\subset H_{1,0}$ is empty. \label{p:A.3.1}
\end{proposition}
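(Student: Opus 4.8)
The plan is to show that any $\Phi^\alpha \in \ker\tilde{\cal D}\cap H_{1,0}$ must vanish, reducing to the smoothness case already settled in subsection \ref{sub-4.6.2}. The first step is elliptic regularity: since $\tilde{\cal D}$ is elliptic and $\tilde{\cal D}^\alpha{}_\beta\Phi^\beta=0$, the difference $\tilde{\cal D}$ and ${\cal D}$ is only the zeroth-order term $\tfrac32 K\Phi^\alpha$, so ${\cal D}^\alpha{}_\beta\Phi^\beta = -\tfrac32 K\Phi^\alpha \in H_{1,0}$. Applying Lemma \ref{l:A.2.5} with $s=1$ gives $\Phi^\alpha \in H_{2,0}$, and then bootstrapping (${\cal D}^\alpha{}_\beta\Phi^\beta = -\tfrac32 K\Phi^\alpha \in H_{s,0}$ whenever $\Phi^\alpha \in H_{s,0}$, so Lemma \ref{l:A.2.5} pushes to $H_{s+1,0}$) yields $\Phi^\alpha \in H_{s,0}$ for every $s$. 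By the Sobolev embedding Lemma \ref{l:A.1.2} (with $\delta=0$, so that $\delta' < 1$ is admissible for any $s' $), this means $\Phi^\alpha \in C^{s'}_{\delta'}$ for every $s'$ and some $\delta'$ slightly less than $1$; in particular $\Phi^\alpha$ is smooth on $\Sigma$ and falls off at least like $\Omega^{1/2+\epsilon}$ for small $\epsilon>0$.

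The second step is to upgrade this to the decay rate $o(\Omega^{3/2})$ that the argument in subsection \ref{sub-4.6.2} requires. Here one invokes the asymptotic analysis already carried out: a smooth $H_{1,0}$ solution of the renormalized Witten equations is, by the results of subsection \ref{sub-4.5.1}--\ref{sub-4.5.2}, forced to have an expansion governed by (\ref{eq:4.5.2a})--(\ref{eq:4.5.13}); but in the homogeneous setting the boundary term $\alpha^{(0)}_{\uA}$ must be consistent with square integrability — and as noted just after (\ref{eq:4.6.11}), the finiteness/square-integrability requirement plus the twistor analysis of subsection \ref{sub-4.6.1} excludes a nonzero leading term, so in fact $\Phi^\alpha = O(\Omega^2)$, hence $o(\Omega^{3/2})$. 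More directly: a smooth solution either has a nonzero boundary value (contradicting square integrability, since then $\hat\sigma^A,\hat\pi^A$ diverge as $\sqrt{\tilde r}$ by (\ref{eq:4.5.2b}), or at best are $O(\Omega^{1/2})$ which is not square integrable on the hyperboloidal $\Sigma$ where ${\rm d}\hat\Sigma = \tilde r R^2|T_e|\,{\rm d}{\cal S}_0\,{\rm d}\tilde r$), or it has vanishing boundary value and then, being smooth, is $O(\Omega^2)={}_0\sigma^A\Omega^2$ plus analogously for $\pi$, as in (\ref{eq:4.5.13}). Since $\Phi^\alpha\in H_{1,0}$ is square integrable by hypothesis, the former is impossible, so $\Phi^\alpha = O(\Omega^2)$.

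The third step is to run the rigidity argument of subsection \ref{sub-4.6.2} verbatim: with $\Phi^\alpha=(\hat\sigma^A,\bar{\hat\pi}^{A'})$ smooth and $o(\Omega^{3/2})$, the boundary integrals of the Nester--Witten/Sen--Witten identity (\ref{eq:4.7}) vanish (this is exactly the computation in (\ref{eq:4.6.7b}) using (\ref{eq:4.19c})), so (\ref{eq:4.6.7}) forces $H[\hat\sigma,\bar{\hat\sigma}]+H[\hat\pi,\bar{\hat\pi}]=0$; the dominant energy condition then gives $\hat T_{ab}=0$ on $\Sigma$ and equations (\ref{eq:4.6.8a}); the Aronszajn-type argument (appendix of \cite{Sz13}) shows $\hat\sigma_A$ and $\hat\pi_A$ cannot be proportional on an open set, the integrability condition of (\ref{eq:4.6.8a}) shows the spacetime is locally de Sitter along $\Sigma$, and then the de Sitter uniqueness result of subsection \ref{sub-4.5.3} (a solution of (\ref{eq:4.5.14}) is determined by its value at one point of $\Sigma\cap\mathscr{I}^+$, where $\hat\sigma^A=\hat\pi^A=0$) forces $\Phi^\alpha\equiv 0$ on $\Sigma$.

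I expect the main obstacle to be the passage from "$\Phi^\alpha\in H_{1,0}$, possibly nonsmooth" to "$\Phi^\alpha$ smooth with $O(\Omega^2)$ fall-off": the interior smoothness is standard elliptic regularity (Lemma \ref{l:A.2.5}) plus Sobolev embedding (Lemma \ref{l:A.1.2}), but pinning down the \emph{precise} asymptotic rate requires care — one must be sure that a smooth $H_{1,0}$ solution cannot have a slowly-decaying tail like $\Omega^{1/2+\epsilon}$ with $\epsilon<1/2$ that is nevertheless square integrable on $\Sigma$. The cleanest route is to observe that $\Phi^\alpha$ being smooth on the conformally compactified $\Sigma$ forces an integer-power (in $\Omega$) expansion, so the algebraic boundary condition (\ref{eq:4.5.4}) and the expansion (\ref{eq:4.5.13}) apply; then square integrability kills the $\Omega^0$ term and smoothness gives $O(\Omega)$ with the $\Omega^1$ coefficient determined by (\ref{eq:4.5.11}), and one checks (as in (\ref{eq:4.6.7b})) that this still makes the boundary integral vanish, which is all that is needed — in fact one does not even need the full $O(\Omega^2)$, only that $\hat\sigma_0,\hat\sigma_1$ fall off faster than $\Omega$, which holds as soon as the boundary value $\alpha^{(0)}_{\uA}$ vanishes, and that vanishing follows from square integrability.
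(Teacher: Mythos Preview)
Your proposal is correct and follows essentially the same route as the paper: bootstrap regularity via Lemma~\ref{l:A.2.5}, apply the Sobolev embedding Lemma~\ref{l:A.1.2} to obtain smoothness, and then invoke the result of subsection~\ref{sub-4.6.2} that the homogeneous equation admits no smooth square-integrable solution. The paper's own proof is more concise---it simply cites subsection~\ref{sub-4.6.2} after the bootstrap rather than re-deriving the decay rate or rerunning the rigidity argument---so your second and third steps, while not wrong, reproduce material the paper regards as already established (in particular, the equivalence of $o(\Omega^{3/2})$ fall-off with square integrability is asserted there).
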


\begin{proof}
If $\Phi^\alpha\in\ker\tilde{\cal D}$ such that $\Phi^\alpha\in H_{1,0}$, then
${\cal D}^\alpha{}_\beta\Phi^\beta=-\frac{3}{2}K\Phi^\alpha\in H_{1,0}$. Thus by
the elliptic regularity estimate, Lemma \ref{l:A.2.5}, it follows that
$\Phi^\alpha$ belongs to $H_{2,0}$ too, which implies that it belongs to
$H_{3,0}$, ... etc; i.e. it belongs to $H_{s,0}$ for any $s\in\mathbb{N}$. But
by the Sobolev lemma, Lemma \ref{l:A.1.2}, this implies that $\Phi^\alpha$
is smooth. However, in subsection \ref{sub-4.6.2} we showed that $\tilde
{\cal D}^\alpha{}_\beta\Phi^\beta=0$ does not have any \emph{smooth} square
integrable solution, and hence $\ker\tilde{\cal D}=\emptyset$.
\end{proof}

We also need the following two propositions:

\begin{proposition}
Under the conditions of Lemma \ref{l:A.2.4} ${\rm Im}\,\tilde{\cal D}
\subset L_2$ is a closed subspace. \label{p:A.3.2}
\end{proposition}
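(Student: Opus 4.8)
The plan is to use the standard functional-analytic argument that a bounded linear operator between Banach spaces has closed range provided it admits an a priori estimate modulo a compact perturbation, combined with the fact (Proposition \ref{p:A.3.1}) that $\ker\tilde{\cal D}$ is trivial. First I would restrict attention to $\tilde{\cal D}:H_{1,0}\to L_2$ and invoke Corollary \ref{c:A.2.3}: since $\ker\tilde{\cal D}\subset H_{1,0}$ is empty by Proposition \ref{p:A.3.1}, the orthogonal complement $(\ker\tilde{\cal D})^\bot$ is all of $L_2$, and hence Corollary \ref{c:A.2.3} already gives a positive constant $C'$ with $\Vert\Phi^\alpha\Vert_{1,0}\leq C'\Vert\tilde{\cal D}^\alpha{}_\beta\Phi^\beta\Vert_{0,0}$ for \emph{every} $\Phi^\alpha\in H_{1,0}$. (Strictly, one should check that the $H_{1,0}$ elements with $\Phi^\alpha\in(\ker\tilde{\cal D})^\bot$ in the $L_2$-sense exhaust $H_{1,0}$, which is immediate once $\ker\tilde{\cal D}=\emptyset$.)

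Next I would argue that this coercive estimate forces the range to be closed. Let $\{\Psi^\alpha_i\}$ be a sequence in ${\rm Im}\,\tilde{\cal D}$ converging in $L_2$ to some $\Psi^\alpha$; write $\Psi^\alpha_i=\tilde{\cal D}^\alpha{}_\beta\Phi^\beta_i$ with $\Phi^\alpha_i\in H_{1,0}$. Applying the estimate to the differences,
\begin{equation*}
\Vert\Phi^\alpha_i-\Phi^\alpha_j\Vert_{1,0}\leq C'\Vert\tilde{\cal D}^\alpha{}_\beta\Phi^\beta_i-\tilde{\cal D}^\alpha{}_\beta\Phi^\beta_j\Vert_{0,0}=C'\Vert\Psi^\alpha_i-\Psi^\alpha_j\Vert_{0,0},
\end{equation*}
so $\{\Phi^\alpha_i\}$ is Cauchy in $H_{1,0}$ and converges strongly to some $\Phi^\alpha\in H_{1,0}$. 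By the continuity of $\tilde{\cal D}:H_{1,0}\to L_2$ (established at the start of subsection \ref{sub-A.2}), $\tilde{\cal D}^\alpha{}_\beta\Phi^\beta_i\to\tilde{\cal D}^\alpha{}_\beta\Phi^\beta$ in $L_2$; by uniqueness of limits, $\Psi^\alpha=\tilde{\cal D}^\alpha{}_\beta\Phi^\beta\in{\rm Im}\,\tilde{\cal D}$. Hence ${\rm Im}\,\tilde{\cal D}$ is closed.

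I expect the only delicate point to be verifying that the hypotheses of Corollary \ref{c:A.2.3} are genuinely available in the generality needed here — in particular that $\frac{1}{3}\Lambda W<1$ (the condition of Lemma \ref{l:A.2.4}) is exactly the condition stated in Proposition \ref{p:A.3.2}, so no extra assumption is smuggled in — and that Proposition \ref{p:A.3.1} indeed reduces $(\ker\tilde{\cal D})^\bot$ to the whole space. Everything else is the routine "coercive estimate $\Rightarrow$ closed range" argument; the compactness of the embedding $H_{1,0}\hookrightarrow H_{0,\delta'}$ (Lemma \ref{l:A.1.1}) has already been spent inside the proof of Corollary \ref{c:A.2.3}, so it need not be reinvoked here. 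Thus the proof is short, modulo the estimates of subsection \ref{sub-A.2} which we may assume.
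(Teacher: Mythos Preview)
Your proposal is correct and follows essentially the same route as the paper: invoke Proposition~\ref{p:A.3.1} so that $(\ker\tilde{\cal D})^\bot\cap H_{1,0}=H_{1,0}$, apply Corollary~\ref{c:A.2.3} to the differences of a lifted Cauchy sequence, and conclude by continuity of $\tilde{\cal D}$. The paper's proof is the same argument, only slightly more terse in justifying why Corollary~\ref{c:A.2.3} applies to every preimage.
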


\begin{proof}
Let $\{\chi^\alpha_i\}$, $i\in\mathbb{N}$, be any Cauchy sequence in ${\rm Im}
\,\tilde{\cal D}\subset L_2$. By the previous proposition there is a uniquely
determined sequence $\{\Phi^\alpha_i\}$ in $H_{1,0}$ such that $\tilde{\cal D}
^\alpha{}_\beta\Phi_i^\beta=\chi^\alpha_i$. Then by Corollary \ref{c:A.2.3}
there is a positive constant $C'$ such that $\Vert\Phi^\alpha_i-\Phi^\alpha_j
\Vert_{1,0}\leq C'\Vert\chi^\alpha_i-\chi^\alpha_j\Vert_{0,0}$. Hence $\{\Phi
^\alpha_i\}$ is Cauchy in $H_{1,0}$, converging to some $\Phi^\alpha\in H_{1,0}$.
Since $\tilde{\cal D}$ is continuous, $\chi^\alpha_i=\tilde{\cal D}^\alpha
{}_\beta\Phi_i^\beta\rightarrow\tilde{\cal D}^\alpha{}_\beta\Phi^\beta\in{\rm Im}
\,\tilde{\cal D}$ if $i\rightarrow\infty$; i.e. ${\rm Im}\,\tilde{\cal D}
\subset L_2$ is a closed subspace.
\end{proof}

\begin{proposition}
${\rm i}\,\tilde{\cal D}:H_{1,0}\rightarrow L_2$ is self-adjoint.
\label{p:A.3.3}
\end{proposition}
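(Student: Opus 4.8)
The plan is to show that the densely defined operator ${\rm i}\tilde{\cal D}$ on $L_2=H_{0,0}$ with domain $H_{1,0}$ (it is densely defined since $H_{1,0}\supset C^\infty_c(\Sigma,\mathbb{D}^\alpha)$) coincides with its Hilbert-space adjoint. Since a densely defined symmetric operator $T$ always satisfies $T\subseteq T^*$, the task splits into two parts: \emph{(a)} ${\rm i}\tilde{\cal D}$ is symmetric on $H_{1,0}$, and \emph{(b)} the domain of $({\rm i}\tilde{\cal D})^*$ is contained in $H_{1,0}$ --- an elliptic-regularity statement. For both parts the tools are already in place: the boundedness of $\tilde{\cal D}:H_{1,0}\to L_2$ from (\ref{eq:A.2.1}), the a priori estimate $\Vert\Phi^\alpha\Vert_{1,0}\leq C(\Vert\tilde{\cal D}^\alpha{}_\beta\Phi^\beta\Vert_{0,0}+\Vert\Phi^\alpha\Vert_{0,0})$ (Lemma \ref{l:A.2.4} with $\delta'=0$; or, with no restriction on $W$, the fundamental elliptic estimate Lemma \ref{l:A.2.1} combined with a partition of unity), and interior elliptic regularity for the first-order elliptic operator $\tilde{\cal D}$ (the local version of Lemma \ref{l:A.2.5}).

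For \emph{(a)} I would first record the formal symmetry on smooth compactly supported fields: a routine integration by parts --- the one underlying the divergence structure of the Sen--Witten identity (\ref{eq:A.2.3}) --- gives, for $\Phi^\alpha,\Psi^\alpha\in C^\infty_c$, that $\langle{\rm i}\tilde{\cal D}^\alpha{}_\beta\Phi^\beta,\Psi^\alpha\rangle-\langle\Phi^\alpha,{\rm i}\tilde{\cal D}^\alpha{}_\beta\Psi^\beta\rangle$ equals (a real multiple of) the integral over $\partial\Sigma$ of a bilinear expression of Nester--Witten type; the zeroth-order part of $\tilde{\cal D}$ contributes the factor ${\rm i}\cdot\tfrac{3}{2}K=\mp\tfrac{3}{2}\sqrt{\Lambda/6}$, which is real and hence manifestly symmetric, and the choice $\tilde K=K$, $6K^2=-\Lambda$ of subsection \ref{sub-4.1} is exactly what makes ${\rm i}\tilde{\cal D}$ --- rather than some other phase --- the symmetric operator. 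For compactly supported fields the boundary term vanishes, and by density of $C^\infty_c$ in $H^1$ over compact sets the identity extends to compactly supported $H_1$-fields. To upgrade to all of $H_{1,0}$, choose smooth cutoffs $\chi_R:\Sigma\to[0,1]$ equal to $1$ on $\{\Omega>1/R\}$, with compact support and with $\vert D_e\chi_R\vert$ bounded uniformly in $R$. Since the commutator of $\tilde{\cal D}$ with multiplication by $\chi_R$ is the order-zero operator $\Phi^\alpha\mapsto(\bar\pi^{B'}D_{B'}{}^A\chi_R,\sigma^BD_B{}^{A'}\chi_R)$ of pointwise norm $\tfrac{1}{\sqrt{2}}\vert D_e\chi_R\vert\,\vert\Phi^\alpha\vert$ (cf. the proof of Lemma \ref{l:A.2.4}), and this term is supported in a region receding to infinity on which the $L_2$-mass of $\Phi^\alpha$ tends to $0$, dominated convergence gives $\chi_R\Phi^\alpha\to\Phi^\alpha$ in $H_{1,0}$, and likewise for $\Psi^\alpha$. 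Applying the symmetry identity to the compactly supported $H_1$-fields $\chi_R\Phi^\alpha,\chi_R\Psi^\alpha$ and letting $R\to\infty$, using (\ref{eq:A.2.1}) and continuity of the $L_2$ inner product, yields $\langle{\rm i}\tilde{\cal D}\Phi,\Psi\rangle=\langle\Phi,{\rm i}\tilde{\cal D}\Psi\rangle$ for all $\Phi^\alpha,\Psi^\alpha\in H_{1,0}$.

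For \emph{(b)}, let $\Psi^\alpha\in L_2$ lie in the domain of $({\rm i}\tilde{\cal D})^*$, say $({\rm i}\tilde{\cal D})^*\Psi^\alpha=\Xi^\alpha\in L_2$. Testing against $\Phi^\alpha\in C^\infty_c$ and using the formal symmetry from \emph{(a)}, $\Psi^\alpha$ is a distributional $L_2$-solution of ${\rm i}\tilde{\cal D}^\alpha{}_\beta\Psi^\beta=\Xi^\alpha$; by interior elliptic regularity $\Psi^\alpha\in H^{\rm loc}_1$, so each $\chi_R\Psi^\alpha$ is an $H_1$-field of compact support, hence in $H_{1,0}$. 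The a priori estimate applied to $\chi_R\Psi^\alpha$ gives $\Vert\chi_R\Psi^\alpha\Vert_{1,0}\leq C\bigl(\Vert\tilde{\cal D}^\alpha{}_\beta(\chi_R\Psi^\beta)\Vert_{0,0}+\Vert\chi_R\Psi^\alpha\Vert_{0,0}\bigr)$, and, since $\tilde{\cal D}\Psi^\alpha=-{\rm i}\Xi^\alpha$ and the commutator term has $L_2$-norm $\leq\tfrac{1}{\sqrt{2}}\sup\vert D_e\chi_R\vert\,\Vert\Psi^\alpha\Vert_{0,0}$, the right-hand side is at most $C\bigl(\Vert\Xi^\alpha\Vert_{0,0}+C'\Vert\Psi^\alpha\Vert_{0,0}\bigr)$, a bound independent of $R$. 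Letting $R\to\infty$ and applying Fatou's lemma to (\ref{eq:A.1.2}) shows $\Vert\Psi^\alpha\Vert_{1,0}<\infty$, i.e. $\Psi^\alpha\in H_{1,0}$. On $H_{1,0}$ the adjoint acts as the formal adjoint of ${\rm i}\tilde{\cal D}$, which by \emph{(a)} is ${\rm i}\tilde{\cal D}$ itself; hence $({\rm i}\tilde{\cal D})^*={\rm i}\tilde{\cal D}$, as claimed. (In particular ${\rm i}\tilde{\cal D}$, being self-adjoint, is automatically closed.)

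The main obstacle is the formal-symmetry computation in part \emph{(a)}: one must verify that the surface term in the integration by parts is exactly of Nester--Witten type, so that it is the combination ${\rm i}\tilde{\cal D}$ with $K$ imaginary that is symmetric --- the step that ties this functional-analytic statement to the specific renormalization of subsection \ref{sub-4.1}. A subsidiary technical point is the construction of the exhausting cutoffs $\chi_R$ with uniformly bounded $D_e$-gradient: since $\tilde r=1/\Omega$ is an asymptotic areal coordinate with $\vert D_e\Omega\vert=\Omega/\vert T_e\vert+O(\Omega^2)$ by (\ref{eq:4.4.DO}), a cutoff interpolating over $\{R<\tilde r<2R\}$ has $\vert D_e\chi_R\vert$ of order $\tilde r\,\vert\chi_R'(\tilde r)\vert/\vert T_e\vert=O(1/\vert T_e\vert)$, hence uniformly bounded, which is exactly what makes the cutoff approximations in \emph{(a)} and \emph{(b)} work.
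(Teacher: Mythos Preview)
Your proof is correct and follows the standard two-step pattern (symmetry plus domain containment), but the implementation of each step differs from the paper's.

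For part \emph{(a)} the paper computes the boundary term in the integration by parts explicitly as a 2-surface integral (the formula just before the sentence ``Thus, the formal adjoint\ldots''), and argues directly --- via the GHP form (\ref{eq:4.6.7b}) --- that it vanishes whenever both spinor fields are square integrable, i.e.\ fall off as $o(\Omega^{3/2})$. Your route via compactly supported fields and cutoff approximation $\chi_R\Phi^\alpha\to\Phi^\alpha$ in $H_{1,0}$ avoids this explicit boundary analysis entirely; it is more elementary but less informative about what happens at $\mathscr{I}^+$.

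For part \emph{(b)} the two arguments are genuinely different. The paper approximates $\chi^\alpha\in{\rm Dom}(\tilde{\cal D}^*)$ by a sequence $\chi^\alpha_i\in H_{1,0}$ converging strongly in $L_2$, shows that $\tilde{\cal D}\chi^\alpha_i$ converges weakly in $L_2$, and then invokes boundedness and weak compactness of the unit ball in the Hilbert space $H_{1,0}$ to conclude that $\chi^\alpha$ lies in $H_{1,0}$. Your argument is more direct and more PDE-flavoured: interior elliptic regularity gives $\Psi^\alpha\in H^{\rm loc}_1$, the a~priori estimate applied to the cutoffs $\chi_R\Psi^\alpha$ gives a uniform $H_{1,0}$ bound, and Fatou finishes. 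Your approach imports one ingredient not explicitly proved in the paper --- local elliptic regularity for $\tilde{\cal D}$ --- but this is entirely standard for smooth-coefficient first-order elliptic operators, and your observation that the $\delta'=0$ estimate follows from Lemma~\ref{l:A.2.1} plus a partition of unity, without the restriction on $W$, is correct. The paper's route is softer functional analysis; yours gives a slightly more quantitative grip on the solution.
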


\begin{proof}
To calculate the formal adjoint of $\tilde{\cal D}:C^\infty(\Sigma,\mathbb{D}
^\alpha)\rightarrow C^\infty(\Sigma,\mathbb{D}^\alpha)$ (with respect to the
$L_2$-scalar product), let $\Phi^\alpha=(\sigma^A,\bar\pi^{A'})\in C^\infty(
\Sigma,\mathbb{D}^\alpha)\cap L_2$ and $\chi^\alpha=(\lambda^A,\bar\mu^{A'})\in
C^\infty(\Sigma,\mathbb{D}^\alpha)$ be arbitrary. Then by integration by parts

\begin{eqnarray*}
\langle\tilde{\cal D}^\alpha{}_\beta\Phi^\beta,\chi^\alpha\rangle=\int_\Sigma
  \sqrt{2}t_{AA'}\!\!\!\!&{}\!\!\!\!&\Bigl(\bigl(D^A{}_{B'}\bar\pi^{B'}+
  \frac{1}{2}\chi t^A{}_{B'}\bar\pi^{B'}+\frac{3}{2}K\sigma^A\bigr)\bar
  \lambda^{A'} \\
\!\!\!\!&{}\!\!\!\!&+\bigl(D^{A'}{}_B\sigma^B+\frac{1}{2}\chi t^{A'}{}_B\sigma
  ^B+\frac{3}{2}K\bar\pi^{A'}\bigr)\mu^A\Bigr){\rm d}\Sigma \\
=\sqrt{2}\oint_{\cal S}\!\!\!\!&{}\!\!\!\!&\bigl(\sigma^At_A{}^{A'}v_{A'B}\mu^B+
 \bar\pi^{A'}t_{A'}{}^Av_{AB'}\bar\lambda^{B'}\bigr){\rm d}{\cal S}-\langle
 \Phi^\alpha,\tilde{\cal D}^\alpha{}_\beta\chi^\beta\rangle.
\end{eqnarray*}
Thus, the formal adjoint $\tilde{\cal D}^*:C^\infty(\Sigma,\mathbb{D}^\alpha)$
$\rightarrow C^\infty(\Sigma,\mathbb{D}^\alpha)$ of $\tilde{\cal D}$ is just
$-\tilde{\cal D}$, i.e. ${\rm i}\,\tilde{\cal D}$ is \emph{formally}
self-adjoint. Since $\Phi^\alpha$ is square integrable, its Weyl spinor parts
fall off as $o(\Omega^{3/2})$, and hence to ensure the vanishing of the
boundary integral $\chi^\alpha$ should be required to fall off at least as
$O(\Omega^{3/2})$. In fact, since $\tilde{\cal D}$ maps $H_{1,0}$ into $L_2$,
we require $\chi^\alpha$ to be square integrable, too, i.e. to fall off as $o(
\Omega^{3/2})$.

Thus, what we should show is only that the domain

\begin{equation*}
{\rm Dom}\bigl(\tilde{\cal D}^*\bigr):=\bigl\{\chi^\alpha\in L_2\vert\,\exists
\Omega^\alpha\in L_2:\,\langle\tilde{\cal D}^\alpha{}_\beta\Phi^\beta,\chi^\alpha
\rangle=\langle\Phi^\alpha,\Omega^\alpha\rangle\,\,\forall\,\Phi^\alpha\in H_{1,0}
\,\bigr\}
\end{equation*}
is just $H_{1,0}$. In fact, since the formal adjoint $({\rm i}\tilde{\cal D})
^*$ coincides with ${\rm i}\tilde{\cal D}$, their (unique) extension from the
space of the smooth square integrable spinor fields to ${\rm Dom}(\tilde
{\cal D}^*)$ and $H_{1,0}$, respectively, coincide if the domains coincide.
Note that $\Omega^\alpha$ in the definition of ${\rm Dom}(\tilde{\cal D}^*)$
is uniquely determined by $\chi^\alpha$.

Let $\chi^\alpha\in H_{1,0}$. Then $\langle{\rm i}\tilde{\cal D}^\alpha{}_\beta
\Phi^\beta,\chi^\alpha\rangle=\langle\Phi^\alpha,{\rm i}\tilde{\cal D}^\alpha
{}_\beta\chi^\beta\rangle$ for any $\Phi^\alpha\in H_{1,0}$. Thus, for any
$\chi^\alpha\in H_{1,0}$, the spinor field $\Omega^\alpha:={\rm i}\tilde{\cal D}
^\alpha{}_\beta\chi^\beta\in L_2$ is such that $\langle{\rm i}\tilde{\cal D}^\alpha
{}_\beta\Phi^\beta,\chi^\alpha\rangle=\langle\Phi^\alpha,\Omega^\alpha\rangle$ for
any $\Phi^\alpha\in H_{1,0}$. Hence, $H_{1,0}\subset{\rm Dom}(\tilde{\cal D}^*)$.

Conversely, let $\chi^\alpha\in{\rm Dom}(\tilde{\cal D}^*)$. Since $H_{1,0}
\subset L_2$ is dense, there exists a sequence $\{\chi^\alpha_i\}$, $i\in
\mathbb{N}$, in $H_{1,0}\cap{\rm Dom}(\tilde{\cal D}^*)$ such that $\chi^\alpha
_i\rightarrow\chi^\alpha$ in the $L_2$-norm. Then for any $\Phi^\alpha\in H_{1,0}$
$\langle\Phi^\alpha,{\rm i}\tilde{\cal D}^\alpha{}_\beta\chi_i^\beta\rangle=
\langle{\rm i}\tilde{\cal D}^\alpha{}_\beta\Phi^\beta,\chi^\alpha_i\rangle
\rightarrow\langle{\rm i}\tilde{\cal D}^\alpha{}_\beta\Phi^\beta,\chi^\alpha
\rangle$ when $i\rightarrow\infty$. By $\chi^\alpha\in{\rm Dom}(\tilde{\cal
D}^*)$ and the definition of ${\rm Dom}(\tilde{\cal D}^*)$ there exists a
spinor field $\Omega^\alpha\in L_2$ such that the limit on the right is
$\langle{\rm i}\tilde{\cal D}^\alpha{}_\beta\Phi^\beta,\chi^\alpha\rangle=\langle
\Phi^\alpha,\Omega^\alpha\rangle$. Hence, $\langle\Phi^\alpha,{\rm i}\tilde
{\cal D}^\alpha{}_\beta\chi_i^\beta-\Omega^\alpha\rangle\rightarrow0$ if
$i\rightarrow\infty$ for any $\Phi^\alpha\in H_{1,0}$. But since $H_{1,0}$ is
dense in $L_2$, this implies $\langle\Psi^\alpha,{\rm i}\tilde{\cal D}^\alpha
{}_\beta\chi_i^\beta-\Omega^\alpha\rangle\rightarrow0$ if $i\rightarrow\infty$
for any $\Psi^\alpha \in L_2$. Therefore, ${\rm i}\tilde{\cal D}^\alpha{}_\beta
\chi_i^\beta\rightarrow\Omega^\alpha\in L_2$ in the \emph{weak} topology of
$L_2$. Since every weakly convergent sequence is bounded, there exist positive
constants $C_1$ and $C_2$ such that $\Vert\chi^\alpha_i\Vert_{0,0}\leq C_1$ and
$\Vert\tilde{\cal D}^\alpha{}_\beta\chi_i^\beta\Vert_{0,0}\leq C_2$ holds. Hence
$\{\chi^\alpha_i\}$ is bounded in $H_{1,0}$. But every bounded sequence contains
a \emph{weakly} convergent subsequence, i.e. there is a subsequence $\{\chi
^\alpha_{i_k}\}$, $k\in\mathbb{N}$, which converges weakly to some $\chi^\alpha
_w\in H_{1,0}\subset L_2$. However, $\{\chi^\alpha_i\}$ was assumed to converge
\emph{strongly} to $\chi^\alpha\in{\rm Dom}(\tilde{\cal D}^*)\subset L_2$, the
strong and the weak limits must coincide: $\chi^\alpha=\chi^\alpha_w\in H_{1,0}$;
i.e. ${\rm Dom}(\tilde{\cal D}^*)\subset H_{1,0}$.
\end{proof}

\begin{theorem}
Under the conditions of Lemma \ref{l:A.2.4} $\tilde{\cal D}:H_{1,0}
\rightarrow L_2$ is a topological vector space isomorphism. \label{t:A.3.1}
\end{theorem}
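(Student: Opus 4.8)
The plan is to show that $\tilde{\cal D}:H_{1,0}\to L_2$ is a continuous linear bijection; the conclusion then follows from the open mapping theorem, since $H_{1,0}$ and $L_2$ are Banach spaces. Continuity is already established at the start of subsection \ref{sub-A.2} (equation (\ref{eq:A.2.1}) and the estimate preceding it). Injectivity is precisely Proposition \ref{p:A.3.1}: $\ker\tilde{\cal D}\subset H_{1,0}$ is empty. So the entire content of the proof is \emph{surjectivity}, i.e. ${\rm Im}\,\tilde{\cal D}=L_2$.

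For surjectivity I would argue as follows. By Proposition \ref{p:A.3.2} (valid under the hypothesis $\frac{1}{3}\Lambda W<1$ of Lemma \ref{l:A.2.4}), ${\rm Im}\,\tilde{\cal D}\subset L_2$ is a \emph{closed} subspace. Hence $L_2={\rm Im}\,\tilde{\cal D}\oplus({\rm Im}\,\tilde{\cal D})^\bot$, and it suffices to show $({\rm Im}\,\tilde{\cal D})^\bot=\{0\}$. Let $\chi^\alpha\in({\rm Im}\,\tilde{\cal D})^\bot$. Then $\langle\tilde{\cal D}^\alpha{}_\beta\Phi^\beta,\chi^\alpha\rangle=0$ for every $\Phi^\alpha\in H_{1,0}$; equivalently $\langle{\rm i}\tilde{\cal D}^\alpha{}_\beta\Phi^\beta,\chi^\alpha\rangle=0$ for all $\Phi^\alpha\in H_{1,0}$, which means $\chi^\alpha\in{\rm Dom}(({\rm i}\tilde{\cal D})^*)$ with $({\rm i}\tilde{\cal D})^*\chi^\alpha=0$. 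But by Proposition \ref{p:A.3.3} the operator ${\rm i}\tilde{\cal D}:H_{1,0}\to L_2$ is self-adjoint, so ${\rm Dom}(({\rm i}\tilde{\cal D})^*)=H_{1,0}$ and $({\rm i}\tilde{\cal D})^*\chi^\alpha={\rm i}\tilde{\cal D}^\alpha{}_\beta\chi^\beta$. Therefore $\chi^\alpha\in H_{1,0}$ and $\tilde{\cal D}^\alpha{}_\beta\chi^\beta=0$, i.e. $\chi^\alpha\in\ker\tilde{\cal D}\subset H_{1,0}$. By Proposition \ref{p:A.3.1} this forces $\chi^\alpha=0$. Hence $({\rm Im}\,\tilde{\cal D})^\bot=\{0\}$, and since ${\rm Im}\,\tilde{\cal D}$ is closed we get ${\rm Im}\,\tilde{\cal D}=L_2$.

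Having shown $\tilde{\cal D}$ is a continuous bijection between Banach spaces, the open mapping (bounded inverse) theorem yields that $\tilde{\cal D}^{-1}:L_2\to H_{1,0}$ is also continuous, so $\tilde{\cal D}$ is a topological vector space isomorphism. (One can also read off the explicit bound $\|\Phi^\alpha\|_{1,0}\le C'\|\tilde{\cal D}^\alpha{}_\beta\Phi^\beta\|_{0,0}$ directly from Corollary \ref{c:A.2.3}, once we know $(\ker\tilde{\cal D})^\bot=H_{1,0}$, which gives continuity of the inverse without invoking the open mapping theorem.) The main obstacle in this chain is not the final assembly — which is a short functional-analytic argument — but rather that it rests crucially on the three preparatory propositions, and in particular on the closedness of the range (Proposition \ref{p:A.3.2}) and the identification of the adjoint domain (Proposition \ref{p:A.3.3}); the closedness of the range in turn depends on Corollary \ref{c:A.2.3}, hence ultimately on the Hardy-type estimate of Lemma \ref{l:A.2.3} and the compactness Lemma \ref{l:A.1.1}. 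The hypothesis $\frac{1}{3}\Lambda W<1$ enters exactly at Lemma \ref{l:A.2.3}/Lemma \ref{l:A.2.4}, which is why it appears in the statement of the theorem.
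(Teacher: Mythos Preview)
Your proposal is correct and follows essentially the same route as the paper: continuity from (\ref{eq:A.2.1}), injectivity from Proposition \ref{p:A.3.1}, surjectivity by combining the closed range (Proposition \ref{p:A.3.2}) with self-adjointness (Proposition \ref{p:A.3.3}) to force $({\rm Im}\,\tilde{\cal D})^\bot\subset\ker\tilde{\cal D}=\{0\}$, and then the open mapping theorem. The only cosmetic difference is that the paper reaches the conclusion $\chi^\alpha\in\ker\tilde{\cal D}$ via an approximating sequence $\chi^\alpha_i\in H_{1,0}$, whereas you invoke the identification ${\rm Dom}(({\rm i}\tilde{\cal D})^*)=H_{1,0}$ from Proposition \ref{p:A.3.3} directly; your version is slightly cleaner but the substance is identical.
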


\begin{proof}
We saw that $\tilde{\cal D}$ is continuous, and by Proposition \ref{p:A.3.1}
it is injective. Thus we should show only that it is surjective, too, because,
by the open mapping theorem, any continuous bijection between two Banach
spaces is a topological vector space isomorphism.

Since by Proposition \ref{p:A.3.3} ${\rm i}\tilde{\cal D}$ is self-adjoint,
${\rm Im}\,\tilde{\cal D}={\rm Im}\,\tilde{\cal D}^*$ holds, and suppose, on
the contrary, that ${\rm Im}\,\tilde{\cal D}\not=L_2$. Since by Proposition
\ref{p:A.3.2} ${\rm Im}\,\tilde{\cal D}$ is closed, this implies that
$({\rm Im}\,\tilde{\cal D})^\perp$, the orthogonal complement of ${\rm Im}\,
\tilde{\cal D}$ in $L_2$, is not empty. Thus, let $\chi^\alpha\in({\rm Im}\,
\tilde{\cal D})^\perp$ be a non-zero vector. Since $H_{1,0}\subset L_2$ is
dense, there is a sequence $\{\chi^\alpha_i\}$, $i\in\mathbb{N}$, in $H_{1,0}
\cap({\rm Im}\,\tilde{\cal D})^\perp$ such that $\chi^\alpha_i\rightarrow\chi
^\alpha$ if $i\rightarrow\infty$. Then, by the self-adjointness of ${\rm i}
\tilde{\cal D}$, for any $\Phi^\alpha\in H_{1,0}$ we have that $0=\langle\chi
^\alpha_i,{\rm i}\tilde{\cal D}^\alpha{}_\beta\Phi^\beta\rangle=\langle{\rm i}
\tilde{\cal D}^\alpha{}_\beta\chi^\beta_i,\Phi^\alpha\rangle$. Since $H_{1,0}
\subset L_2$ is dense, this implies that $\chi^\alpha_i\in\ker\tilde{\cal D}
=\emptyset$. However, this yields $\chi^\alpha_i=0$ for any $i\in\mathbb{N}$,
and hence that $\chi^\alpha=0$, which is a contradiction.
\end{proof}

This completes the proof of the existence and uniqueness of the solution of
the renormalized Witten equation.

\bigskip
\noindent
The authors are grateful to one of the referees for the remark in footnote 1
and the reference \cite{Fr13}.
LBSz is grateful to J\"org Frauendiener for the discussions on the geometry
of asymptotically hyperboloidal hypersurfaces. PT gratefully acknowledges
the Wigner Research Centre for Physics for hospitality while this work was
in progress. The work was partially supported by NEFIM.


\end{document}